\documentclass[11pt]{article}

\usepackage[T1]{fontenc}
\usepackage[utf8]{inputenc}
\usepackage{lmodern}
\usepackage{microtype}
\usepackage{amsmath,amssymb,amsthm,mathtools}
\usepackage{enumitem}
\usepackage{needspace}
\usepackage{xcolor}
\usepackage{tikz}
\usetikzlibrary{arrows.meta}
\usepackage[a4paper,margin=30mm]{geometry}
\usepackage[unicode,hidelinks]{hyperref}
\usepackage[nameinlink,capitalize,noabbrev]{cleveref}

\newtheorem{theorem}{Theorem}[section]
\newtheorem{proposition}[theorem]{Proposition}
\newtheorem{lemma}[theorem]{Lemma}
\newtheorem{corollary}[theorem]{Corollary}
\theoremstyle{definition}
\newtheorem{definition}[theorem]{Definition}
\newtheorem{example}[theorem]{Example}
\theoremstyle{remark}
\newtheorem{remark}[theorem]{Remark}

\DeclareMathOperator{\Span}{span}
\DeclareMathOperator{\im}{im}
\DeclareMathOperator{\rank}{rank}
\DeclareMathOperator{\rt}{rt}
\newcommand{\C}{\mathbb C}
\newcommand{\Q}{\mathbb Q}
\newcommand{\one}{\mathbf 1}
\newcommand{\muRoots}{\boldsymbol\mu}
\newcommand{\calL}{\mathcal L}
\newcommand{\calM}{\mathcal M}
\newcommand{\calN}{\mathcal N}
\newcommand{\calZ}{\mathcal Z}
\newcommand{\set}[1]{\left\{#1\right\}}
\newcommand{\abs}[1]{\left|#1\right|}

\tikzset{
  oc state/.style={
    circle,
    draw,
    thick,
    minimum size=7mm,
    inner sep=0pt
  },
  oc cycle state/.style={oc state,fill=black!10},
  oc transition/.style={draw,thick,-{Stealth[length=2mm]}}
}

\title{The \v{C}ern\'y Conjecture for One-Cluster Automata\\
       via Annular Spectral Descent}
\author{Yinfeng Zhu\\[0.4em]\small Independent Researcher}
\date{Draft of 21 July 2026}

\hypersetup{
  pdftitle={The Černý Conjecture for One-Cluster Automata via Annular Spectral Descent},
  pdfauthor={Yinfeng Zhu},
  pdfsubject={Synchronizing one-cluster automata},
  pdfkeywords={Černý conjecture, synchronizing automata, one-cluster automata}
}

\begin{document}

\maketitle

\begin{abstract}
We prove the \v{C}ern\'y conjecture for synchronizing one-cluster
  automata.  More precisely, let a synchronizing automaton with state set
  $Q$, $\abs Q=n$, have a letter $a$ whose functional digraph has a unique
  cycle $C$ of length $m$, and let $\ell$ be the least nonnegative integer
  for which $a^\ell$ maps $Q$ onto $C$.
  Assume $\ell\ge1$.  For every nonempty proper subset $S\subset C$, we
  prove that there is a word $w$ of length at most $n$ such that
  $wa^\ell$ maps more than $\abs S$ states of $C$ into $S$.
This proves the positive-level part of a conjecture of Kisielewicz,
Kowalski, and Szyku\l a concerning relative extending words for one-cluster
automata.  The resulting reset word has length at most
\[
  (m-1)(n-1)+m\ell\le(n-1)^2.
\]
For every $n\ge4$, we construct a strongly connected binary example with
$m=2$, $\ell=n-2$, and reset threshold $3n-5$, so the parameter-dependent
bound $(m-1)(n-1)+m\ell$ is sharp.
The upper-bound proof uses finite-dimensional linear algebra; the sharpness
lower bounds are combinatorial.

\par\smallskip
\noindent\textbf{AI use statement.}
The proof was obtained through interaction with OpenAI Codex
(GPT-5.6 Sol, ultra mode) and verified by the author.
\end{abstract}

\noindent\textbf{Keywords.}
Synchronizing automaton; reset word; \v{C}ern\'y conjecture; one-cluster
automaton; Fitting decomposition; roots of unity.

\medskip
\noindent\textbf{MSC 2020.} 68Q45; 20M35.

\section{Introduction}\label{sec:introduction}

A complete deterministic finite automaton $A$ with state set $Q$ is
\emph{synchronizing} if some word maps all its states to one state.  Put
$n=\abs Q$.  If $A$ is synchronizing, the least possible length of such a
word is its reset threshold, denoted by $\rt(A)$.  The \v{C}ern\'y conjecture
asserts that every synchronizing automaton with $n$ states has reset threshold
at most
$(n-1)^2$; the classical examples of \v{C}ern\'y show that this number would
be best possible~\cite[Lemma~1]{Cerny1964}.  As of 14 January 2026,
Volkov's living list recorded the general conjecture as neither proved nor
disproved~\cite[Section~1.2]{Volkov2026}; see also
\cite[Section~3.1]{Volkov2022Survey}.

Throughout, words act on states on the right:
$q\mathbin{\cdot}(uv)=(q\mathbin{\cdot}u)\mathbin{\cdot}v$.  For a subset
$X\subseteq Q$ and a word $w$, we write $X\mathbin{\cdot}w$ for its image and
$Xw^{-1}=\{q\in Q:q\mathbin{\cdot}w\in X\}$ for its full preimage.  These
conventions are formalized in \cref{sec:preliminaries}.

An automaton with state set $Q$ is \emph{one-cluster} if it has a letter $a$
whose functional digraph has connected underlying undirected graph;
equivalently, that functional digraph has exactly one directed cycle.  We
call $a$ a \emph{one-cluster letter}, call its unique directed cycle the
\emph{$a$-cycle}, and denote the vertex set of that cycle by $C$.  The
number of cycle states is denoted by $m=\abs C$.  The
\emph{level with respect to $a$} is
\[
  \ell=\min\set{j\ge0:Q\mathbin{\cdot}a^j=C}.
\]
Circular automata are the special case $\ell=0$; equivalently, $C=Q$ and
$a$ acts on $Q$ as a cyclic permutation.  Pin proved the conjecture for
circular automata with a prime number of states~\cite[Theorem~2]{Pin1978},
and Dubuc proved it for all circular
automata~\cite[Proposition~4.6]{Dubuc1998}.
Synchronizing one-cluster automata admit quadratic reset bounds.  In
particular, Carpi and D'Alessandro proved
\[
  \rt(A)\le 2n^2-4n+1-2(n-1)\ln(n/2)
\]
for the full class
~\cite[Proposition~5 and Corollary~1]{CarpiDAlessandro2013}; see also the
earlier bound in \cite[Proposition~2]{BealPerrin2009}.  Steinberg proved
the \v{C}ern\'y bound when the unique cycle has prime length
~\cite[Theorem~8]{Steinberg2011PrimeCycle}.

Kisielewicz, Kowalski, and Szyku\l a conjectured the following stronger
relative extending-word property
~\cite[Conjecture~1]{KisielewiczKowalskiSzykula2016}.  Every nonempty proper
subset $S\subset C$ should admit a word $w$ of length at most $n$ for which
the relative preimage $S(wa^\ell)^{-1}\cap C$ has cardinality greater than
$\abs S$.  They observed that this statement implies the \v{C}ern\'y bound
for the whole synchronizing one-cluster class.  The purpose of this paper is
to prove their conjecture at every positive level.  It follows that
\[
  \rt(A)\le (m-1)(n-1)+m\ell\le(n-1)^2.
\]

The refined estimate is attained at positive level in every order $n\ge4$.
In \cref{sec:sharpness}, we construct strongly connected binary automata with
$m=2$, $\ell=n-2$, and reset threshold $3n-5$, exactly the right-hand side of
the refined estimate.  The estimate is not attained by every positive-level
family.  For $n>2$, the one-cluster family
$\mathcal D_n^{\prime\prime}$ of Ananichev, Gusev, and Volkov has $m=n-1$,
$\ell=1$, and
$\rt(\mathcal D_n^{\prime\prime})=(n-1)(n-2)$
~\cite[Figure~8 and Theorem~6]{AnanichevGusevVolkov2013}.  Our estimate gives
$(n-1)^2$ for this family, which is larger by $n-1$.

The statement is cycle-relative: it counts only preimages lying in $C$.
It also bounds the prefix $w$, rather than the whole word $wa^\ell$.  These
two features distinguish it from ordinary extension and from Volkov's
$1$-extensibility.  The precise definitions and the relation among the
terminologies are given in \cref{def:relative-extension}.

Steinberg's prime-cycle proof uses centered characteristic vectors and an
ascending-chain argument within his averaging framework
~\cite[equation~(1), Lemma~2, Proposition~5, Lemma~6, and
Theorem~8]{Steinberg2011PrimeCycle}; see also the averaging lemma in
\cite[Lemma~2]{Steinberg2011}.  The present proof retains
that relative preimage framework but replaces the prime-cycle cyclotomic
irreducibility step by a Fitting decomposition and a roots-of-unity
cancellation that also applies to composite cycle lengths.  The filtered
spaces below refine a construction of Dubuc; the precise specialization to
the circular case is recorded when those spaces are defined.

Unpublished workshop slides supplied by Costa outline joint work with
Steinberg for one-cluster automata having a single off-cycle state, that is,
$n=m+1$ and $\ell=1$~\cite{CostaSteinberg2025Workshop}.  Like our proof,
their argument filters transition rows by the number of letters different
from $a$.  The outline identifies two difficulties at higher level: the
nilpotent part of the $a$-transition can have longer chains, and the required
rows may no longer admit sufficiently short word representatives.  The
Fitting decomposition, the two generation bounds, and annular descent
overcome these difficulties.

The obstruction beyond level one is the subspace of complex row vectors
annihilated by a power of the transition operator of the one-cluster letter.
After spectral projectors are inserted into a word product, a term may enter
this nilpotent subspace and later return to the invertible Fitting summand.  A
simplex of all exponent tuples up to one total degree does not cancel all such
terms.  We instead omit a precisely determined lower simplex.  The remaining
total degrees form an annular interval; when its lower endpoint is positive,
the interval has exactly $m$ consecutive integers.  This width permits the
roots-of-unity recurrence, and the shifted lower endpoint removes its
uncancelled boundary coefficient.  The exact endpoints are defined in
\cref{sec:annular}.

The remainder of this paper is organized as follows.
\Cref{sec:preliminaries} fixes the automata-theoretic and matrix notation,
states the main results, and introduces a five-state binary automaton of level
two that serves as a running example.  \Cref{sec:fitting,sec:filtered,sec:annular}
develop the linear-algebraic machinery: the Fitting decomposition, short
actual-word generators for filtered spaces, and the annular descent identity.
\Cref{sec:short-extension-proof} converts that identity into a short relative
extending word by positive-coefficient induction.  \Cref{sec:singleton} gives
the singleton start, and \cref{sec:reset-proof} completes the reset accounting
and proves the main theorem.  \Cref{sec:remarks} records structural features of
the annular argument, while \cref{sec:sharpness} constructs, in every order at
least four, a family attaining the refined estimate.  The positive-level proof
is self-contained; only the circular boundary case invokes Dubuc's theorem.
The Costa--Steinberg slides are cited as an unpublished proof outline and are
not used as a proof dependency.

\section{Automata, matrices, and the main results}\label{sec:preliminaries}

This section defines the word action, transition matrices, and basic
one-cluster parameters, and states the two results to be proved.  The
counting identity below is the bridge between preimage growth and linear
algebra.

Let $A=\langle Q,\Sigma,\delta\rangle$ be a complete deterministic finite
automaton: $Q$ and $\Sigma$ are finite nonempty sets and
$\delta:Q\times\Sigma\to Q$ is defined on every pair.  The elements of $Q$
are the \emph{states}, the elements of $\Sigma$ are the \emph{input letters},
and $\Sigma^*$ is the free monoid of all finite words over $\Sigma$ under
concatenation.  Extend $\delta$ to
$Q\times\Sigma^*$ by
\[
  \delta(q,\varepsilon)=q,
  \qquad
  \delta(q,wg)=\delta(\delta(q,w),g)
  \quad(q\in Q,\ w\in\Sigma^*,\ g\in\Sigma).
\]
Write $q\mathbin{\cdot}w=\delta(q,w)$.  The empty word
is denoted by $\varepsilon$, and $\abs w$ denotes the length of $w$.  For
$S\subseteq Q$, define the image and the full preimage by
\[
  S\mathbin{\cdot}w=\set{q\mathbin{\cdot}w:q\in S},
  \qquad
  Sw^{-1}=\set{q\in Q:q\mathbin{\cdot}w\in S}.
\]
Here $w^{-1}$ is full-preimage notation; it does not assert that the
transformation induced by $w$ is invertible.  We have
$S\mathbin{\cdot}(uv)=(S\mathbin{\cdot}u)\mathbin{\cdot}v$ and
$S(uv)^{-1}=(Sv^{-1})u^{-1}$.

Fix an ordering of $Q$, and let $P_w$ be the matrix of the transformation
$q\mapsto q\mathbin{\cdot}w$ in the standard state basis:
\[
  (P_w)_{p,q}=\begin{cases}
  1,&p\mathbin{\cdot}w=q,\\
  0,&p\mathbin{\cdot}w\ne q.
  \end{cases}.
\]
Thus $P_{uv}=P_uP_v$, and every row of $P_w$ contains exactly one entry
equal to $1$.  For $S\subseteq Q$, let $[S]$ denote its
\emph{characteristic row vector}, and let $\one=[Q]^{\mathsf T}$ be the
all-one column.  Then
\[
  P_w[S]^{\mathsf T}=[Sw^{-1}]^{\mathsf T},
  \qquad P_w\one=\one,
\]
and consequently, for $D,S\subseteq Q$,
\begin{equation}\label{eq:counting}
  [D]P_w[S]^{\mathsf T}=\abs{D\cap Sw^{-1}}.
\end{equation}
A word $w$ is a \emph{reset word} (also called a \emph{synchronizing word}) if
$\abs{Q\mathbin{\cdot}w}=1$.  The automaton is \emph{synchronizing} if it
has a reset word, and in that case its \emph{reset threshold} $\rt(A)$ is
the minimum length of a reset word.  If
$Q\mathbin{\cdot}w=\{s\}$, then $w$ \emph{resets $A$ to $s$}, and $s$ is the
\emph{reset state} of $w$.

The \emph{support} of $A$ is the directed multigraph with vertex set $Q$ and
an edge $q\to q\mathbin{\cdot}g$ for every $q\in Q$ and $g\in\Sigma$;
loops are allowed, and parallel edges carrying different input letters are
retained.  The automaton is \emph{strongly connected} if its support is strongly
connected, equivalently, if for every ordered pair $p,q\in Q$ there is a
word $w$ with $p\mathbin{\cdot}w=q$.

\begin{definition}\label{def:one-cluster-level}
For $a\in\Sigma$, its \emph{functional digraph}, also called the
\emph{$a$-skeleton}~\cite[Section~1 and Figure~1]{Steinberg2011PrimeCycle},
has vertex set $Q$ and the single outgoing edge
$q\longrightarrow q\mathbin{\cdot}a$ at each state.  The weakly connected
components, meaning the connected components of its underlying undirected
graph, are its \emph{$a$-clusters}~\cite[Section~3]{BealPerrin2009}.  We call $a$ a
\emph{one-cluster letter} if there is exactly one $a$-cluster, and then call
$A$ \emph{one-cluster with respect to $a$}.  The functional digraph has a
unique directed cycle, called the \emph{$a$-cycle}; let $C\subseteq Q$ be
its vertex set.  The elements of $C$ are \emph{cycle states}, and those of
$Q\setminus C$ are \emph{off-cycle states}.

For $q\in Q$, define its level relative to $a$ by
\[
  \ell_a(q)=\min\set{j\ge0:q\mathbin{\cdot}a^j\in C}.
\]
The \emph{level of $A$ with respect to $a$} is
\begin{equation}\label{eq:equivalent-level-definitions}
  \ell=\max_{q\in Q}\ell_a(q)
  =\min\set{j\ge0:Q\mathbin{\cdot}a^j\subseteq C}
  =\min\set{j\ge0:Q\mathbin{\cdot}a^j=C}.
\end{equation}
Finally, write
\[
  n=\abs Q,
  \qquad m=\abs C,
  \qquad t=n-m.
\]
\end{definition}

The last two minima in
\eqref{eq:equivalent-level-definitions} agree because $a$ restricts to a
permutation of $C$, so $C\subseteq Q\mathbin{\cdot}a^j$ for every $j$.
The one-cluster condition is also equivalent to Volkov's formulation that
some state can be reached from every state along a path labelled only by
$a$~\cite[Section~3.4]{Volkov2022Survey}.

If $\ell=0$, then $Q=C$ and $a$ acts on $Q$ as a cyclic permutation; in
this case the automaton is called \emph{circular} (with respect to $a$) in
the one-cluster literature~\cite[Section~1]{Steinberg2011PrimeCycle}; Volkov
uses the synonymous term \emph{cyclic}
~\cite[Section~3.4]{Volkov2022Survey}.
If $\ell\ge1$, minimality gives a state $q$ such that
$q\mathbin{\cdot}a^{\ell-1}\notin C$.  The states
$q,q\mathbin{\cdot}a,\ldots,q\mathbin{\cdot}a^{\ell-1}$ are distinct and
off-cycle: a repetition before entry into $C$ would create a directed cycle
different from the $a$-cycle.
Consequently,
\begin{equation}\label{eq:level-le-t}
  \ell\le t.
\end{equation}
When $\ell=0$, the same inequality holds because $t\ge0$.  Thus
\eqref{eq:level-le-t} is valid at every level.

We now formalize the relative-extension terminology used in the statement.
This is the cycle-restricted version of the standard method of extension;
the restriction to $C$ is essential at positive level.

\begin{definition}\label{def:relative-extension}
Let $\varnothing\ne S\subsetneq C$.  A word $u\in\Sigma^*$ \emph{extends
$S$ relative to $C$} if
\[
  \abs{Su^{-1}\cap C}>\abs S.
\]
Equivalently, $u$ is $S$-augmenting in the terminology of B\'eal and
Perrin~\cite[Section~3]{BealPerrin2009}.  If $u=wa^\ell$, we refer
descriptively to $w$ as the prefix before
the prescribed suffix $a^\ell$; the word that extends $S$ is the entire word
$u$, not merely $w$.
\end{definition}

For comparison, if $\varnothing\ne S\subsetneq C$, then a word
$u\in\Sigma^*$ \emph{extends $S$ in the ordinary sense} if
$\abs{Su^{-1}}>\abs S$.  Relative extension implies ordinary extension
because $Su^{-1}\cap C\subseteq Su^{-1}$, but the converse need not hold
when $C\ne Q$.  When $C=Q$, the two notions coincide.  At positive level,
\cref{thm:short-extension} is not a $1$-extensibility statement in Volkov's
sense: it imposes the stronger cycle-relative inequality, while its bound
applies to the prefix $w$ and the whole extending word may have length
$n+\ell$.  Moreover, Volkov's $\alpha$-extensibility quantifies over proper
nonsingleton subsets of $Q$, whereas the relative statement here also
includes singleton subsets of $C$~\cite[Section~3.4]{Volkov2022Survey}.

The restriction of the quantified subsets to $C$ is substantive.  If
ordinary extension is required for every proper nonsingleton subset of $Q$,
Berlinkov's one-cluster family $B_n$ has an obstruction: the shortest word
extending $\{0,n-1\}$ is $a^{n-2}ba^{n-2}$, of length $2n-3$
~\cite[Theorem~1]{Berlinkov2011CarpiDAlessandro}; see also
\cite[Section~3.4 and Figure~21]{Volkov2022Survey}.  For the distinguished
one-cluster letter in this family, $C=\{0,1,\ldots,n-2\}$, so the obstructing
set is not contained in $C$.  Thus the family rules out every uniform
coefficient below $2$ for ordinary extensibility, but it does not conflict
with the cycle-relative theorem proved here.

The short relative-extension theorem, \cref{thm:short-extension}, is the
local engine: it increases the relative-preimage cardinality of a proper
subset of the cycle using a word of the prescribed form and gives a uniform
bound on the prefix.  Iterating that engine, after an economical first
extension from a singleton, gives the reset bound in
\cref{thm:reset-bound}.

\begin{theorem}[Short relative extension]\label{thm:short-extension}
Let $A=\langle Q,\Sigma,\delta\rangle$ be a synchronizing one-cluster
automaton with respect to $a\in\Sigma$.  Let $C$ be its $a$-cycle, put
$n=\abs Q$ and $m=\abs C$, and let $\ell$ be its level with respect to $a$.
Assume $m\ge2$ and $\ell\ge1$.  For every nonempty proper subset
$S\subset C$, there exists a word $w\in\Sigma^*$ such that
\begin{equation}\label{eq:kks}
  \abs w\le n,
  \qquad
  \abs{S(wa^\ell)^{-1}\cap C}>\abs S.
\end{equation}
\end{theorem}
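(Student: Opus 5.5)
We argue by contradiction, in the linear-algebraic form used by Steinberg. Put $k=\abs S$ and let $v=[S]-\tfrac{k}{m}[C]$ be the centered characteristic vector of $S$ inside $C$. By \eqref{eq:counting}, for every word $w$,
\[
  [C]P_{wa^\ell}[S]^{\mathsf T}=\abs{S(wa^\ell)^{-1}\cap C}.
\]
Suppose no $w$ with $\abs w\le n$ satisfies \eqref{eq:kks}. Then $[C]P_{wa^\ell}[S]^{\mathsf T}\le k$ for all such $w$.

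Next we average. Since the automaton is synchronizing, the relative preimage sizes cannot all stay at most $k$ along all words. One way to see this: averaging $[C]P_u$ over suitable word families (for instance products of powers of $a$) produces a row vector whose pairing with $[S]^{\mathsf T}$ equals exactly $k$. An averaging lemma in the style of Steinberg then turns the uniform inequality into an equality. Hence $[C]P_{wa^\ell}v^{\mathsf T}=0$, that is, $x P_{a^\ell}v^{\mathsf T}=0$, for every $x$ in the span $W_{\le n}$ of the rows $[C]P_w$ with $\abs w\le n$. Synchronization forces some longer word to give a nonzero pairing: take a reset word followed by a letter landing in $S$, or its complement, so that the relative preimage becomes all of $C$ or empty. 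It therefore suffices to show that $W_{\le n}P_{a^\ell}$ already equals the span of $[C]P_uP_{a^\ell}$ over all words $u$.

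To prove this generation statement, I would filter words by the number of letters different from $a$. Let $V_j$ be the span of $[C]P_u$ over words with at most $j$ non-$a$ letters. Then decompose row space via the Fitting decomposition of $P_a$, as $\C^Q=N\oplus R$. Here $N$ is the nilpotent part, killed by $P_a^{\ell}$, and $R$ is the invertible part, on which $P_a$ acts as the cyclic permutation of $C$ with eigenvalues the $m$-th roots of unity. On $R$, powers $a^i$ with $i$ ranging over any $m$ consecutive exponents span the same space as all powers, by the roots-of-unity recurrence (Cayley--Hamilton for $x^m-1$). On $N$, the factor $P_{a^\ell}$ on the right kills anything that remains nilpotent. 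So in a product $a^{i_0}g_1a^{i_1}\cdots g_ja^{i_j}$, only exponent tuples in a window of total degree of width $m$ are needed.

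The filtration must increase strictly at each step, or else stabilize, which means the full span has been reached. Each step costs one new non-$a$ letter plus at most $m-1$ powers of $a$. Bookkeeping then gives length at most $n$ for a set of generators of the stabilized space modulo $\ker P_{a^\ell}$. The dimension available is about $m$ on $R$ plus at most $t=n-m$ on $N$.

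The main obstacle is the interaction between the two summands. A term may pass into $N$ after a letter $g_i$ and return to $R$ after a later letter, so the exponent reductions on $R$ and on $N$ cannot be done independently. The simplex of all exponent tuples up to a given total degree does not cancel these mixed terms. I would try to replace it by an annulus: keep total degrees in an interval $[L,L+m-1]$, with $L$ chosen so that every boundary term of the roots-of-unity recurrence either lands in $N$ and is killed by $a^\ell$, or is already in the lower filtration level. Then I would check that the resulting words have length at most $n$, using $\ell\le t$ from \eqref{eq:level-le-t} and $m\ge2$.
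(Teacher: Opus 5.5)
\textbf{There is a genuine gap: the averaging step is asserted, not proved, and it carries all the difficulty.}

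Your reduction to a ``generation statement'' is correct, but it is the easy half. Let $W_j$ be the span of the rows $[C]P_u$ with $\abs u\le j$. Then $W_{j+1}=W_0+\sum_{g\in\Sigma}W_jP_g$, so the chain stays constant from the first index at which it stalls. Since $\dim W_0=1$, it is stable by $j=n-1$. No Fitting decomposition or annulus is needed for that. Everything therefore rests on your sentence ``an averaging lemma in the style of Steinberg then turns the uniform inequality into an equality''.

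An averaging lemma turns the inequalities $\mu_S(wa^\ell)\le\abs S$ into equalities only for the terms of a positive combination whose total pairing with $[S]^{\mathsf T}$ is known in advance. The one combination you exhibit is $\sum_{j=0}^{m-1}[C]P_wT^{\ell+j}=m[C]$. It involves the prefixes $wa^j$, so under your hypothesis it is usable only when $\abs w+m-1\le n$, that is, $\abs w\le t+1$. This certifies orthogonality only on $W_{t+1}T^\ell$. In general that is a proper subspace of the span of all $[C]P_uT^\ell$: in the running example $W_{t+1}T^\ell$ is two-dimensional, while the full span is all of $R$. For longer prefixes you have no identity at all, and producing one is the whole problem.

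\textbf{How the paper supplies the missing identities.} It builds them inductively, and the annulus only makes sense in that role, not as a spanning device.
\begin{enumerate}
\item It replaces $[C]P_w$ by the difference rows $u_f=[C](P_f-I)$. It then proves, by induction on $r$ and simultaneously for all letters $f$, that $\calM_r(f)=\calL_r(f)T^\ell$ is orthogonal to $[S]^{\mathsf T}$.
\item The base case is your averaging identity applied to $u_f$; there the prefixes $fa^j$ have length at most $m$.
\item In the inductive step, orthogonality at index $r-1$ for every letter $g$, together with deletion of the leftmost control, gives $\mu_S(x)=\abs S$ for each word $x$ ending in $a^\ell$ with at most $r$ controls.
\item Hence each annular term satisfies $u_fP_x[S]^{\mathsf T}=\mu_S(fx)-\mu_S(x)\le0$. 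The hypothesis applies because the prefix of $fx$ has length $1+r+\tau(\mathbf h)\le m+\kappa_r\le n$.
\item The annular descent $\Omega_r\in\calM_{r-1}(f)$ says that the sum of these terms, all with coefficient $+1$, lies in a space already known to be orthogonal. So every term vanishes.
\item The refined generator bound with $\kappa_r=\min(t,\ell r)$ shows that every generator of $\calM_r$ becomes an annular term after adding a multiple of $m$ to its exponents.
\item A stabilization argument then removes the bound on the number of controls.
\end{enumerate}
Proving the descent itself requires the spectral expansion with $E_N$ and $E_\lambda$, and termwise reductions for repeated eigenvalues and for nilpotent dependencies. It also needs the dimension charge $z+s\le\kappa_r$ and the vanishing of $[x^d]\Phi_Y(x)$ for $d>\deg\Phi_Y$.

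Your plan contains none of this induction. You aim the annulus at showing that ``only exponent tuples in a window of width $m$ are needed'' to span, which is not the property required. As written, the proposal does not prove the theorem.
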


Thus $wa^\ell$ is a relative extending word in the sense of
\cref{def:relative-extension}, and the bound in \eqref{eq:kks} applies only
to the prefix $w$.  In particular, the whole relative extending word has
length at most $n+\ell$.

\begin{theorem}[Reset threshold]\label{thm:reset-bound}
Let $A=\langle Q,\Sigma,\delta\rangle$ be a synchronizing one-cluster
automaton with respect to $a\in\Sigma$.  Let $C$ be its $a$-cycle, put
$n=\abs Q$ and $m=\abs C$, and let $\ell$ be its level with respect to $a$.
Then
\begin{equation}\label{eq:reset-bound}
  \rt(A)\le (m-1)(n-1)+m\ell\le(n-1)^2.
\end{equation}
Consequently, every synchronizing one-cluster automaton satisfies the
\v{C}ern\'y conjecture.
\end{theorem}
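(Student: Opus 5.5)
The statement to prove is Theorem~\ref{thm:reset-bound}, assuming Theorem~\ref{thm:short-extension}. I would treat it by a case split on the level. If $\ell=0$, the automaton is circular with $m=n$, and the claimed bound reads $(n-1)^2$. This is Dubuc's theorem~\cite[Proposition~4.6]{Dubuc1998}, which I will invoke directly. If $m=1$, then $a^\ell$ maps $Q$ onto a single state, so it is already a reset word of length $\ell\le t=n-1$ by \eqref{eq:level-le-t}. This is within $(m-1)(n-1)+m\ell=\ell$. So the substantive case is $m\ge2$ and $\ell\ge1$, where Theorem~\ref{thm:short-extension} applies.

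In that case I run the standard extension scheme backwards, relative to $C$. The target is a chain of subsets $S_1\subsetneq$-sized $S_2,\dots,S_m=C$ of $C$, with $\abs{S_k}\ge k$, together with words. Let $s\in C$ be a state. I want words $u_1,\dots,u_{m-1}$ such that, setting $S_{k+1}=S_k(u_k)^{-1}\cap C$, the sets grow strictly. Each $u_k=w_ka^\ell$ with $\abs{w_k}\le n$ comes from Theorem~\ref{thm:short-extension}.

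The naive accounting gives length $(m-1)(n+\ell)$ plus a final $a^\ell$. I do not want to pay $\ell$ for every step, so I will reorganize the concatenation. The relevant composite word is $v=a^\ell u_{m-1}\cdots u_1$. Indeed $Q\cdot a^\ell=C$, and $C$ is mapped through the $u$'s into $\{s\}$, because $C=S_m$ and each $u_k$ maps $S_{k+1}\cap C$ into $S_k$. One must check that intermediate images stay in $C$. They do, since each $u_k$ ends with $a^\ell$ and $Q\cdot a^\ell=C$. This gives length $\ell+(m-1)(n+\ell)$, which is still too large compared to $(m-1)(n-1)+m\ell$: the excess is $m-1$ per step.

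So the key points are twofold. First, the singleton start (the paper's \cref{sec:singleton}) should give a first extension that is cheaper. Second, each subsequent step must cost only $n-1+\ell$, not $n+\ell$. For the second point, I expect one letter to be saved by absorbing it into the neighbouring $a^\ell$. If $w_k$ ends with $a$, then $w_ka^\ell$ has a redundancy, since $a$ acts as a permutation on $C$. Alternatively, one can choose $S$ up to rotation by $a$: replacing $S$ by $S\cdot a^{-j}$ costs nothing in cardinality, because $a$ permutes $C$. This should allow a word of length $\le n-1$ after cyclic normalization.

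I expect that bookkeeping (the savings per step, and a cheap singleton step giving exactly $(m-1)(n-1)+m\ell$) to be the main obstacle. My first attempt would be to show that when $\abs{w}=n$ is needed, its first letter may be taken to be $a$ and absorbed into the preceding suffix $a^\ell$ via rotation of $S$. The final inequality is routine: since $\ell\le n-m$,
\[
(m-1)(n-1)+m\ell\le(m-1)(n-1)+m(n-m)\le(n-1)^2 .
\]
The last step holds because the difference $(n-1)^2-(m-1)(n-1)-m(n-m)=(n-m)(n-1)-m(n-m)=(n-m)(n-m-1)$ is nonnegative.
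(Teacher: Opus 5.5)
Your boundary cases, the backward chain $S_{k+1}=S_ku_k^{-1}\cap C$ with the final prefix $a^\ell$, and the closing inequality $(n-1)^2-(m-1)(n-1)-m\ell\ge t(t-1)\ge0$ are all fine. There is, however, a genuine gap: the length accounting in the main case $m\ge2$, $\ell\ge1$ is never closed, and the mechanism you propose for closing it does not work.

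The naive count $\ell+(m-1)(n+\ell)$ exceeds the target by $m-1$ in total, that is, one letter per step (not $m-1$ per step). This excess matters. For $t=\ell=1$ the naive count is $m^2+m-1>(n-1)^2$, so without a saving you do not even reach the \v{C}ern\'y bound.

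Your per-step saving has no support. \Cref{thm:short-extension} gives only $\abs w\le n$. A leading $a$ in $w$ cannot be absorbed, because it is useless: since $a$ permutes $C$, $\abs{S(aw'a^\ell)^{-1}\cap C}=\abs{S(w'a^\ell)^{-1}\cap C}$, so a shortest admissible prefix never begins with $a$. Passing to a rotation $S(a^j)^{-1}\cap C$ gives nothing for free either, since an extension $wa^\ell$ of the rotation yields an extension of $S$ only as $wa^{\ell+j}$. Nothing in your argument produces $\abs w\le n-1$.

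The missing idea is that the whole saving of $m-1$ letters is made once, at the singleton stage, and no per-step saving is needed. \Cref{prop:singleton-start} supplies \emph{some} $q\in C$ and a word $w_0$ with $\abs{w_0}\le t+1+\ell=(n+\ell)-(m-1)$ and $\abs{\{q\}w_0^{-1}\cap C}\ge2$. Because $q$ is not arbitrary, you cannot fix the starting state $s$ in advance. Starting the chain at $S_0=\{q\}w_0^{-1}\cap C$ leaves at most $m-2$ applications of \cref{thm:short-extension}, each costing at most $n+\ell$. The reset word $a^\ell x_k\cdots x_1w_0$ then has length at most $\ell+(m-2)(n+\ell)+(t+1+\ell)=(m-1)(n-1)+m\ell$, exactly the target.

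This singleton bound does not follow from \cref{thm:short-extension}, which for a singleton only gives cost $n+\ell$. The paper proves it separately:
\begin{enumerate}
\item The columns $z_q=T^\ell\gamma_{\{q\}}$ span an $(m-1)$-dimensional subspace of the $(n-1)$-dimensional hyperplane $H_C=\ker[C]$.
\item The chain of spans of $P_uz_q$ with $\abs u\le j$ grows strictly while it stays inside $H_C$. It must leave $H_C$ because the automaton is synchronizing, so it does so by $j\le t+1$.
\item For a word $u$ with some $[C]P_uz_q\ne0$, the real numbers $[C]P_uz_q$ sum to $0$, so one of them is positive. By \eqref{eq:gamma-count} this is exactly the required growth for $w_0=ua^\ell$.
\end{enumerate}
Your proposal mentions the singleton section but never states or proves this bound. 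In its place it relies on a per-step saving that you cannot establish.
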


The first inequality in \eqref{eq:reset-bound} is attained at positive level
in every order $n\ge4$; the examples and their exact reset thresholds are
given in \cref{sec:sharpness}.

The proof of \cref{thm:reset-bound} separates two boundary cases from the
positive-level argument.  If $m=1$, the power $a^\ell$ is a reset word.  If
$\ell=0$, the required estimate is Dubuc's circular-automaton theorem
~\cite[Proposition~4.6]{Dubuc1998}.  The formal proof in
\cref{sec:reset-proof} records both reductions.

\begin{example}[Running example]\label{ex:running}
Let $Q=\{0,1,2,3,4\}$, let $\Sigma=\{a,b\}$, and define the two letters by
\[
\begin{array}{c|ccccc}
q                    &0&1&2&3&4\\ \hline
q\mathbin{\cdot}a    &1&2&0&0&3\\
q\mathbin{\cdot}b    &0&3&1&4&1
\end{array}.
\]
The two letter actions are displayed in separate functional digraphs below.

\begin{center}
\begin{minipage}[t]{0.48\linewidth}
\centering
\vspace{0pt}
\begin{tikzpicture}
  \node[oc cycle state] (a0) at (0,0)          {$0$};
  \node[oc cycle state] (a1) at (1.45,0)       {$1$};
  \node[oc cycle state] (a2) at (0.725,-1.15)  {$2$};
  \node[oc state]       (a3) at (-1.45,0)      {$3$};
  \node[oc state]       (a4) at (-2.90,0)      {$4$};

  \draw[oc transition] (a0) -- (a1);
  \draw[oc transition] (a1) -- (a2);
  \draw[oc transition] (a2) -- (a0);
  \draw[oc transition] (a3) -- (a0);
  \draw[oc transition] (a4) -- (a3);

  \node[font=\small] at (0.725,-0.38) {$C$};
  \node[font=\scriptsize] at (-1.45,0.55) {$\ell_a(3)=1$};
  \node[font=\scriptsize] at (-2.90,0.55) {$\ell_a(4)=2$};
\end{tikzpicture}

\smallskip
\small\textbf{Functional digraph of $a$.}
The shaded states form the $a$-cycle $C$.
\end{minipage}
\hfill
\begin{minipage}[t]{0.48\linewidth}
\centering
\vspace{0pt}
\begin{tikzpicture}
  \node[oc state] (b2) at (-1.45,0)      {$2$};
  \node[oc state] (b1) at (0,0)          {$1$};
  \node[oc state] (b3) at (1.45,0)       {$3$};
  \node[oc state] (b4) at (0.725,-1.15)  {$4$};
  \node[oc state] (b0) at (-1.45,-1.25)  {$0$};

  \draw[oc transition] (b2) -- (b1);
  \draw[oc transition] (b1) -- (b3);
  \draw[oc transition] (b3) -- (b4);
  \draw[oc transition] (b4) -- (b1);
  \draw[oc transition] (b0) edge[loop below,looseness=8] (b0);
\end{tikzpicture}

\smallskip
\small\textbf{Functional digraph of $b$.}
The state $0$ is fixed, and $1,3,4$ form a directed cycle.
\end{minipage}
\end{center}

The unique $a$-cycle is $C=\{0,1,2\}$, so
\[
  n=5,\qquad m=3,\qquad t=2,\qquad \ell=2.
\]
The automaton is strongly connected, meaning that for every ordered pair
$p,q\in Q$ some word maps $p$ to $q$: every state reaches $0$ by a power
of $a$, whereas $0$ reaches $1,2,3,4$ by the words $a,a^2,ab,abb$,
  respectively.  We shall return to this example after each main step.  In
particular, at the end we will exhibit a reset word of length
\[
  14=(m-1)(n-1)+m\ell.
\]
All calculations in this running example are illustrative and are not used
in the proof.
\end{example}

\section{Fitting and spectral decomposition}\label{sec:fitting}

Before filtering words by their controls, we isolate the structure forced by
the distinguished letter $a$ alone.  The Fitting decomposition produces two
complementary invariant subspaces: the restriction of the distinguished
transition is nilpotent on one and invertible of finite order on the other.
The associated polynomial projectors will be used both in filtered
generation and in the spectral-projector expansion.

This decomposition is the positive-level analogue of the polynomial
kernel--image splitting used by Dubuc for circular automata
~\cite[Lemma~3.1]{Dubuc1998}.  In the circular case only factors of
$X^m-1$ occur.  At positive level the identity
$T^\ell(T^m-I)=0$ introduces in addition the zero-primary, nilpotent
component isolated below.

In the rest of the proof, fix a one-cluster automaton
$A=\langle Q,\Sigma,\delta\rangle$ with respect to $a\in\Sigma$, and retain
the notation $C,n,m,t,\ell$ from \cref{def:one-cluster-level}.  Let
\[
  V=\C^{1\times n},
  \qquad T=P_a.
\]
We regard $T$ and all transition matrices as operators acting on $V$ by
right multiplication.
For $q\in Q$, let $e_q\in V$ denote the standard basis row whose $q$th
coordinate is $1$ and whose other coordinates are $0$.
Unless explicitly stated otherwise, every vector space, linear span, and
dimension in the paper is taken over $\C$.
Since every state is on $C$ after $\ell$ applications of $a$, and $a^m$
is the identity on $C$, row by row we have
\begin{equation}\label{eq:polynomial-identity}
  T^{\ell+m}=T^\ell.
\end{equation}

\Needspace{7\baselineskip}
Lemma~\ref{lem:fitting} makes this separation precise and supplies the dimension
data used in the later spectral and length estimates.

\begin{lemma}[Fitting decomposition]\label{lem:fitting}
Set
\[
  N=\set{v\in V:vT^\ell=0},
  \qquad R=VT^\ell.
\]
Then
\begin{equation}\label{eq:fitting}
  V=N\oplus R,
  \qquad \dim N=t,
  \qquad \dim R=m.
\end{equation}
Moreover, if $J=T|_N$, then $J^\ell=0$.  Relative to the basis
$(e_c)_{c\in C}$, the restriction $T|_R$ is the permutation operator induced
by the cyclic permutation $c\mapsto c\mathbin{\cdot}a$.  In particular,
$(T|_R)^m=I_R$.
\end{lemma}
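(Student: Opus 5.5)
The plan is to compute everything explicitly from the action of $T$ on the standard basis. Since $T=P_a$ acts by right multiplication, $e_qT=e_{q\cdot a}$, and so $e_qT^j=e_{q\cdot a^j}$ for every $j\ge0$. The proof has three steps: identify $R$, exhibit a complement inside $N$, and then count dimensions.

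\textbf{The image $R$.} First, $R=VT^\ell=\Span\set{e_{q\cdot a^\ell}:q\in Q}$. By \eqref{eq:equivalent-level-definitions}, $Q\mathbin{\cdot}a^\ell=C$, so $R=\Span\set{e_c:c\in C}$ and $\dim R=m$. This space is $T$-invariant, and on its basis $T$ acts as $e_c\mapsto e_{c\cdot a}$. That is the permutation operator of the restriction of $a$ to $C$, which is the cyclic permutation along the $a$-cycle. Hence $(T|_R)^m=I_R$. In particular $T|_R$ is invertible, so $T^\ell$ restricted to $R$ is injective and $N\cap R=0$.

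\textbf{A complement inside $N$.} Next I exhibit $t$ linearly independent vectors in $N$. For each off-cycle state $q\notin C$, put
\[
  f_q=e_q-e_{c_q},
\]
where $c_q\in C$ is the unique cycle state with $c_q\cdot a^\ell=q\cdot a^\ell$. Such a state exists and is unique because $a^\ell$ permutes $C$ and $q\cdot a^\ell\in C$. Then
\[
  f_qT^\ell=e_{q\cdot a^\ell}-e_{c_q\cdot a^\ell}=0,
\]
so $f_q\in N$. The vectors $f_q$ are linearly independent, since their coordinates at the off-cycle states form an identity pattern. This gives $\dim N\ge t$.

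\textbf{Dimension count.} Combining $N\cap R=0$, $\dim R=m$ and $\dim N\ge t$ gives $\dim N+\dim R\le n=t+m$. Hence $\dim N=t$ and $V=N\oplus R$. Alternatively, rank–nullity gives $\dim N=n-\rank T^\ell=n-m$ directly.

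\textbf{Nilpotency of $J$.} The space $N$ is $T$-invariant because $T$ commutes with $T^\ell$. For $v\in N$ we have $vJ^\ell=vT^\ell=0$ by definition, so $J^\ell=0$.

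No step is a serious obstacle. The only point needing care is the direction of the action: right multiplication by $P_a$ sends $e_q$ to $e_{q\cdot a}$, not to the preimage. Keeping this straight is what makes $R$ spanned by the cycle basis vectors and $T|_R$ the cyclic permutation $c\mapsto c\cdot a$.
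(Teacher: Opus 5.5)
Your proof is correct and follows the paper's argument. You identify $R=\Span\{e_c:c\in C\}$ from $Q\mathbin{\cdot}a^\ell=C$, get $N\cap R=0$ from the invertibility of $T|_R$, and conclude by a dimension count; the only variation is that you get $\dim N\ge t$ from the explicit vectors $e_q-e_{c_q}$ rather than from rank--nullity, which also gives an explicit basis of $N$ (in the running example it yields exactly $\xi_1,\xi_2$).
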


\begin{proof}
Every row of $T^\ell$ is a standard basis row $e_c$ with $c\in C$, and
every such row occurs because $C$ is invariant and $a$ permutes it.  Thus
\[
  R=\Span_{\C}\set{e_c:c\in C},
  \qquad \dim R=m,
\]
and $T$ restricts to the cyclic permutation of those basis rows.  In
particular, every power of $T|_R$, including $T^\ell|_R$, is invertible.
Hence $N\cap R=0$.  The image of $T^\ell$ has dimension $m$, so
rank--nullity gives $\dim N=n-m=t$.  These dimensions and $N\cap R=0$ give
$V=N\oplus R$.  The definition of $N$ gives $J^\ell=0$.  Since $T|_R$ is
the permutation operator of an $m$-cycle, its $m$th power is $I_R$.
\end{proof}

We call $N$ the \emph{nilpotent Fitting summand} and $R$ the
\emph{invertible Fitting summand}.  These names refer only to the operator
identities $T^\ell|_N=0$ and $(T|_R)^m=I_R$.  In particular, they do not
refer to a probabilistic classification of states: the proof gives
$R=\Span_{\C}\{e_c:c\in C\}$, whereas $N$ need not be spanned by standard
basis rows.  Thus $R$ is concretely the cycle-coordinate row space; it is
not a set of ``recurrent states'' in Markov-chain terminology.

Steinberg's prime-cycle proof works over $\Q$ and uses the irreducibility of
the prime cyclotomic polynomial to keep all nontrivial eigenspaces of the
cyclic permutation together
~\cite[Lemma~4 and Theorem~8]{Steinberg2011PrimeCycle}.  We instead extend
scalars to $\C$ and split the invertible summand into its individual
eigenspaces.  This refinement is available for every $m$, including
composite $m$, and the nilpotent summand $N$ remains present when $\ell>0$.

Let $\muRoots_m$ denote the set of complex $m$th roots of unity.  For
$\lambda\in\muRoots_m$, define the right eigenspace
\[
  R_\lambda=\set{v\in R:vT=\lambda v}.
\]
Fix $c_0\in C$ and put
\[
  r_\lambda=\sum_{j=0}^{m-1}\lambda^{-j}
                    e_{c_0\mathbin{\cdot}a^j}.
\]
Then $r_\lambda T=\lambda r_\lambda$.  The coefficient matrix of the rows
$(r_\lambda)_{\lambda\in\muRoots_m}$ is a Vandermonde matrix on the $m$
distinct roots of $X^m-1$, and is therefore nonsingular.  Consequently,
\[
  R=\bigoplus_{\lambda\in\muRoots_m}R_\lambda,
  \qquad R_\lambda=\C r_\lambda,
  \qquad \dim R_\lambda=1.
\]
Every $v\in V$ thus has a unique decomposition
\[
  v=v_N+\sum_{\lambda\in\muRoots_m}v_\lambda,
  \qquad v_N\in N,\quad v_\lambda\in R_\lambda.
\]
Define the linear operators $E_N$ and $E_\lambda$ by
\[
  vE_N=v_N,
  \qquad
  vE_\lambda=v_\lambda.
\]
Thus $E_N$ projects onto $N$ along $R$, while $E_\lambda$ projects onto
$R_\lambda$ along all the other displayed direct-sum components.  We also
put
\[
  V_\lambda=VE_\lambda=R_\lambda
  \qquad(\lambda\in\muRoots_m).
\]
We also write
\begin{equation}\label{eq:E-R}
  E_R=I-E_N=\sum_{\lambda\in\muRoots_m}E_\lambda.
\end{equation}
We call $E_N$ the \emph{zero-primary projector} and the $E_\lambda$ the
\emph{nonzero eigenprojections}; collectively they are the spectral
projectors used below.  The terminology records operator components, not
classes of states.

\Needspace{7\baselineskip}
To use these projections inside spaces generated by transition matrices, we
need them to preserve every $T$-invariant subspace.  Lemma~\ref{lem:projectors} guarantees
this by realizing each projection as a polynomial in $T$; its resolution of
the identity will also start the later spectral-projector expansion.

\begin{lemma}[Polynomial projectors]\label{lem:projectors}
Each of $E_N$ and $E_\lambda$ $(\lambda\in\muRoots_m)$ is a polynomial in
$T$.  Furthermore,
\begin{equation}\label{eq:projector-sum}
  I=E_N+\sum_{\lambda\in\muRoots_m}E_\lambda.
\end{equation}
\end{lemma}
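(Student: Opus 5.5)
The plan is to build the projectors explicitly from the annihilating polynomial $p(X)=X^\ell(X^m-1)$ of \eqref{eq:polynomial-identity}. Since $X^m-1$ has the $m$ distinct roots $\lambda\in\muRoots_m$, all nonzero, the factors $X^\ell$ and $X^m-1$ are coprime. Moreover, $p(X)=X^\ell\prod_{\lambda}(X-\lambda)$ is a product of pairwise coprime factors. By the Chinese remainder theorem in $\C[X]$, we choose polynomials $f_0,f_\lambda$ such that
\[
  f_0\equiv 1 \pmod{X^\ell},\quad f_0\equiv 0\pmod{X^m-1},
\]
\[
  f_\lambda\equiv 0\pmod{X^\ell},\quad f_\lambda\equiv \delta_{\lambda\mu}\pmod{X-\mu}\ (\mu\in\muRoots_m).
\]
Then $f_0+\sum_\lambda f_\lambda\equiv 1$ modulo every factor, hence modulo $p$. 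Concretely, one may take $f_\lambda(X)=X^\ell\lambda^{-\ell}\frac{X^m-1}{m\lambda^{m-1}(X-\lambda)}$, which has value $\delta_{\lambda\mu}$ at $\mu$, and then $f_0=1-\sum_\lambda f_\lambda$.

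Next I will show that $f_0(T)=E_N$ and $f_\lambda(T)=E_\lambda$ by checking their action on each summand of $V=N\oplus\bigoplus_\lambda R_\lambda$ from \cref{lem:fitting} and the eigenspace splitting. These summands are $T$-invariant, and a polynomial in $T$ preserves them.

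On $N$, we have $vT^\ell=0$. So $vf_\lambda(T)=0$, because $X^\ell$ divides $f_\lambda$, and therefore $vf_0(T)=v-\sum_\lambda vf_\lambda(T)=v$.

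On $R_\mu=\C r_\mu$, we have $vg(T)=g(\mu)v$ for every polynomial $g$. This gives $vf_\lambda(T)=\delta_{\lambda\mu}v$, and hence $vf_0(T)=(1-1)v=0$. This is consistent with $f_0$ being divisible by $X^m-1$, which vanishes at $\mu$.

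Thus each operator agrees with the corresponding projector on every direct summand. By linearity and the uniqueness of the decomposition, the two operators coincide.

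The identity \eqref{eq:projector-sum} is immediate from $f_0+\sum_\lambda f_\lambda=1$ as polynomials, given the choice of $f_0$ above. It also follows directly from $v=v_N+\sum_\lambda v_\lambda$.

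The only point needing care is the verification that $f_\lambda(\mu)=\delta_{\lambda\mu}$. For $\mu\ne\lambda$, the quotient $(X^m-1)/(X-\lambda)$ vanishes at $\mu$. At $X=\lambda$, the quotient equals the derivative $m\lambda^{m-1}$, which is nonzero, and the factor $\lambda^{-\ell}X^\ell$ evaluates to $1$. Beyond this check, the argument is routine, and I expect no real obstacle.
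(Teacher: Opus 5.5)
Your proposal is correct and follows essentially the same route as the paper: a Chinese-remainder construction with respect to the coprime factors $X^\ell$ and $X-\mu$ ($\mu\in\muRoots_m$), followed by checking the action on $N$ and on each eigenline $R_\mu$. Your explicit Lagrange-type choice of $f_\lambda$ with $f_0=1-\sum_\lambda f_\lambda$ makes the resolution of the identity hold already at the level of polynomials, and it also covers the case $\ell=0$ (where $f_0=0$), which the paper treats separately.
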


\begin{proof}
If $\ell=0$, then $N=0$ and $E_N=0$.  In that case the minimal polynomial
of $T$ divides $X^m-1$, whose roots are distinct over $\C$; ordinary
Lagrange interpolation gives each $E_\lambda$ as a polynomial in $T$, and
their sum is $I$.  We may therefore assume $\ell\ge1$.

The minimal polynomial of $T$ divides $X^\ell(X^m-1)$.  The factors
$X^\ell$ and $X-\lambda$ for $\lambda\in\muRoots_m$ are pairwise coprime.
The Chinese remainder theorem supplies polynomials $e_N$ and $e_\lambda$
with the following residue classes:
\[
\begin{array}{c|cc}
 & X^\ell & X-\mu\quad(\mu\in\muRoots_m)\\ \hline
e_N       &1&0\\
e_\lambda &0&\begin{cases}1,&\mu=\lambda,\\0,&\mu\ne\lambda.
                 \end{cases}
\end{array}.
\]
On $N$, the operator $T$ is annihilated by $X^\ell$; on $R_\mu$, it acts
as multiplication by $\mu$.  Therefore $e_N(T)$ is the identity on $N$ and
zero on every $R_\mu$, while $e_\lambda(T)$ is the identity on $R_\lambda$
and zero on every other direct-sum component.  Thus these evaluated
polynomials are precisely $E_N$ and $E_\lambda$.  Their sum acts as the
identity on every component of $V=N\oplus\bigoplus_\mu R_\mu$, which proves
\eqref{eq:projector-sum}.
\end{proof}

Define the zero-sum subspace
\begin{equation}\label{eq:row-zero-sum}
  \calZ=\set{v\in V:v\one=0}.
\end{equation}
Every transition matrix preserves $\calZ$ on the right.  The $1$-eigenspace
of $T$ in $R$ is spanned by $[C]$, and $[C]\notin\calZ$.  We therefore
have
\begin{equation}\label{eq:R-zero-sum}
  R\cap\calZ
   =\bigoplus_{\lambda\in\muRoots_m\setminus\{1\}}V_\lambda,
  \qquad \dim(R\cap\calZ)=m-1.
\end{equation}
Indeed, if $v\in V_\lambda$ and $\lambda\ne1$, then
$v\one=vT\one=\lambda v\one$, so $v\one=0$.
Notice also that $N\subseteq\calZ$: if $vT^\ell=0$, then
$v\one=vT^\ell\one=0$.

\paragraph{The running example: Fitting data.}
Let $e_i$ denote the standard row at state $i$.  For
\cref{ex:running},
\[
  R=\Span\set{e_0,e_1,e_2},
  \qquad
  N=\Span\set{\xi_1,\xi_2},
  \qquad
  \xi_1=e_3-e_2,\qquad \xi_2=e_4-e_1.
\]
Directly from the $a$-arrows,
\[
  \xi_2T=\xi_1,\qquad \xi_1T=0.
\]
Thus $J^2=0$ and $\rank J=1$, while $T|_R$ is the permutation operator of a
$3$-cycle.  In this example the minimal polynomial is
$X^2(X^3-1)$, and the two Fitting projectors are especially simple:
$E_R=T^3$ and $E_N=I-T^3$.

\section{Filtered spaces and short actual-word generators}\label{sec:filtered}

When $\ell=0$, so that $C=Q$, the rows and filtered spaces below coincide
over $\C$ with Dubuc's construction for circular automata
~\cite[Definition~4.2]{Dubuc1998}.  At positive level, the auxiliary
filtered spaces defined below separate the nilpotent part from the
cycle-coordinate part.  This refinement has no nonzero nilpotent counterpart
in the circular case.

After fixing a letter $f$, we use a row that records the change of the
cycle-indicator row after applying $f$.  We organize its right translates
by the number of controls in the translating word.  The aim is to represent
the resulting filtered spaces by actual words with explicit length bounds.

Every letter in $\Sigma\setminus\{a\}$ is called a \emph{control letter}, or
briefly a \emph{control}.  For a word $w$, write
$\abs{w}_{\Sigma\setminus\{a\}}$ for the number of
occurrences in $w$ of letters different from $a$; equivalently, this is the
number of control letters in $w$, or the length remaining after every
occurrence of $a$ is deleted.  Fix $f\in\Sigma$ and define
\begin{equation}\label{eq:uf}
  u_f=[C](P_f-I)\in\calZ.
\end{equation}
For $r\ge0$, let
\begin{equation}\label{eq:filtered-spaces}
  \calL_r(f)=\Span\set{u_fP_w:
               \abs{w}_{\Sigma\setminus\{a\}}\le r},
  \qquad
  \calM_r(f)=\calL_r(f)T^\ell.
\end{equation}
We usually suppress $f$.  Put
\[
  \calN_r=\calL_r\cap N,
  \qquad k_r=\dim\calN_r,
  \qquad d_r=\dim\calM_r.
\]
Each $\calL_r$ is invariant under right multiplication by $T$.  Since the
projectors in \cref{lem:projectors} are polynomials in $T$, they preserve
$\calL_r$.  Consequently $\calN_r$ and $\calM_r$ are $T$-invariant.  The
definitions also give
\begin{equation}\label{eq:filtration-inclusions}
  \calL_{r-1}\subseteq\calL_r,
  \qquad
  \calN_{r-1}\subseteq\calN_r,
  \qquad
  \calM_{r-1}\subseteq\calM_r
  \qquad(r\ge1).
\end{equation}

\Needspace{7\baselineskip}
A vector in $\calL_r$ may have nonzero components in both Fitting summands
$N$ and $R$.  Lemma~\ref{lem:compatible-splitting} proves that the corresponding component spaces
are $\calN_r=\calL_r\cap N$ and $\calM_r=\calL_r\cap R$, and that this
direct-sum decomposition is compatible with the filtration.  Its dimension
bounds are the input for the generator-length estimates below.

\begin{lemma}[Compatible splitting]\label{lem:compatible-splitting}
For the fixed letter $f$ in \eqref{eq:uf} and every $r\ge0$,
\begin{equation}\label{eq:filtered-splitting}
  \calM_r=\calL_r\cap R,
  \qquad
  \calL_r=\calN_r\oplus\calM_r.
\end{equation}
In particular,
\begin{equation}\label{eq:dimension-bounds}
  k_r\le t,
  \qquad d_r\le m-1.
\end{equation}
\end{lemma}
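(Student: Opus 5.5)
The plan is to derive everything from the $T$-invariance of $\calL_r$ together with \cref{lem:projectors}. Since $E_N$ and $E_R=I-E_N$ are polynomials in $T$ and $\calL_r$ is closed under right multiplication by $T$, both projectors map $\calL_r$ into itself. Hence every $v\in\calL_r$ splits as $v=vE_N+vE_R$ with $vE_N\in\calL_r\cap N$ and $vE_R\in\calL_r\cap R$. This gives $\calL_r=(\calL_r\cap N)\oplus(\calL_r\cap R)=\calN_r\oplus(\calL_r\cap R)$, where the sum is direct because $N\cap R=0$ by \cref{lem:fitting}.

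Next I will identify $\calL_r\cap R$ with $\calM_r=\calL_rT^\ell$.

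\begin{itemize}
\item \emph{Inclusion $\calM_r\subseteq\calL_r\cap R$.} Since $\calL_r$ is $T$-invariant, $\calL_rT^\ell\subseteq\calL_r$. Also $VT^\ell=R$ by definition, so $\calL_rT^\ell\subseteq R$.
\item \emph{Inclusion $\calL_r\cap R\subseteq\calM_r$.} By \cref{lem:fitting}, $T|_R$ is a permutation operator, so $T^\ell|_R$ is invertible on $R$ with inverse $T^{m k-\ell}|_R$ for any $k$ with $mk\ge\ell$. Given $v\in\calL_r\cap R$, put $v'=vT^{mk-\ell}$. Then $v'\in\calL_r$, and $v'T^\ell=vT^{mk}=v$ because $(T|_R)^m=I_R$. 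So $v\in\calL_rT^\ell=\calM_r$.
\end{itemize}
Combining this with the first paragraph yields \eqref{eq:filtered-splitting}.

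For the dimension bounds, $k_r\le\dim N=t$ is immediate from \cref{lem:fitting}. For $d_r$, I will first show $\calL_r\subseteq\calZ$. We have $u_f\in\calZ$ because $[C]P_f\one-[C]\one=m-m=0$, and every $P_w$ preserves $\calZ$ because $P_w\one=\one$. Therefore
\[
  \calM_r\subseteq R\cap\calZ,
\]
and by \eqref{eq:R-zero-sum} this space has dimension $m-1$, so $d_r\le m-1$.

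I expect no real obstacle. The only point needing care is the reverse inclusion $\calL_r\cap R\subseteq\calL_rT^\ell$, which relies on inverting $T^\ell$ on $R$ by a power of $T$. This uses only $T$-invariance of $\calL_r$ and not invariance under arbitrary polynomials.
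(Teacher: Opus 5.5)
Your proof is correct and follows essentially the same route as the paper. Both arguments get $\calM_r=\calL_r\cap R$ from $T$-invariance of $\calL_r$ and from inverting $T^\ell$ on $R$ by a power of $T$; the paper uses $p$ with $p+\ell\equiv0\pmod m$ where you use $mk-\ell$. Both then get the direct sum from the polynomial projectors $E_N$ and $E_R$, and the bounds on $k_r,d_r$ from $\calL_r\subseteq\calZ$ together with \eqref{eq:R-zero-sum}. The only difference is the order in which you establish the splitting and the identification with $\calM_r$.
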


\begin{proof}
Because $\calL_r$ is $T$-invariant, $\calL_rT^\ell\subseteq\calL_r$.
Because $VT^\ell=R$, the same rows lie in $R$.  Hence
$\calL_rT^\ell\subseteq\calL_r\cap R$.  Conversely,
if $v\in\calL_r\cap R$, choose $p\in\{0,\ldots,m-1\}$ with
$p+\ell\equiv0\pmod m$.  Since $T^m|_R=I_R$, we have
$T^{p+\ell}|_R=I_R$.  Then
$v=(vT^p)T^\ell\in\calL_rT^\ell$.  Applying the polynomial projections
$E_N$ and $E_R=I-E_N$ to any $v\in\calL_r$ gives
$vE_N\in\calL_r\cap N=\calN_r$ and
$vE_R\in\calL_r\cap R=\calM_r$.  Their intersection is zero by
$V=N\oplus R$, which proves the direct sum.  Finally,
$\calL_r\subseteq\calZ$, so \eqref{eq:dimension-bounds} follows from
\cref{lem:fitting,eq:R-zero-sum}.
\end{proof}

Relative to the fixed letter $f$ in \eqref{eq:uf}, an
\emph{actual-word generating family} for a subspace $W\subseteq V$ is a
family of labelled rows indexed by a finite set $\Gamma$:
\[
  \bigl(w_\gamma,x_\gamma\bigr)_{\gamma\in\Gamma},
  \qquad
  w_\gamma\in\Sigma^*,\quad x_\gamma=u_fP_{w_\gamma},\quad
  W=\Span\set{x_\gamma:\gamma\in\Gamma}.
\]
Each pair $(w_\gamma,x_\gamma)$ is an \emph{actual-word generator}.  Linear
combinations use the row $x_\gamma$, whereas length, number of controls, and
terminal-suffix conditions refer to its chosen representing word
$w_\gamma$.  Keeping this label is necessary because distinct words may
represent the same row vector.
For a row vector $v$, the notation $v\C[T]$ means
$\Span\set{vq(T):q\in\C[X]}$, so all cyclic modules below are right
$\C[T]$-modules.  We write
\[
  \operatorname{Ann}_T(v)=
  \set{q(X)\in\C[X]:vq(T)=0}
\]
for the annihilator ideal of $v$.  Empty spaces have the empty generating
family, and the minimal polynomial of the zero space is understood to be
$1$.

\Needspace{7\baselineskip}
Lemma~\ref{lem:generation} gives two complementary length bounds for actual-word
generators of $\calL_r$.  The first bound contains the total dimension
$d_r+k_r$, whereas the second adds at most $\ell$ for the nilpotent part
introduced at each filtration step.  Neither estimate uniformly dominates
the other, so we retain both constructions.  After passage to
$\calM_r=\calL_r\cap R$, they yield competing terminal-$a^\ell$ bounds whose
minimum is used in the annular induction, while the total-dimension estimate
will later force the filtration $(\calM_r)$ to stabilize.

\begin{lemma}[Two filtered generation bounds]\label{lem:generation}
For the fixed letter $f$ in \eqref{eq:uf} and every $r\ge0$, the space
$\calL_r(f)$ has two possibly empty actual-word
generating families.  Every representing word $w_\gamma$ in the first and
second family satisfies, respectively,
\begin{align}
  \abs{w_\gamma}&\le d_r+k_r-1, \label{eq:gen-total}\\
  \abs{w_\gamma}&\le d_r+\ell r+\ell-1. \label{eq:gen-jordan}
\end{align}
Every representing word in either family has at most $r$ letters different
from $a$.
Moreover, $\calM_r$ has an actual-word generating family in which every
representing word has at most $r$ letters different from $a$, ends in
$a^\ell$, and satisfies
\begin{equation}\label{eq:gen-M}
  \abs{w_\gamma}\le d_r+\ell-1+\min(k_r,\ell r).
\end{equation}
\end{lemma}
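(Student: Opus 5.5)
The plan is to build each generating family as a prefix-closed tree of words, grown greedily, so that every node contributes a new independent row. The length of a representing word is then bounded by its depth in the tree. Throughout, write $D_r=\dim\calL_r=d_r+k_r$, which holds by \cref{lem:compatible-splitting}. If $u_f=0$, all spaces vanish and the empty families work, so assume $u_f\ne0$.

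The structural input is the recursion
\[
  \calL_s=\calL_{s-1}+\sum_{g\in\Sigma\setminus\{a\}}\bigl(\calL_{s-1}P_g\bigr)\C[T],
  \qquad \calL_0=u_f\C[T].
\]
It holds because a word with at most $s$ controls is either a word with at most $s-1$ controls, or has the form $w'ga^i$ where $w'$ has at most $s-1$ controls.

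\textbf{First family \eqref{eq:gen-total}.} We build prefix-closed word sets $W_0\subseteq W_1\subseteq\cdots\subseteq W_r$ whose rows $u_fP_w$ are linearly independent, with every word of $W_s$ having at most $s$ controls. Start with $W_0\ni\varepsilon$. At stage $s$, repeat the following while possible: add $wx$ to the current set whenever $wx$ has at most $s$ controls and $u_fP_{wx}$ is not in the current span. Here $x$ may be any letter if $w$ has at most $s-1$ controls, and $x=a$ otherwise. At termination, the span of $W_s$ is closed under $T$. It also contains $\Span W_{s-1}\,P_g$ for each control $g$, because the words of $W_{s-1}$ have at most $s-1$ controls and are still eligible for extension. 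Hence, by the recursion and induction, $\Span W_s=\calL_s$. A prefix-closed set of $D_r$ words has depth at most $D_r-1$, which gives \eqref{eq:gen-total}.

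\textbf{Second family \eqref{eq:gen-jordan}.} We use the same tree but count nodes level by level. At level $s$, each new generator $y$ (the root $u_f$ when $s=0$, or $xP_g$ with $x$ in the level-$(s-1)$ span) starts an $a$-run $y,yT,yT^2,\dots$ of rows that are independent modulo the previously obtained span. The key estimate is
\[
  \dim\frac{y\C[T]+L'}{L'}\le \ell+\dim\frac{y\C[T]T^\ell+L'T^\ell}{L'T^\ell},
\]
where $L'$ is the current span. It holds because $y\C[T]/y\C[T]T^\ell$ is spanned by $y,\dots,yT^{\ell-1}$ and $y\C[T]T^\ell\subseteq R$. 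Along a root-to-node path, each level therefore contributes at most $\ell$ nodes plus the part of the $\calM$-dimension gain realized along that run. The control letter itself is counted as the first node $xP_g$ of its run. Summing over the $r+1$ levels bounds the number of nodes on any path by $d_r+\ell(r+1)$, so the depth is at most $d_r+\ell r+\ell-1$.

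\textbf{Generators of $\calM_r$ \eqref{eq:gen-M}.} Rows $yT^{\ell+i}$ lie in $R$, and $\calM_r=\calL_rT^\ell$. Instead of appending $a^\ell$ to every leaf, which would cost an extra $\ell$, we take the generators of $\calM_r$ to be $y\,a^\ell a^i$, where $y$ is the start of an $a$-run and $i$ ranges over the R-gain of that run. There are fewer than $\Delta d$ such $i$ per run, and they are independent, so these words span $\calM_r$ by the displayed estimate. Each such word ends in $a^\ell$ and has at most $r$ controls.

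The length of $y$ is bounded in two ways:
\begin{enumerate}[label=(\roman*)]
\item by the first construction, by the nilpotent nodes plus the earlier R-nodes, at most $k_r+d_r-1-i$;
\item by the second construction, at most $\ell r+(\text{earlier R-gain})$.
\end{enumerate}
Adding $\ell+i$ gives $d_r+\ell-1+\min(k_r,\ell r)$.

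I expect the main obstacle to be this final bookkeeping. The claim that the nilpotent contributions on a path number at most $k_r$ (respectively at most $\ell$ per level) while the R-gains add to at most $d_r$ requires organizing the greedy order so that all R-growth of a run is placed at its tail. I would first try ordering the extensions so that each run is first extended by exactly $\ell$ steps into $R$, and then reading off the two counts from a single tree.
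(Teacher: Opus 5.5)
Your first family is correct and complete, and it argues differently from the paper. A prefix-closed set of words whose rows are linearly independent and span $\calL_r$ has depth at most $\dim\calL_r-1=d_r+k_r-1$. The paper instead adds at most $\Delta_r$ letters per filtration step, by Cayley--Hamilton on $\calL_r/\calL_{r-1}$.

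The gap is in the second part, and it matters most for \eqref{eq:gen-M}. Your key estimate holds only when the current span $L'$ satisfies $L'T^\ell\subseteq L'$. You need $y\C[T]T^\ell\cap L'\subseteq L'T^\ell$, and $y\C[T]T^\ell\subseteq R$ gives this only because $L'\cap R=L'T^\ell$ when $L'$ is $T$-invariant. For an arbitrary subspace the estimate is false. Take $\ell=1$, $L'=\C(n_1+\rho)$ and $y=n_2+\rho$, where $n_1,n_2\in N$ are independent and $\rho\in R$ is an eigenvector. Then the left side is $2$ and the right side is $1$.

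Your greedy procedure fixes no order of extensions, so its intermediate spans need not be $T$-invariant. The same invariance is needed in two further places:
\begin{itemize}
\item for the ``$\calM$-gain of a run'' to be an increment of $\dim(L'\cap R)$, so that these gains along a path sum to at most $d_r$;
\item for the words $w_ya^{\ell+i}$, $0\le i<\Delta d$, to span $\calM_r$. That is a Cayley--Hamilton statement for the cyclic vector $yT^\ell$ modulo a $T$-invariant subspace, not a consequence of your displayed estimate.
\end{itemize}
Your bounds (i) and (ii) are asserted rather than derived. The remedy you suggest also does not work: $y,\dots,yT^{\ell-1}$ need not be independent modulo the current span, so a run cannot be forced to take exactly $\ell$ steps. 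Where the $R$-growth sits inside a run is irrelevant.

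The missing idea is to complete each $a$-run before opening another. Then:
\begin{itemize}
\item Each run boundary leaves a $T$-invariant span $L^{(k)}=L^{(k-1)}+y_k\C[T]$ with $L^{(k)}\cap R=L^{(k)}T^\ell$.
\item The $k$th run has $\dim L^{(k)}-\dim L^{(k-1)}\le\ell+\Delta d_k$ nodes, where $\Delta d_k=\dim L^{(k)}T^\ell-\dim L^{(k-1)}T^\ell$.
\item The rows $y_kT^{\ell+i}$, $0\le i<\Delta d_k$, span $L^{(k)}T^\ell$ modulo $L^{(k-1)}T^\ell$.
\item The proper prefixes of $w_{y_k}$ are tree nodes added before the $k$th run, so their rows are independent elements of $L^{(k-1)}$. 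Hence $\abs{w_{y_k}}\le\dim L^{(k-1)}\le d_r+k_r-\Delta d_k$.
\item Among those prefixes, at most $\ell$ per filtration level have trailing $a$-exponent below $\ell$. The others are independent rows of $L^{(k-1)}\cap R$, so $\abs{w_{y_k}}\le\ell r+d_r-\Delta d_k$.
\end{itemize}
Adding $\ell+i\le\ell+\Delta d_k-1$ gives both halves of \eqref{eq:gen-M} for a single family. The same prefix count, applied to all prefixes of an arbitrary tree node, proves \eqref{eq:gen-jordan} for any greedy tree without your estimate.

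Repaired this way, your route produces one basis meeting both bounds at once and never uses the spectral isomorphism \eqref{eq:spectral-quotient}. The paper avoids every ordering issue instead, by applying the monic annihilator $X^\ell\psi_r$ of degree $s_r+\ell$ to all sources. The cost there is redundant generators and two separate families, of which the shorter is taken.
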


\begin{proof}
If $u_f=0$, all the filtered spaces are zero and the empty families prove
the assertions.  Hence assume $u_f\ne0$.

\smallskip
\noindent\emph{The total-dimension bound.}
We first prove \eqref{eq:gen-total}.  The space $\calL_0=u_f\C[T]$ is the
cyclic right $\C[T]$-module generated by $u_f$, so the words
$\varepsilon,a,\ldots,a^{d_0+k_0-1}$ represent a generating family.  Suppose
$r\ge1$ and that an actual-word generating family
$((w_\gamma,x_\gamma))_{\gamma\in\Gamma}$ has been chosen for
$\calL_{r-1}$.  Modulo $\calL_{r-1}$, the space
$\calL_r$ is generated as a right $\C[T]$-module by the sources
\begin{equation}\label{eq:filtration-sources}
  x_\gamma P_g,
  \qquad \gamma\in\Gamma,\quad g\in\Sigma\setminus\{a\}.
\end{equation}
We call these vectors the \emph{sources at filtration index $r$} relative
to the chosen family; their representing words are $w_\gamma g$.
Indeed, modulo $\calL_{r-1}$ every relevant word has exactly $r$ controls
and can be written $vga^j$, where
$\abs{v}_{\Sigma\setminus\{a\}}=r-1$, $g\ne a$, and $j\ge0$.
Then $u_fP_{vga^j}=(u_fP_v)P_gT^j$, and $u_fP_v$ can be expanded in the
rows $x_\gamma$ of the chosen family.
Let
\[
  \Delta_r=\dim\calL_r-\dim\calL_{r-1}
          =(d_r-d_{r-1})+(k_r-k_{r-1}).
\]
If $\Delta_r=0$, the quotient is zero and no new representative is needed.
Otherwise, let $\overline T$ be the operator induced by $T$ on the
$\Delta_r$-dimensional quotient $\calL_r/\calL_{r-1}$.  By the
Cayley--Hamilton theorem, the $\overline T$-orbit of every source coset is
spanned by its first $\Delta_r$ powers.  Thus it suffices to append one
control and at most $\Delta_r-1$ copies of $a$; the maximum length rises by
at most $\Delta_r$.  The base maximum is $d_0+k_0-1$.  After filtration
indices $1,\ldots,r$, the maximum is at most
\[
  d_0+k_0-1+\sum_{i=1}^r\Delta_i=d_r+k_r-1.
\]
This proves \eqref{eq:gen-total}.

\smallskip
\noindent\emph{The spectral--nilpotent bound.}
For \eqref{eq:gen-jordan}, define the spectral support
\[
  \Lambda_r=\set{\lambda\in\muRoots_m\setminus\{1\}:
                       \calM_r\cap V_\lambda\ne0}.
\]
Because $\calM_r$ is $T$-invariant, \eqref{eq:R-zero-sum} and the
one-dimensionality of every $V_\lambda$ give
\[
  \calM_r=\bigoplus_{\lambda\in\Lambda_r}V_\lambda,
  \qquad \Lambda_{r-1}\subseteq\Lambda_r\quad(r\ge1).
\]
Thus the minimal polynomial $p_r(X)$ of $T|_{\calM_r}$ is
$\prod_{\lambda\in\Lambda_r}(X-\lambda)$.  If $\calM_r=0$, this product is
empty and $p_r=1$.  Consequently,
\[
  \deg p_r=d_r,
  \qquad p_{r-1}\mid p_r\quad(r\ge1).
\]
For $r\ge1$, set
\[
  \psi_r(X)=\frac{p_r(X)}{p_{r-1}(X)},
  \qquad s_r=\deg \psi_r=d_r-d_{r-1}.
\]
The roots of $\psi_r$ are exactly
$\Lambda_r\setminus\Lambda_{r-1}$.  Define
\[
  \Theta_r:\calM_r/\calM_{r-1}
  \longrightarrow
  \bigoplus_{\lambda\in\Lambda_r\setminus\Lambda_{r-1}}V_\lambda,
  \qquad
  \Theta_r(v+\calM_{r-1})
  =\sum_{\lambda\in\Lambda_r\setminus\Lambda_{r-1}}vE_\lambda.
\]
An element of $\calM_{r-1}$ has zero component in every new eigenline, so
$\Theta_r$ is well defined.  If $\Theta_r(v+\calM_{r-1})=0$, then $v$ has
only old eigenline components and therefore belongs to $\calM_{r-1}$; the
coset is zero.  Thus $\Theta_r$ is injective.  Every new eigenline lies in
its image, so it is also surjective.  Hence $\Theta_r$ is an isomorphism
of vector spaces.  Moreover, every $E_\lambda$ commutes with $T$, so
\[
  \Theta_r\bigl((v+\calM_{r-1})T\bigr)
  =\Theta_r(v+\calM_{r-1})T.
\]
By iteration and linearity the same identity holds with $T$ replaced by
every polynomial in $T$.  Hence $\Theta_r$ is an isomorphism of right
$\C[T]$-modules, and
\begin{equation}\label{eq:spectral-quotient}
  \calM_r/\calM_{r-1}
  \cong\bigoplus_{\lambda\in\Lambda_r\setminus\Lambda_{r-1}}V_\lambda,
\end{equation}
and this quotient has minimal polynomial $\psi_r$ of degree $s_r$.
For any $y\in\calL_{r-1}$ and $g\in\Sigma\setminus\{a\}$, put
$x=yP_g$.  Its $R$-component belongs to $\calM_r$; multiplying by
$\psi_r(T)$ kills precisely
its components in those new eigenlines, while every remaining nonzero
component lies in $\calM_{r-1}$.  Its $N$-component is killed by the
subsequent $T^\ell$.  Hence
\begin{equation}\label{eq:annihilating-relation}
  x\psi_r(T)T^\ell=(xT^\ell)\psi_r(T)
  \in\calM_{r-1}\subseteq\calL_{r-1}.
\end{equation}
The polynomial $X^\ell\psi_r(X)$ is monic of degree $s_r+\ell$, so, modulo
$\calL_{r-1}$, the orbit of each source is generated by
\[
  x,xT,\ldots,xT^{s_r+\ell-1}.
\]
All sources share the same polynomial; their number affects the number of
generators, not their maximum length.

For $r=0$, the projector splitting gives
\[
  \calL_0=u_f\C[T],\qquad
  \calN_0=u_fE_N\C[T],\qquad
  \calM_0=u_fE_R\C[T]=u_fT^\ell\C[T].
\]
The last equality holds because $T^\ell$ vanishes on $N$ and is invertible
on $R$.  Put $x=u_fE_N$.  If $x=0$, set $h=0$; then
$x\C[T]=0$ and $\operatorname{Ann}_T(x)=\C[X]=(X^h)$.  Suppose now that
$x\ne0$, and let $h\ge1$ be the least integer for which $xT^h=0$.
The rows
\[
  x,xT,\ldots,xT^{h-1}
\]
are linearly independent.  Indeed, in a nonzero relation choose the least
index $k$ with nonzero coefficient.  The relation has the form
$xT^k(\beta I+Tq(T))=0$, where $\beta\ne0$.  The restriction of $T$ to $N$
is nilpotent, so $\beta I+Tq(T)$ is invertible on $N$.  It would follow that
$xT^k=0$, contrary to $k<h$.  Hence the map
\[
  \C[X]/(X^h)\longrightarrow x\C[T],
  \qquad q+(X^h)\longmapsto xq(T),
\]
is an isomorphism.  In both cases
$\operatorname{Ann}_T(x)=(X^h)$ and
$h=\dim(x\C[T])=k_0\le\ell$.  The vector
$u_fE_R$ cyclically generates $\calM_0$, so
$\operatorname{Ann}_T(u_fE_R)$ has monic generator $p_0$.
Since $p_0(0)\ne0$,
$X^hp_0(X)$ annihilates $u_f$.  Hence the labelled pairs
\[
  (a^j,u_fT^j),\qquad 0\le j<h+d_0,
\]
form an actual-word generating family for $\calL_0$.  Their maximum
representing-word length is at most
$d_0+\ell-1$, which is \eqref{eq:gen-jordan} for $r=0$.  For the induction
step, suppose that
$\mathcal G_{r-1}=((w_\gamma,x_\gamma))_{\gamma\in\Gamma}$ satisfies
\eqref{eq:gen-jordan} for $\calL_{r-1}$.  Retain those labelled pairs and,
for every $\gamma\in\Gamma$, $g\in\Sigma\setminus\{a\}$, and
$0\le j<s_r+\ell$, adjoin
\[
  \bigl(w_\gamma ga^j,\ x_\gamma P_gT^j\bigr).
\]
The source description in \eqref{eq:filtration-sources} and
\eqref{eq:annihilating-relation} show that the
resulting rows span $\calL_r$.  The maximum representing-word length
increases by at most $s_r+\ell$.  More explicitly, it is at most
\[
  \bigl(d_{r-1}+\ell(r-1)+\ell-1\bigr)+(s_r+\ell)
  =d_r+\ell r+\ell-1,
\]
because $s_r=d_r-d_{r-1}$.  This proves \eqref{eq:gen-jordan}.  When
$s_r=0$, the new rows
use only $j=0,\ldots,\ell-1$, as required.

\smallskip
\noindent\emph{Generating the invertible component.}
It remains to generate $\calM_r$ directly.  One complete generating family
is obtained from \eqref{eq:gen-total} by replacing each labelled pair
$(w_\gamma,x_\gamma)$ with
$(w_\gamma a^\ell,x_\gamma T^\ell)$; its maximum length is
\begin{equation}\label{eq:M-route-one}
  d_r+k_r+\ell-1.
\end{equation}
For the other family, argue by induction on $r$.  At $r=0$, use
the labelled pairs
\[
  (a^{\ell+j},u_fT^{\ell+j}),\qquad 0\le j<d_0.
\]
If $d_0=0$, this list is empty and generates the zero space $\calM_0$.
Now assume $r\ge1$, retain the family
already constructed for $\calM_{r-1}$, let
$\mathcal G_{r-1}=((w_\gamma,v_\gamma))_{\gamma\in\Gamma}$ be the family
for $\calL_{r-1}$ furnished by \eqref{eq:gen-jordan}, and set
\[
  \mathcal S_r=\set{v_\gamma P_g:\gamma\in\Gamma,\
                       g\in\Sigma\setminus\{a\}}.
\]
Applying $T^\ell$ to the source description in \eqref{eq:filtration-sources}
and passing to the
quotient gives
\[
  \calM_r/\calM_{r-1}
  =\sum_{x\in\mathcal S_r}
       (xT^\ell+\calM_{r-1})\C[T].
\]
More explicitly,
\begin{equation}\label{eq:M-quotient-generators}
  \calM_r=\calM_{r-1}+
  \Span\set{xT^{\ell+j}:x\in\mathcal S_r,\ 0\le j<s_r}.
\end{equation}
Indeed, by \eqref{eq:spectral-quotient}, division by the monic polynomial
$\psi_r$ reduces
every power of $T$ to a linear combination of $I,T,\ldots,T^{s_r-1}$,
which proves \eqref{eq:M-quotient-generators}.  The new rows represented by
the labelled pairs
\[
  \bigl(w_\gamma ga^{\ell+j},v_\gamma P_gT^{\ell+j}\bigr),
  \qquad \gamma\in\Gamma,\quad g\in\Sigma\setminus\{a\},
  \quad 0\le j<s_r,
\]
have maximum representing-word length at most
\[
  \bigl(d_{r-1}+\ell(r-1)+\ell-1\bigr)
  +(1+\ell+s_r-1)
  =d_r+\ell r+\ell-1.
\]
The representing words inherited from $\calM_{r-1}$ are shorter by the
induction hypothesis.  If $s_r=0$, the new list in
\eqref{eq:M-quotient-generators} is empty and
$\calM_r=\calM_{r-1}$.  Every construction at filtration index $r$ appends
at most one new control letter to a word from filtration index $r-1$ and
otherwise appends only powers of $a$.  Thus all stated families have at
most $r$ controls.
Taking the shorter of the two complete families, \eqref{eq:M-route-one}
and this second family, proves \eqref{eq:gen-M}.
\end{proof}

\paragraph{The running example: filtered growth.}
Write $U=P_b$.  Using the vectors $\xi_1,\xi_2$ from the Fitting-data
calculation for \cref{ex:running}, we obtain
\[
  u_b=[C](U-I)=e_3-e_2=\xi_1,
  \qquad
  u_bU=e_4-e_1=\xi_2.
\]
Because $u_bT=0$, a word with at most one control can produce only
$\xi_1$ and $\xi_2$; hence
\[
  \calL_0=\Span\set{\xi_1},\qquad
  \calL_1=N,\qquad
  \calM_0=\calM_1=0.
\]
Thus allowing one control enlarges $\calL_0=\C\xi_1$ to the whole
two-dimensional space $N$, while its $R$-component remains zero.  A second
control produces a nonzero $R$-component: since
\[
  u_bU^2T^2=e_0-e_1,
  \qquad
  u_bU^2T^3=e_1-e_2,
\]
these two rows span $R\cap\calZ$.  Consequently,
\[
\begin{array}{c|c|c|c}
r&\calN_r&\calM_r&(k_r,d_r)\\ \hline
0&\Span\set{\xi_1}&0&(1,0)\\
1&N&0&(2,0)\\
r\ge2&N&R\cap\calZ&(2,2).
\end{array}.
\]
In particular, $\calL_2=\calZ$.  The actual words $\mathtt{bbaa}$ and
$\mathtt{bbaaa}$ represent rows that generate $\calM_2$; the maximum
representing-word length is $5$, exactly the bound in \eqref{eq:gen-M}.
Likewise, the rows represented by
$\varepsilon,\mathtt b,\mathtt{bb},\mathtt{bba}$ generate $\calL_2$, with
maximum representing-word length $3=d_2+k_2-1$.

For later use, define
\begin{equation}\label{eq:kappa}
  \kappa_r=\min(t,\ell r).
\end{equation}
For every nonnegative integer tuple
$\mathbf h=(h_0,\ldots,h_r)\in\mathbb Z_{\ge0}^{r+1}$, define its
\emph{total exponent} by
\begin{equation}\label{eq:total-exponent}
  \tau(\mathbf h)=h_0+\cdots+h_r.
\end{equation}
By \cref{lem:generation,eq:dimension-bounds}, $\calM_r$ has an actual-word
generating family whose representing words end in $a^\ell$ and have length
at most
\begin{equation}\label{eq:coarse-M-bound}
  m+\ell-2+\kappa_r.
\end{equation}
If such a representing word has exactly $r$ letters different from $a$, it
has a unique block form
\begin{equation}\label{eq:block-form}
  a^{h_0}g_1a^{h_1}\cdots g_ra^{h_r+\ell},
  \qquad g_i\ne a,\quad h_i\ge0.
\end{equation}
For the exponent tuple $\mathbf h=(h_0,\ldots,h_r)$ in
\eqref{eq:block-form}, \eqref{eq:coarse-M-bound} gives
\begin{equation}\label{eq:generator-total-exponent-bound}
  \tau(\mathbf h)\le m+\kappa_r-r-2.
\end{equation}

\section{The annular descent identity}\label{sec:annular}

At each filtration index $r$, we sum the transition rows associated with
block words whose free $a$-exponents lie in a finite set defined below.  In
this section, \emph{descent from $r$ to $r-1$} means membership of that sum
in $\calM_{r-1}$.  The simultaneous prefix bound will allow
\cref{sec:short-extension-proof}
to apply the no-short-extension hypothesis term by term.

Throughout this section, assume $m\ge2$ and $\ell\ge1$.
For $1\le r\le n-1$, set
\begin{equation}\label{eq:annular-endpoints}
  D_r^- =\max(0,\kappa_r-r),
  \qquad
  D_r^+ =m+\kappa_r-r-1.
\end{equation}
Using the total exponent from \eqref{eq:total-exponent}, define the
\emph{annular exponent set}
\begin{equation}\label{eq:annular-exponent-set}
  \mathcal A_r=
  \set{\mathbf h\in\mathbb Z_{\ge0}^{r+1}:
             D_r^-\le\tau(\mathbf h)\le D_r^+}.
\end{equation}
We call $[D_r^-,D_r^+]\cap\mathbb Z$ its \emph{total-exponent interval}.
This interval is nonempty.  If $\kappa_r\ge r$, then
$D_r^+-D_r^-=m-1\ge1$.  If $\kappa_r<r$, the inequality
$\ell r\ge r$ forces $\kappa_r=t$; hence
$D_r^-=0$ and $D_r^+=n-r-1\ge0$.

For comparison, formally specializing these definitions to $\ell=0$ gives
$t=\kappa_r=0$, $D_r^-=0$, $D_r^+=n-r-1$, and
$\calM_r=\calL_r$.  The annular exponent set then becomes the simplex
$\{\mathbf h\in\mathbb Z_{\ge0}^{r+1}:\tau(\mathbf h)\le n-r-1\}$.
The corresponding descent relation is Dubuc's circular relation
~\cite[Lemma~4.4]{Dubuc1998}.  In this paper, the shifted endpoints and
descent modulo $\calM_{r-1}$ are used only under the positive-level
assumptions $m\ge2$ and $\ell\ge1$.

For $f\in\Sigma$ and $g_1,\ldots,g_r\in\Sigma\setminus\{a\}$, write
$G_i=P_{g_i}$.
Define
\begin{equation}\label{eq:omega}
  \Omega_r=
  \sum_{\mathbf h\in\mathcal A_r}
  u_fT^{h_0}G_1T^{h_1}\cdots G_rT^{h_r+\ell}.
\end{equation}
We call $\Omega_r$ the \emph{annular sum}; each displayed summand is an
\emph{annular term}.  Thus ``annular'' has the precise meaning given by
\eqref{eq:annular-exponent-set}.

\Needspace{7\baselineskip}
The required descent and prefix estimate are recorded in
\cref{prop:annular-descent}.

\begin{proposition}[Annular descent]\label{prop:annular-descent}
Let $A=\langle Q,\Sigma,\delta\rangle$ be one-cluster with respect to
$a\in\Sigma$.  Let $C$ be its $a$-cycle, put
$n=\abs Q$, $m=\abs C$, and $t=n-m$, and let $\ell$ be its level with
respect to $a$.  Assume $m\ge2$ and $\ell\ge1$, and use the matrices and
filtered spaces defined in
\cref{eq:uf,eq:filtered-spaces,eq:kappa,eq:annular-endpoints}.  For every
$1\le r\le n-1$, every
$f\in\Sigma$, and all
$g_1,\ldots,g_r\in\Sigma\setminus\{a\}$, the annular sum $\Omega_r$ in
\eqref{eq:omega} satisfies
\begin{equation}\label{eq:annular-membership}
  \Omega_r\in\calM_{r-1}(f).
\end{equation}
For every $\mathbf h\in\mathcal A_r$, the word
\[
  fa^{h_0}g_1a^{h_1}\cdots g_ra^{h_r},
\]
obtained by prepending $f$ to the indexing block word and removing its
terminal $a^\ell$ has length at most $n$.
\end{proposition}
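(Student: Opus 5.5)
The prefix-length claim is bookkeeping, so I would dispose of it first. For $\mathbf h\in\mathcal A_r$, the word $fa^{h_0}g_1a^{h_1}\cdots g_ra^{h_r}$ has length $1+r+\tau(\mathbf h)$. Since $\tau(\mathbf h)\le D_r^+=m+\kappa_r-r-1$, this length is at most $m+\kappa_r\le m+t=n$, using $\kappa_r=\min(t,\ell r)\le t$ from \eqref{eq:kappa}.

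For the membership \eqref{eq:annular-membership}, I would work modulo $\calM_{r-1}$ and use the key identity that on $R\cap\calZ$ any $m$ consecutive powers of $T$ sum to zero. Indeed, by \cref{lem:fitting} and \eqref{eq:R-zero-sum}, $\sum_{j=0}^{m-1}T^{c+j}$ acts on $V_\lambda$ as $\lambda^c\sum_j\lambda^j=0$ for every $\lambda\ne1$. I would organize $\Omega_r$ by the prefix exponents $(h_0,\dots,h_{r-1})$ with partial sum $s$. Setting $y_s=u_fT^{h_0}G_1\cdots T^{h_{r-1}}G_r\in\calL_r\subseteq\calZ$, the inner sum is $y_sT^\ell\sum_{h_r}T^{h_r}$, with $h_r$ ranging over $[\max(0,D_r^--s),\,D_r^+-s]$. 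Since $y_sT^\ell\in R\cap\calZ$, every prefix with $s\le D_r^-$ gives a full window of $m=D_r^+-D_r^-+1$ consecutive exponents and contributes exactly $0$. This handles the case $\kappa_r\ge r$ for all small prefixes. The lower endpoint $D_r^-$ is exactly what makes the window have width $m$.

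The remaining terms come from prefixes with $s>D_r^-$, where the window is truncated to $[0,D_r^+-s]$; in the case $\kappa_r<r$ this is every prefix. For these I would expand with the projector resolution \eqref{eq:projector-sum}, inserting $I=E_N+\sum_\lambda E_\lambda$ after each $G_i$ (and after $u_f$). Each slot then either passes through the nilpotent summand $N$, where $T^{h}$ vanishes once $h\ge\ell$, or through a single eigenline, where $T^h$ is the scalar $\lambda^h$. Terms in which every slot is invertible reduce to sums of monomials $\lambda_0^{h_0}\cdots\lambda_r^{h_r}$ over the truncated region. I would reduce these by induction on the filtration index, using the partial geometric identity $(T-I)\sum_{h=0}^{H}T^h=T^{H+1}-I$ on each eigenline. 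The goal is to show that the uncancelled boundary terms are rows of the form $u_fP_{w}T^\ell$ with $w$ having at most $r-1$ controls, hence in $\calM_{r-1}$ by \cref{lem:compatible-splitting}. Terms with a slot in $N$ are controlled by counting: a prefix must spend at least one $a$-step in each of its excursions into $N$, and the total nilpotent depth available is bounded by $\dim\calN_r=k_r\le\kappa_r$, via \cref{lem:generation} and \eqref{eq:generator-total-exponent-bound}. Because $D_r^-=\kappa_r-r$ when positive, the threshold $s>D_r^-$ should force either an early vanishing $J^{\ge\ell}=0$ or a drop in filtration index.

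The main obstacle is this truncated-prefix analysis: showing that terms which enter $N$ and return to $R$ (the failure of the plain simplex sum noted in the introduction) cancel or descend exactly when the lower simplex $\tau<D_r^-$ has been omitted. My first attempt would be a double induction on $r$ and on the number of $E_N$-slots. I would use \eqref{eq:gen-M} to replace any row whose representing word exceeds the generation bound by a combination of shorter generators, which lie in $\calM_{r-1}$ or already have the required window structure.
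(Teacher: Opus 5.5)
Your prefix-length computation is correct. So is your observation that a full window of $m$ consecutive values of $h_r$ contributes exactly $0$, because $y_sT^\ell\in R\cap\calZ$. But that is the easy part. When $\kappa_r<r$ (for instance at $r=n-1$, where $\kappa_{n-1}=t<n-1$) it covers no terms at all. In general the truncated-window terms cancel only after summing over all prefixes, not prefix by prefix. For example, take $r=1$, $m=3$, $\kappa_1=1$ and eigenvalues $\{\lambda_0,\lambda_1\}=\{\omega,\omega^2\}$: the truncated part is $\lambda_0(1+\lambda_1)+\lambda_0^2$, which vanishes only as a whole. For these terms your proposal names the obstacle but gives no argument. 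The one quantitative input you invoke is also false: $k_r=\dim\calN_r$ is at most $t$ but need not be at most $\ell r$. Already for $\ell=1$ the $N$-components of $u_f$ and of $u_fTP_g$ can be independent, giving $k_1=2>1=\kappa_1$. The generation bounds of \cref{lem:generation} play no role in this proposition. What must be controlled is the sum of the exponents at the $N$-positions of a single spectral term, not your prefix partial sum.

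The missing argument has three parts.

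First, insert $I=E_N+\sum_\lambda E_\lambda$ at all $r+1$ positions; sequences containing the eigenvalue $1$ or ending in $E_N$ vanish. A sequence with a repeated nonzero eigenvalue at positions $p<q$ must then be removed termwise. Since $\dim V_\lambda=1$, $E_\lambda A_{p,q}E_\lambda=\eta E_\lambda$, which deletes the $q-p\ge1$ intervening controls, and \cref{lem:spectral-control-deletion} puts the term in $\calM_{r-1}$. This step cannot be avoided: with $\lambda_0=\lambda_1=\omega$ instead, the annular sum in the example above is $1+2\omega+3\omega^2\ne0$.

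Second, for $N$-positions $i_1<\cdots<i_z<r$ with exponents $\nu_j<\ell$, you need the per-term bound $z+s\le\kappa_r$, where $s=\sum_j\nu_j$. It comes from the dichotomy of \cref{lem:nilpotent-growth}. Either $x_jJ^{\nu_j}\in H_{j-1}$, and substituting that dependence deletes the controls between two $N$-positions, again giving termwise membership in $\calM_{r-1}$. Or $\dim H_j-\dim H_{j-1}\ge\nu_j+1$, so $z+s\le\dim N=t$, while $z\le r$ and $\nu_j+1\le\ell$ give $z+s\le\ell r$. You also need that $H_j$ is independent of the free exponents, which only rescale $x_j$ by nonzero powers $\lambda^h$; hence the dichotomy is constant on a family.

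Third, consider a surviving family with distinct eigenvalues $Y_0\subseteq\muRoots_m\setminus\{1\}$, and put $Y=Y_0\cup\{1\}$ and $d=D_r^+-s$. Its scalar coefficient is $\sum_{k=D_r^--s}^{D_r^+-s}h_k(Y_0)=h_d(Y)-h_{d-m}(Y)=[x^d]\Phi_Y(x)$. This vanishes because $d-\deg\Phi_Y=\kappa_r+1-(z+s)\ge1$. That is exactly where both annular endpoints and the bound from the second part enter. A partial-geometric recursion on eigenlines might reproduce this identity, but only after the first two reductions have been made.
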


\paragraph{The running example: annular cancellation.}
For \cref{ex:running}, $\kappa_1=\kappa_2=2$, so the total-exponent
intervals for filtration indices $1$ and $2$ are
\[
  [D_1^-,D_1^+]=[1,3],
  \qquad
  [D_2^-,D_2^+]=[0,2].
\]
Take $f=g_1=g_2=b$ and continue to write $U=P_b$.  At filtration index $1$ every
annular term is zero: $u_bT=0$, while $u_bU=\xi_2$ and
$\xi_2T^2=0$.  At filtration index $2$, the set $\mathcal A_2$ contains
ten exponent triples,
but only three give nonzero rows:
\[
\begin{array}{c|c}
(h_0,h_1,h_2)&u_bT^{h_0}UT^{h_1}UT^{h_2+2}\\ \hline
(0,0,0)&e_0-e_1\\
(0,0,1)&e_1-e_2\\
(0,0,2)&e_2-e_0.
\end{array}.
\]
They cancel by the vector identity
$(e_0-e_1)+(e_1-e_2)+(e_2-e_0)=0$, and hence
$\Omega_2=0\in\calM_1$.  The other seven exponent triples already
vanish in the nilpotent calculation.  Here $\calM_1=0$, so descent is exact
cancellation.  In general \cref{prop:annular-descent} asserts only the
membership $\Omega_r\in\calM_{r-1}$.

\Needspace{7\baselineskip}
Lemma~\ref{lem:nilpotent-growth} is the only dimension estimate needed for the
nilpotent restriction.  It states precisely how much the span of the
iterates $x_jJ^k$ grows when $x_jJ^{\nu_j}$ is not already contained in
the span generated at earlier indices.

\begin{lemma}[Dimension growth for a nilpotent operator]
\label{lem:nilpotent-growth}
Let $W$ be a finite-dimensional complex vector space, let
$J\in\operatorname{End}(W)$ satisfy $J^\ell=0$ with $\ell\ge1$, and let
$z\ge0$ and $x_1,\ldots,x_z\in W$.  We use right-action notation
$xJ^k:=J^k(x)$.  Choose integers $0\le\nu_j<\ell$ for $1\le j\le z$.  For
$0\le j\le z$, put
\[
  H_j=\Span\set{x_iJ^k:1\le i\le j,\ 0\le k<\ell},
  \qquad H_0=\{0\}.
\]
Then every $H_j$ is $J$-invariant.  For each $1\le j\le z$, either
\[
  x_jJ^{\nu_j}\in H_{j-1},
\]
or
\[
  \dim H_j-\dim H_{j-1}\ge\nu_j+1.
\]
\end{lemma}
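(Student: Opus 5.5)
We prove the two assertions in turn. Invariance of $H_j$ is immediate: the spanning set of $H_j$ is closed under right multiplication by $J$ up to the term $x_iJ^\ell=0$. Indeed, $(x_iJ^k)J=x_iJ^{k+1}$, which is either again a spanning vector (when $k+1<\ell$) or zero.

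For the dichotomy, fix $j$ and suppose $x_jJ^{\nu_j}\notin H_{j-1}$. We work in the quotient $\overline W=W/H_{j-1}$. Because $H_{j-1}$ is $J$-invariant, $J$ induces an operator $\overline J$ on $\overline W$ with $\overline J^{\ell}=0$. The space $H_j/H_{j-1}$ is exactly the cyclic $\C[\overline J]$-submodule generated by $\bar x=x_j+H_{j-1}$. Hence
\[
\dim H_j-\dim H_{j-1}=\dim\Span\set{\bar x\,\overline J^k:k\ge0}.
\]
Let $h$ be the least integer with $\bar x\,\overline J^h=0$. Since $\bar x\,\overline J^{\nu_j}\ne0$, we have $h\ge\nu_j+1$. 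It remains to show that $\bar x,\bar x\overline J,\ldots,\bar x\overline J^{h-1}$ are linearly independent, which makes the cyclic span have dimension $h$.

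For the independence we reuse the argument from the proof of \cref{lem:generation}. In a nontrivial relation $\sum_k\beta_k\bar x\overline J^k=0$, pick the least index $k$ with $\beta_k\ne0$. The relation then factors as $\bar x\overline J^k(\beta_k I+\overline J q(\overline J))=0$. Because $\overline J$ is nilpotent, the operator $\beta_k I+\overline Jq(\overline J)$ is invertible, so $\bar x\overline J^k=0$ with $k<h$. This contradicts the minimality of $h$. Therefore $\dim H_j-\dim H_{j-1}=h\ge\nu_j+1$.

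No step is a genuine obstacle. The only points needing care are that the quotient construction requires $H_{j-1}$ to be $J$-invariant, which the first assertion supplies, and that nilpotency passes to the induced operator $\overline J$.
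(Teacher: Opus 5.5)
Your proof is correct and follows the paper's argument: pass to $W/H_{j-1}$, then use the least-index factorization $\bar x\,\overline J^{k}\bigl(\beta I+\overline J q(\overline J)\bigr)=0$, with $\beta I+\overline J q(\overline J)$ invertible, to get linear independence of the iterates of $\bar x$. The only cosmetic difference is that you compute $\dim H_j-\dim H_{j-1}$ exactly as the nilpotency index $h\ge\nu_j+1$ of $\bar x$. The paper instead just exhibits the $\nu_j+1$ independent vectors $\bar x,\bar x\,\overline J,\dots,\bar x\,\overline J^{\nu_j}$ inside $H_j/H_{j-1}$.
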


\begin{proof}
The definition and $J^\ell=0$ show that every $H_j$ is $J$-invariant.
Fix $j$ and suppose $x_jJ^{\nu_j}\notin H_{j-1}$.  Work in the quotient
$W/H_{j-1}$, let $\overline J$ be the operator induced by $J$, and write
$y$ for the image of $x_j$.  Then $y\overline J^{\nu_j}\ne0$, and
$y,y\overline J,\ldots,y\overline J^{\nu_j}$ are linearly independent.
Indeed, in a nontrivial relation choose the least index $k$ having nonzero
coefficient.  There are $\beta\in\C\setminus\{0\}$ and $q\in\C[X]$ such
that the relation has the form
\[
  y\overline J^k
  \bigl(\beta I+\overline Jq(\overline J)\bigr)=0.
\]
Since $\overline J$ is nilpotent,
$\beta I+\overline Jq(\overline J)$ is invertible.  Thus
$y\overline J^k=0$, which would force
$y\overline J^{\nu_j}=0$.
The displayed $\nu_j+1$ independent quotient vectors lie in
$H_j/H_{j-1}$, which proves the dimension inequality.
\end{proof}

For a finite set $Y$ of complex numbers and $d\ge0$, define
\[
  h_d(Y)=
  \sum_{\substack{(k_\lambda)_{\lambda\in Y}\in\mathbb Z_{\ge0}^{Y}\\
                   \sum_{\lambda\in Y}k_\lambda=d}}
       \prod_{\lambda\in Y}\lambda^{k_\lambda}.
\]
Thus $h_d(Y)$ is the complete homogeneous symmetric polynomial of degree
$d$ in the elements of $Y$.  We also set $h_d(Y)=0$ for $d<0$ and use the
empty-product convention $h_0(Y)=1$.
For a formal power series $\varphi(x)$, the notation
$[x^d]\varphi(x)$ denotes the
coefficient of $x^d$.

\Needspace{7\baselineskip}
Lemma~\ref{lem:coefficient} evaluates the scalar sum over the exponents attached to
nonzero eigenvalue projectors.  It converts that sum into coefficients of a
roots-of-unity generating function, whose $m$-step recurrence gives the
required annular cancellation.

\begin{lemma}[Annular coefficient identity]\label{lem:coefficient}
Let $Y_0$ be a finite set and let $Y=Y_0\cup\{1\}$, where $1\notin Y_0$.
For all integers $L\le U$,
\begin{equation}\label{eq:partial-h}
  \sum_{k=L}^{U}h_k(Y_0)=h_U(Y)-h_{L-1}(Y).
\end{equation}
If $Y\subseteq\muRoots_m$, then
\begin{equation}\label{eq:roots-generating-function}
  \sum_{d\ge0}h_d(Y)x^d
  =\frac{\Phi_Y(x)}{1-x^m},
  \qquad
  \Phi_Y(x)=\prod_{\lambda\in\muRoots_m\setminus Y}(1-\lambda x).
\end{equation}
\end{lemma}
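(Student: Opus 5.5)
Both assertions are generating-function identities for complete homogeneous symmetric polynomials, and I will derive them from
\[
  \sum_{d\ge0}h_d(Y)x^d=\prod_{\lambda\in Y}\frac{1}{1-\lambda x},
\]
which holds as an identity of formal power series. To see it, expand each factor as a geometric series $\sum_{k\ge0}\lambda^kx^k$ and collect the coefficient of $x^d$; this is exactly the defining sum over $(k_\lambda)$ with $\sum k_\lambda=d$.

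For \eqref{eq:partial-h}, apply this identity to $Y=Y_0\cup\{1\}$. Because $1\notin Y_0$, the product splits as
\[
  \sum_d h_d(Y)x^d=\frac{1}{1-x}\sum_d h_d(Y_0)x^d.
\]
Comparing coefficients gives $h_U(Y)=\sum_{k\le U}h_k(Y_0)$ for every integer $U$. The conventions $h_d=0$ for $d<0$ make this valid for negative $U$ as well, where both sides vanish. Subtracting the same identity at $L-1$ from the one at $U$ yields $\sum_{k=L}^{U}h_k(Y_0)$. The hypothesis $L\le U$ is only needed so that the sum is a genuine range. The argument also works for negative $L$, since the terms with $k<0$ are zero on both sides.

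For \eqref{eq:roots-generating-function}, I use the factorization $\prod_{\lambda\in\muRoots_m}(1-\lambda x)=1-x^m$. It follows from $X^m-1=\prod_{\lambda}(X-\lambda)$ by substituting $X=1/x$ and multiplying by $x^m$; equivalently, it uses the fact that $\muRoots_m$ is closed under inversion. Then, for $Y\subseteq\muRoots_m$,
\[
  \prod_{\lambda\in Y}\frac{1}{1-\lambda x}
  =\frac{\prod_{\lambda\in\muRoots_m\setminus Y}(1-\lambda x)}{\prod_{\lambda\in\muRoots_m}(1-\lambda x)}
  =\frac{\Phi_Y(x)}{1-x^m}.
\]
All manipulations take place in $\C[[x]]$, where every factor $1-\lambda x$ is invertible.

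There is no genuine obstacle. The only points needing care are the boundary conventions for negative indices in the first identity and stating the root-of-unity factorization correctly.
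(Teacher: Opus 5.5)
Your proposal is correct and follows essentially the same route as the paper. Both obtain $h_d(Y)=\sum_{k\le d}h_k(Y_0)$ by adjoining the variable $1$ (your factor $1/(1-x)$), subtract using the negative-index convention, and derive the second identity from $\sum_{d\ge0}h_d(Y)x^d=\prod_{\lambda\in Y}(1-\lambda x)^{-1}$ together with $\prod_{\lambda\in\muRoots_m}(1-\lambda x)=1-x^m$.
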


\begin{proof}
Adding a variable equal to $1$ gives
$h_d(Y)=\sum_{k=0}^dh_k(Y_0)$ for $d\ge0$; subtraction, together with the
negative-index convention, proves \eqref{eq:partial-h}.  Moreover,
\[
  \sum_{d\ge0}h_d(Y)x^d=\prod_{\lambda\in Y}(1-\lambda x)^{-1}.
\]
Since $\prod_{\lambda\in\muRoots_m}(1-\lambda x)=1-x^m$, this is
\eqref{eq:roots-generating-function}.
\end{proof}

Lemma~\ref{lem:spectral-control-deletion} formalizes the control-deletion step used
at two different points of the annular proof.  Its conclusion is membership
in the preceding filtered space; it does not assert that the shortened row
equals the original row.

\begin{lemma}[Spectral control deletion]\label{lem:spectral-control-deletion}
Let $r\ge1$, let $f\in\Sigma$, let $0\le s\le r-1$, and choose
$g_1,\ldots,g_s\in\Sigma\setminus\{a\}$.  For polynomials
$q_0,\ldots,q_{s+1}\in\C[X]$ and $\lambda\in\muRoots_m$, the row
\[
  u_fq_0(T)P_{g_1}q_1(T)\cdots
  P_{g_s}q_s(T)E_\lambda q_{s+1}(T)
\]
belongs to $\calM_{r-1}(f)$.
\end{lemma}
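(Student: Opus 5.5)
The plan is to show that the row lies in $\calL_{r-1}\cap R$, which by \cref{lem:compatible-splitting} equals $\calM_{r-1}$. Membership in $R$ is immediate from the structure of $E_\lambda$. Membership in $\calL_{r-1}$ will come from writing each polynomial in $T$ as a combination of powers of $a$.

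First I handle the polynomials $q_i(T)$. Each $q_i(T)$ is a linear combination of powers $T^j=P_{a^j}$. By \cref{lem:projectors}, $E_\lambda$ is itself a polynomial in $T$, so $E_\lambda q_{s+1}(T)$ is also a polynomial in $T$. Expanding by multilinearity, the row becomes a linear combination of rows of the form
\[
  u_fT^{j_0}P_{g_1}T^{j_1}\cdots P_{g_s}T^{j_s}
  =u_fP_{a^{j_0}g_1a^{j_1}\cdots g_sa^{j_s}}.
\]
Each indexing word contains exactly $s\le r-1$ letters different from $a$. Hence every such row lies in $\calL_s$, and by the inclusions \eqref{eq:filtration-inclusions} it lies in $\calL_{r-1}$. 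So the whole row lies in $\calL_{r-1}$.

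Next I show the row lies in $R$. Write it as $yE_\lambda q_{s+1}(T)$ with
\[
  y=u_fq_0(T)P_{g_1}\cdots P_{g_s}q_s(T).
\]
By definition, $yE_\lambda\in R_\lambda\subseteq R$. Since $R=VT^\ell$ is $T$-invariant, applying $q_{s+1}(T)$ keeps the row in $R$.

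Combining the two memberships, the row lies in $\calL_{r-1}\cap R=\calM_{r-1}$ by \eqref{eq:filtered-splitting}. The only point requiring care is absorbing $E_\lambda$ into the control-free tail, which is exactly what the polynomial realization in \cref{lem:projectors} provides. Nothing deeper is needed, so I expect no real obstacle.
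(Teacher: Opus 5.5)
Your proof is correct and follows essentially the same route as the paper's. You expand the polynomials (with $E_\lambda$ as a polynomial in $T$) to get membership in $\calL_{r-1}(f)$, show membership in $R$, and conclude via $\calL_{r-1}\cap R=\calM_{r-1}$; the only cosmetic difference is that the paper gets $R$-membership from $E_\lambda q_{s+1}(T)=q_{s+1}(\lambda)E_\lambda$ (so the row lies in $\im E_\lambda$), whereas you use the $T$-invariance of $R$.
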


\begin{proof}
Expand the polynomials and the projector $E_\lambda$, which is a polynomial
in $T$ by \cref{lem:projectors}.  Every resulting row contains at most
$s\le r-1$ controls, so it lies in $\calL_{r-1}(f)$.  The displayed row is
also fixed by right multiplication by $E_\lambda$.  Indeed,
$E_\lambda q_{s+1}(T)=q_{s+1}(\lambda)E_\lambda$ and
$E_\lambda^2=E_\lambda$.  Hence the row lies in
$\im E_\lambda\subseteq R$.  Therefore it belongs
to $\calL_{r-1}(f)\cap R=\calM_{r-1}(f)$ by
\cref{lem:compatible-splitting}.
\end{proof}

With
\cref{lem:nilpotent-growth,lem:coefficient,lem:spectral-control-deletion}
in place, we prove the
proposition by
separating the projector sequences into two classes.  Repeated nonzero
eigenvalues give membership in $\calM_{r-1}$ term by term.  For sequences
without repetition, the vectors in
\eqref{eq:nilpotent-position-factorization} attached to the $N$-positions either
give another termwise reduction or satisfy the dimension increments in
\cref{lem:nilpotent-growth}.  For every remaining family, the sum of its
scalar coefficients is zero.

\begin{proof}[Proof of \cref{prop:annular-descent}]
\smallskip
\noindent\emph{Spectral expansion and zero terms.}
Insert \eqref{eq:projector-sum} at the $r+1$ spectral positions in every
summand of \eqref{eq:omega}.  Put
\[
  \mathcal P=\{\star\}\cup\muRoots_m.
\]
For $\alpha\in\mathcal P$, define $E_\alpha=E_N$ if $\alpha=\star$ and
$E_\alpha=E_\lambda$ if $\alpha=\lambda\in\muRoots_m$.  For
$\boldsymbol\alpha=(\alpha_0,\ldots,\alpha_r)\in\mathcal P^{r+1}$ and
$\mathbf h\in\mathcal A_r$, write
\begin{equation}\label{eq:spectral-path}
  \Pi_{\boldsymbol\alpha}(\mathbf h)=
  u_fE_{\alpha_0}T^{h_0}G_1E_{\alpha_1}T^{h_1}\cdots
  G_rE_{\alpha_r}T^{h_r+\ell}.
\end{equation}
Here $E_{\alpha_i}$ occupies spectral position $i$, whereas the control
$G_j$ joins positions $j-1$ and $j$.  Thus the segment between spectral
positions $p<q$ contains precisely the controls
$G_{p+1},\ldots,G_q$.  We call $i$ an \emph{$N$-position} if
$\alpha_i=\star$, and a \emph{nonzero-eigenvalue position} if
$\alpha_i\in\muRoots_m$.
Since all these projectors commute with $T$,
\begin{equation}\label{eq:path-expansion}
  \Omega_r=
  \sum_{\boldsymbol\alpha\in\mathcal P^{r+1}}
  \sum_{\mathbf h\in\mathcal A_r}
  \Pi_{\boldsymbol\alpha}(\mathbf h).
\end{equation}
All sums here are finite.  We call $\boldsymbol\alpha$ a
\emph{projector sequence} and one summand
$\Pi_{\boldsymbol\alpha}(\mathbf h)$ a \emph{spectral term}.  We work
modulo $\calM_{r-1}(f)$.  Every spectral projector preserves $\calZ$
because it is a polynomial in $T$, and every transition matrix preserves
$\calZ$.  Hence a spectral term whose projector sequence contains $1$ is
zero, since
$V_1=\Span_{\C}\set{[C]}$ has trivial intersection with $\calZ$.  A term with
$\alpha_r=\star$ is killed by the final $T^\ell$.  Likewise, a term with
$\alpha_i=\star$ and $h_i\ge\ell$ is zero.

For any spectral term not already proved to be zero, let
$i_1<\cdots<i_z<r$ be precisely the positions at which
$\alpha_i=\star$, and fix
$\boldsymbol\nu=(\nu_1,\ldots,\nu_z)$ with
$\nu_j=h_{i_j}\in\{0,\ldots,\ell-1\}$.  A \emph{spectral family}
is the set of spectral terms obtained by fixing
$\boldsymbol\alpha$ and $\boldsymbol\nu$ and varying the exponents
at the nonzero-eigenvalue positions subject to
$\mathbf h\in\mathcal A_r$.  These families form a finite partition of the
spectral terms not already proved to be zero.

\smallskip
\noindent\emph{Repeated nonzero eigenvalues.}
If a nonzero eigenvalue
$\lambda$ occurs at spectral positions $p<q$, let $\rho$ be the preceding
row factor, let $A_{p,q}$ be the intervening operator, and let $Z$ be the
following operator.  For a fixed exponent tuple, the corresponding term
then factors as
\[
  \rho E_\lambda A_{p,q}E_\lambda Z.
\]
Since $V_\lambda$ is one-dimensional, there is $\eta\in\C$ such that
\[
  E_\lambda A_{p,q}E_\lambda=\eta E_\lambda.
\]
If $\eta=0$, the spectral term is zero; if $\eta\ne0$, the term equals
$\eta\rho E_\lambda Z$.  The omitted intervening factor contains
exactly the $q-p\ge1$ controls between the two spectral positions.
The shortened expression still has a nonzero-eigenvalue projector as its
rightmost spectral projector.  It has at most $r-1$ controls, so
\cref{lem:spectral-control-deletion} puts it in $\calM_{r-1}$.  This is a
termwise reduction; no summation over exponents is being used.

\smallskip
\noindent\emph{Nilpotent positions and their dimension charge.}
Consider a spectral family to which the preceding termwise reductions do
not apply.  The eigenvalues occurring at its nonzero-eigenvalue positions
are pairwise distinct and different from $1$, and
$\alpha_r\in\muRoots_m$.  Put $s=\sum_j\nu_j$.  For a fixed
exponent tuple, factor the spectral term at its $j$th $N$-position as
\begin{equation}\label{eq:nilpotent-position-factorization}
  \Pi_{\boldsymbol\alpha}(\mathbf h)
  =x_j(\mathbf h)J^{\nu_j}B_j(\mathbf h).
\end{equation}
Here $x_j(\mathbf h)\in N$ is the product of all factors through the
projector $E_N$ at position $i_j$; it excludes the following factor
$T^{\nu_j}$.  The operator $B_j(\mathbf h)$ is the product of all
remaining factors after $T^{\nu_j}$.

For each $j$, let $\widetilde x_j$ be the same prefix with every free
exponent at a nonzero-eigenvalue position $q<i_j$ set equal to zero.  Since
$E_\lambda T^h=\lambda^hE_\lambda$, each such preceding exponent contributes
only the nonzero scalar $\lambda^h$.  The positions of all controls and
projectors, the exponents at the $N$-positions, and the resulting row vector
up to a nonzero scalar in $N$ are unchanged.  Consequently,
\begin{equation}\label{eq:nilpotent-direction-rescaling}
  x_j(\mathbf h)=c_j(\mathbf h)\widetilde x_j,
  \qquad
  c_j(\mathbf h)=
  \prod_{\substack{q<i_j\\ \alpha_q\in\muRoots_m}}
  \alpha_q^{h_q}\ne0.
\end{equation}
If there is no preceding nonzero-eigenvalue position, this product is empty
and equals $1$.  For $0\le j\le z$, define
\[
  H_j(\mathbf h)=
  \Span\set{x_i(\mathbf h)J^k:1\le i\le j,\ 0\le k<\ell},
  \qquad H_0(\mathbf h)=\{0\}.
\]
Because every scalar in \eqref{eq:nilpotent-direction-rescaling} is nonzero,
rescaling the generating rows does not change their span.  Hence
\begin{equation}\label{eq:nilpotent-span-rescaling}
  H_j(\mathbf h)=\widetilde H_j
  :=\Span\set{\widetilde x_iJ^k:1\le i\le j,\ 0\le k<\ell}.
\end{equation}
Thus this subspace is independent of the free exponents; we write it simply
as $H_j$.  We apply \cref{lem:nilpotent-growth} with $W=N$ and
$J=T|_N$.  Both alternatives in that lemma are therefore independent of
the free exponents at the nonzero-eigenvalue positions.
For the following termwise reduction, fix an exponent tuple and suppress
the argument $\mathbf h$ in $x_j(\mathbf h)$ and $B_j(\mathbf h)$.

Suppose the first alternative in \cref{lem:nilpotent-growth} occurs, and
write
\begin{equation}\label{eq:nilpotent-dependence}
  x_jJ^{\nu_j}
  =\sum_{i<j}\sum_{0\le k<\ell}\gamma_{i,k}x_iJ^k.
\end{equation}
The coefficients $\gamma_{i,k}$ may depend on the free exponents at the
nonzero-eigenvalue positions.  This causes no difficulty: the following
reduction is made separately for each exponent tuple, and only membership
of each resulting row in $\calM_{r-1}$ is used.
Multiplying \eqref{eq:nilpotent-dependence} by $B_j$ gives
\[
  x_jJ^{\nu_j}B_j
  =\sum_{i<j}\sum_{0\le k<\ell}\gamma_{i,k}x_iJ^kB_j.
\]
In the term indexed by $i$, all $i_j-i_i\ge1$ controls between the two
$N$-positions have been removed.  Moreover,
$x_iJ^k=x_iT^k$ because $x_i\in N$, and every remaining projector is a
polynomial in $T$.  The rightmost projector remains
$E_{\alpha_r}$ with $\alpha_r\in\muRoots_m$.  Each row on the right has at
most $r-1$ controls, so \cref{lem:spectral-control-deletion} places it in
$\calM_{r-1}$.  When $j=1$,
membership in $H_0=\{0\}$ simply says that the original term is zero.
Hence a spectral term not placed in $\calM_{r-1}$ by these reductions must
take the second alternative of \cref{lem:nilpotent-growth} at every
$N$-position.  Therefore
\begin{align}
  z+s
  &=\sum_{j=1}^z(1+\nu_j) \notag\\
  &\le\sum_{j=1}^z(\dim H_j-\dim H_{j-1})
    =\dim H_z\le\dim N=t. \label{eq:nilpotent-dimension-bound}
\end{align}
Also $z\le r$ and $1+\nu_j\le\ell$, so
\begin{equation}\label{eq:nilpotent-kappa-bound}
  z+s\le\ell r,
  \qquad
  z+s\le\kappa_r.
\end{equation}

\smallskip
\noindent\emph{Scalar annular cancellation.}
It remains to sum the scalar coefficient of a spectral family not covered
by the termwise reductions.  Fix one such family.  By
\cref{eq:nilpotent-direction-rescaling,eq:nilpotent-span-rescaling},
applicability of those reductions is independent of the still-free
exponents at the nonzero-eigenvalue positions.  Let
\[
  I_0=\set{i:\alpha_i\in\muRoots_m},
  \qquad \alpha_i=\lambda_i\quad(i\in I_0).
\]
Since $\alpha_r\in\muRoots_m$, we have $r\in I_0$.  Repeatedly using
$E_{\lambda_i}T^{h_i}=\lambda_i^{h_i}E_{\lambda_i}$ gives
\begin{equation}\label{eq:path-monomial}
  \Pi_{\boldsymbol\alpha}(\mathbf h)
  =\lambda_r^\ell
   \left(\prod_{i\in I_0}\lambda_i^{h_i}\right)
   v_{\boldsymbol\alpha,\boldsymbol\nu},
\end{equation}
where the row vector $v_{\boldsymbol\alpha,\boldsymbol\nu}$ is
independent of all remaining exponents.  If it is zero, the family
vanishes.  Otherwise
the eigenvalues $\lambda_i$ are pairwise distinct, so $Y_0$ below is a set
rather than a multiset.  Summing the monomial
in \eqref{eq:path-monomial} over nonnegative exponents of total exponent $k$
gives exactly $h_k(Y_0)$, where
\[
  Y_0=\set{\lambda_i:i\in I_0},
  \qquad Y=Y_0\cup\{1\}.
\]
There are $r+1-z$ nonzero-eigenvalue positions, hence
\begin{equation}\label{eq:Y-size}
  \abs Y=r+2-z.
\end{equation}
If $\abs Y>m$, such a remaining projector sequence cannot exist.  Otherwise,
in the notation of
\eqref{eq:roots-generating-function},
\begin{equation}\label{eq:complement-degree}
  c_Y:=\deg \Phi_Y=m-r+z-2\ge0.
\end{equation}
After removing the fixed nonzero factor $\lambda_r^\ell$, the annular
scalar coefficient of this spectral family is
\begin{align}
  \sum_{k=D_r^--s}^{D_r^+-s}h_k(Y_0)
  &=h_{D_r^+-s}(Y)-h_{D_r^--1-s}(Y). \label{eq:path-coefficient}
\end{align}
If $s>D_r^+$, the family is empty.  We may therefore set
$d=D_r^+-s\ge0$.  From
\cref{eq:annular-endpoints,eq:nilpotent-kappa-bound,eq:complement-degree},
\begin{equation}\label{eq:degree-gap}
  d-c_Y=\kappa_r+1-(z+s)\ge1.
\end{equation}

The two endpoint regimes now have the same coefficient calculation.  If
$\kappa_r\le r$, then $D_r^-=0$ and $d\le D_r^+\le m-1$; hence both
$h_{D_r^--1-s}(Y)$ and $h_{d-m}(Y)$ vanish.  If $\kappa_r>r$, then
$D_r^+=D_r^-+m-1$, so $D_r^--1-s=d-m$.  Thus in either case
\[
  h_d(Y)-h_{D_r^--1-s}(Y)
  =h_d(Y)-h_{d-m}(Y)
  =[x^d]\Phi_Y(x)=0.
\]
The coefficient identity comes from
\eqref{eq:roots-generating-function}, and its final value is zero because
\eqref{eq:degree-gap} gives $d>c_Y=\deg\Phi_Y$.  The negative-index
convention covers $d-m<0$, including the boundary $\kappa_r=r$.  At the
terminal filtration index $r=n-1$, the inequalities $\ell\ge1$ and
$n-1=m+t-1\ge t$ give $\kappa_r=t$, and hence
$D_r^-=D_r^+=0$.  Any remaining family would therefore have $s=d=0$,
contradicting the same degree gap $d>c_Y\ge0$.

We have exhausted all possibilities: a repeated nonzero eigenvalue or the
condition $x_jJ^{\nu_j}\in H_{j-1}$ gives membership in
$\calM_{r-1}$ term by term, whereas every remaining spectral family has
zero scalar coefficient.  Summing these finitely many families proves
\eqref{eq:annular-membership}.

\smallskip
\noindent\emph{The prefix bound.}
Finally, for a summand indexed by $\mathbf h\in\mathcal A_r$, prepend $f$
to its indexing block word and remove the terminal $a^\ell$.  The resulting
prefix has
length
\begin{equation}\label{eq:prefix-budget}
  1+r+\tau(\mathbf h)\le1+r+D_r^+=m+\kappa_r\le m+t=n.
\end{equation}
\end{proof}

\section{Positive induction and proof of short relative extension}\label{sec:short-extension-proof}

This section converts annular descent into the combinatorial extension
statement.  The proof is by contradiction.  Because every coefficient in
the annular sum is $+1$, nonpositive scalar pairings whose sum is zero must
vanish individually; this propagates orthogonality through the filtration.
Stabilization then removes the control bound from repeated leftmost-control
deletion, and a reset word contradicts the resulting invariance.

This is the positive-level counterpart of Dubuc's circular induction, which
combines the positive-coefficient relation of Lemma~4.4 with the
preimage-cardinality propagation of Lemma~4.5
~\cite[Lemmas~4.4--4.5]{Dubuc1998}.  Its base case uses the one-cluster
averaging identity familiar from B\'eal--Perrin and
Steinberg~\cite[Section~3]{BealPerrin2009}
\cite[Lemma~2]{Steinberg2011}; the additional input is the
annular descent relation modulo $\calM_{r-1}$.

For $S\subseteq C$ and $w\in\Sigma^*$, define
\begin{equation}\label{eq:mu}
  \mu_S(w)=\abs{Sw^{-1}\cap C}=[C]P_w[S]^{\mathsf T}.
\end{equation}
For the running example, $S=\{1,2\}$ and
\[
  x=(\mathtt{bbbaa})a^2=\mathtt{bbbaaaa}
\]
give $\mu_S(x)=3>2$; the displayed prefix $\mathtt{bbbaa}$ has length
$5=n$.  The proof below explains why such a positive jump always exists
with a prefix of length at most $n$.

\begin{proof}[Proof of \cref{thm:short-extension}]
Suppose, toward a contradiction, that a nonempty proper $S\subset C$
satisfies
\begin{equation}\label{eq:no-short-extension}
  \mu_S(wa^\ell)\le\abs S
  \qquad\text{for every }w\text{ with }\abs w\le n.
\end{equation}

We prove simultaneously for every $f\in\Sigma$ that
\begin{equation}\label{eq:orthogonality}
  v[S]^{\mathsf T}=0
  \qquad(v\in\calM_r(f))
\end{equation}
first for $0\le r\le n-2$.

\smallskip
\noindent\emph{Base of the filtration.}
For the base case, use the one-cluster averaging identity
\begin{equation}\label{eq:average-matrix}
  B:=\sum_{j=0}^{m-1}T^{\ell+j}=\one[C].
\end{equation}
Indeed, each row on the left visits every state of $C$ exactly once.  This is
the unnormalized matrix form of the usual uniform average over the
$a$-cycle; omitting the factor $1/m$ preserves the positive coefficient
$+1$ of every summand.
Since $u_f\one=0$ and $B=\one[C]$, we have $u_fB=0$.  On the other hand,
for $0\le j<m$,
\begin{align}
  u_fT^{\ell+j}[S]^{\mathsf T}
  &=\mu_S(fa^{\ell+j})-\mu_S(a^{\ell+j}) \notag\\
  &\le0. \label{eq:base-nonpositive}
\end{align}
Here $fa^{\ell+j}=(fa^j)a^\ell$, and the prefix $fa^j$ preceding the
displayed terminal $a^\ell$ has length $j+1\le m\le n$.  Thus the first
term is at most $\abs S$ by
\eqref{eq:no-short-extension}; the second is exactly $\abs S$ because
$[C]T^k=[C]$ for every $k\ge0$.  The $m$ nonpositive quantities in
\eqref{eq:base-nonpositive} sum to zero.  Therefore each is zero.  Finally,
\[
  \calM_0(f)
  =\Span\set{u_fT^{\ell+j}:j\ge0}
  =\Span\set{u_fT^{\ell+j}:0\le j<m},
\]
where the second equality follows from $T^m|_R=I_R$.  This proves
\eqref{eq:orthogonality} for $r=0$.  If
$f=a$, then $u_f=0$ and every assertion is vacuous.

\smallskip
\noindent\emph{Annular induction.}
We now carry out the induction step.  Fix $1\le r\le n-2$ and assume
\eqref{eq:orthogonality} is known at filtration index $r-1$ for every
letter $f\in\Sigma$.  We first record the \emph{leftmost-control deletion
identity} used below.
Let $x$ be a word ending in $a^\ell$ with at most $r$ letters different
from $a$.  If $x$ has a control letter, write
\[
  x=a^igz,
  \qquad g\ne a,
\]
where $g$ is the leftmost one.  Since $[C]T^i=[C]$,
\begin{align}
  \mu_S(x)-\mu_S(z)
  &=[C](P_g-I)P_z[S]^{\mathsf T} \\
  &=u_gP_z[S]^{\mathsf T}=0. \label{eq:leftmost-control-deletion}
\end{align}
The last equality holds because $z$ ends in $a^\ell$, has at most $r-1$
controls, and therefore $u_gP_z\in\calM_{r-1}(g)$.  Repeating the deletion
until a pure power of $a$ remains gives
\begin{equation}\label{eq:control-deletion-conclusion}
  \mu_S(x)=\abs S
\end{equation}
for every such $x$.

Now fix $f\in\Sigma$ and
$g_1,\ldots,g_r\in\Sigma\setminus\{a\}$.  Every word $x$ indexing a term
of the annular sum \eqref{eq:omega} ends in $a^\ell$ and has $r$ controls.  By
\eqref{eq:prefix-budget}, $fx=wa^\ell$ with $\abs w\le n$.  Hence
\begin{equation}\label{eq:annular-nonpositive}
  u_fP_x[S]^{\mathsf T}
   =\mu_S(fx)-\mu_S(x)\le0
\end{equation}
by \cref{eq:no-short-extension,eq:control-deletion-conclusion}.  Yet
$\Omega_r\in\calM_{r-1}(f)$ by
\cref{prop:annular-descent}.  The induction hypothesis therefore gives
$\Omega_r[S]^{\mathsf T}=0$.  By the definition of $\Omega_r$, this is the
sum of the quantities in \eqref{eq:annular-nonpositive}.  Every coefficient
in the annular sum is $+1$; therefore every individual quantity is zero.

It remains to verify orthogonality for the chosen generators of $\calM_r$
whose represented rows do not already belong to $\calM_{r-1}$.  Choose the
actual-word generating family
$((w_\gamma,x_\gamma))_{\gamma\in\Gamma}$ from \cref{lem:generation}, whose
representing words end in $a^\ell$.  If $w_\gamma$ has fewer than $r$
controls, then its represented row $x_\gamma$ already belongs to
$\calM_{r-1}$.  If $w_\gamma$ has exactly $r$ controls, let
$\mathbf h_\gamma=(h_0,\ldots,h_r)$ be its unique exponent tuple in the
block form \eqref{eq:block-form}.  Then
the proof that the total-exponent interval is nonempty also gives
$D_r^+\ge1$ for $1\le r\le n-2$, and
\eqref{eq:generator-total-exponent-bound} says
\begin{equation}\label{eq:generator-below-upper-endpoint}
  \tau(\mathbf h_\gamma)\le D_r^+-1.
\end{equation}
If $\tau(\mathbf h_\gamma)\ge D_r^-$, then $w_\gamma$ indexes an annular
term, so $x_\gamma[S]^{\mathsf T}=0$.  If
$\tau(\mathbf h_\gamma)<D_r^-$, then $D_r^->0$ and necessarily
$\kappa_r>r$.  The interval
$[D_r^--\tau(\mathbf h_\gamma),D_r^+-\tau(\mathbf h_\gamma)]$ consists of
$m$ consecutive positive integers, so it contains a unique
$\varkappa\in m\mathbb Z_{>0}$.  The row $x_\gamma$ belongs to
$\calM_r\subseteq R$, and $T^m|_R=I_R$, so the word
$w_\gamma a^\varkappa$ represents
$x_\gamma T^\varkappa=x_\gamma$.  Its total exponent is
$\tau(\mathbf h_\gamma)+\varkappa\in[D_r^-,D_r^+]$, and its controls are
unchanged; hence it indexes an annular term.  Therefore
$x_\gamma[S]^{\mathsf T}=0$ in this case as well.
The rows $x_\gamma$ span $\calM_r$, so linearity proves
\eqref{eq:orthogonality} at index $r$.  The induction proves it for
$0\le r\le n-2$.

\smallskip
\noindent\emph{Stabilization and unrestricted control deletion.}
We next pass from these finitely many filtration indices to arbitrary words.
By \eqref{eq:gen-total}, after multiplying by $T^\ell$, every $\calM_r(f)$
has an actual-word generating family whose representing words end in
$a^\ell$ and have length at most
\[
  d_r+k_r-1+\ell
  \le(m-1)+t-1+\ell=n+\ell-2.
\]
Because the last $\ell$ letters are $a$, each representing word has at most
$n-2$ controls; hence its represented row belongs to $\calM_{n-2}(f)$.
Thus
$\calM_r(f)\subseteq\calM_{n-2}(f)$ for every $r$.  If $r\ge n-2$, the
reverse inclusion follows from \eqref{eq:filtration-inclusions}.  Hence
\begin{equation}\label{eq:stabilization}
  \calM_r(f)=\calM_{n-2}(f)
  \qquad(r\ge n-2).
\end{equation}
The leftmost-control deletion argument now applies without a bound on the
number of controls.  If $x$ has no controls, then it is a power of $a$ and
$\mu_S(x)=\abs S$ because $a$ permutes $C$.  Otherwise, in the notation
$x=a^igz$ used in
\eqref{eq:leftmost-control-deletion}, if $x$ has $c$ controls, then $z$ has
$c-1$ controls and hence $u_gP_z\in\calM_{c-1}(g)$.  If $c-1\le n-2$, its
pairing with $[S]^{\mathsf T}$ vanishes by \eqref{eq:orthogonality}; if
$c-1>n-2$, the same conclusion follows after applying
\eqref{eq:stabilization}.  Thus
\begin{equation}\label{eq:all-terminal-preserve}
  \mu_S(x)=\abs S
\end{equation}
for every word $x$ ending in $a^\ell$, with no restriction on its length.

\smallskip
\noindent\emph{Contradiction from synchronization.}
The synchronizing hypothesis supplies a reset word $z$; let $q$ be its
reset state.  The state $q\mathbin{\cdot}a^\ell$ lies in $C$, and $a$ acts
on $C$ as an $m$-cycle.  Since $S$ is nonempty, there is therefore an
integer $0\le j<m$ such that
$q\mathbin{\cdot}a^{\ell+j}\in S$.  The word
$za^{\ell+j}=(za^j)a^\ell$ ends in $a^\ell$, but
\[
  \mu_S(za^{\ell+j})=m>\abs S,
\]
contradicting \eqref{eq:all-terminal-preserve}.  This proves
\cref{thm:short-extension}.
\end{proof}

\paragraph{The running example: a short relative extension at the prefix
bound.}
In \cref{ex:running}, take $S=\{1,2\}$ and
$w=\mathtt{bbbaa}$.  For
\[
  x=wa^2=\mathtt{bbbaaaa},
\]
the three cycle states $0,1,2$ are sent to $1,2,2$, respectively.  Hence
\[
  Sx^{-1}\cap C=C,
\]
so this step increases the relative-preimage cardinality from $2$ to $3$.
The chosen prefix has length $\abs w=5=n$, which is the allowed length in
\cref{thm:short-extension}.  Notice
that the same string $\mathtt{bbbaa}$ reappears in \cref{sec:singleton} as the
whole relative extending word $w_0$ from a singleton, where it already
contains the prescribed terminal $a^2$.  Here the same string instead plays
the role of the bounded prefix $w$, and a further terminal $a^2$ occurs in
$x=wa^2$.

\section{The singleton start}\label{sec:singleton}

To initiate the relative-extension chain efficiently, we need one extension
from a singleton.  The centered-column argument of Steinberg
~\cite[Claim~2]{Steinberg2011PrimeCycle} uses neither primality nor cyclotomic
irreducibility.  In this section transition matrices act on \emph{column}
vectors by left multiplication.  For nonempty $S\subseteq C$, set
\begin{equation}\label{eq:gamma}
  \gamma_S=[S]^{\mathsf T}-\frac{\abs S}{m}[Q]^{\mathsf T}.
\end{equation}
Equation \eqref{eq:counting} gives
\begin{equation}\label{eq:gamma-count}
  [C]P_w\gamma_S=\abs{Sw^{-1}\cap C}-\abs S.
\end{equation}

\Needspace{7\baselineskip}
Thus strict growth from a singleton is equivalent to finding a positive
pairing with one of these centered columns.  Proposition~\ref{prop:singleton-start} obtains
such a pairing from the strict growth of a sequence of invariant subspaces.
Its bound $t+1+\ell$ is precisely the initial cost needed in the final reset
accounting.

\begin{proposition}[Singleton start]\label{prop:singleton-start}
Let $A=\langle Q,\Sigma,\delta\rangle$ be a synchronizing one-cluster
automaton with respect to $a\in\Sigma$.  Let $C$ be its $a$-cycle, let
$\ell$ be its level with respect to $a$, assume $m=\abs C\ge2$ and
$\ell\ge1$, and put $t=\abs Q-m$.  There exist
$q\in C$ and a word $w_0\in\Sigma^*$ such that
\begin{equation}\label{eq:singleton-start}
  \abs{w_0}\le t+1+\ell,
  \qquad
  \abs{\{q\}w_0^{-1}\cap C}>1.
\end{equation}
\end{proposition}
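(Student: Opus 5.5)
The plan is to argue by contradiction, using the counting identity \eqref{eq:gamma-count} only for words of the special form $w_0=va^\ell$ with $\abs v\le t+1$; such words automatically satisfy $\abs{w_0}\le t+1+\ell$. So suppose that for every $q\in C$ and every such $v$ we have $\abs{\{q\}(va^\ell)^{-1}\cap C}\le1$. Since $Q\mathbin{\cdot}a^\ell=C$, every state of $C$ is sent by $va^\ell$ into $C$, and so
\[
  \sum_{q\in C}\abs{\{q\}(va^\ell)^{-1}\cap C}=\abs C=m .
\]
This sum has $m$ terms, each at most $1$, so each term equals $1$. In other words, $va^\ell$ restricts to a permutation of $C$. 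In matrix language, $[C]P_vT^\ell=[C]$ for every $\abs v\le t+1$.

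Next I translate this into a statement about row spaces. By \cref{lem:fitting}, $T^\ell$ kills $N$ and is invertible on $R$, and $[C]T^\ell=[C]$. Hence the identity $[C]P_vT^\ell=[C]$ says that the $R$-component of $[C]P_v$ is exactly $[C]$. Therefore
\[
  [C]P_v\in\C[C]\oplus N,\qquad \abs v\le t+1,
\]
and the right-hand space has dimension $t+1$ by \cref{lem:fitting}. Now put $U_k=\Span\set{[C]P_v:\abs v\le k}$. These spaces increase with $k$, start from $\dim U_0=1$, and satisfy $U_{k+1}=U_k+\sum_{g\in\Sigma}U_kP_g$. From this recursion, $U_k=U_{k+1}$ forces $U_j=U_k$ for all $j\ge k$.

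Since $U_{t+1}\subseteq\C[C]\oplus N$, we get $\dim U_{t+1}\le t+1$. A strictly increasing chain $U_0\subsetneq U_1\subsetneq\cdots\subsetneq U_{t+1}$ would instead have dimension at least $t+2$. So the chain stops growing at some $k\le t$. Consequently every row $[C]P_v$, for arbitrary $v$, lies in $\C[C]\oplus N$. Because $R$-components are unique, this gives $[C]P_vT^\ell=[C]$ for all words $v$.

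Finally I invoke synchronization. Let $z$ be a reset word with reset state $p$. Then $[C]P_z=m\,e_p$, so $[C]P_zT^\ell=m\,e_{p\cdot a^\ell}$. This is not $[C]$, since $m\ge2$, which is the desired contradiction. Hence some $q\in C$ and some $w_0=va^\ell$ with $\abs{w_0}\le t+1+\ell$ satisfy \eqref{eq:singleton-start}.

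The main point needing care is the dimension count. The $a^\ell$ suffix is what confines the rows to $\C[C]\oplus N$, a space of dimension $t+1$ rather than $n$, and this is what yields the bound $t+1$ on $\abs v$ instead of $n-1$. I also need to check that the stabilization index $k\le t$ lies within the range $\abs v\le t+1$ where the contradiction hypothesis was actually used.
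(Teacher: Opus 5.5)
Your proof is correct. It is essentially the transpose of the paper's argument rather than the same argument.

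\textbf{How the paper argues.} It works with the centered columns $z_q=T^\ell\gamma_{\{q\}}$. These span an $(m-1)$-dimensional subspace $W$ of the hyperplane $H_C=\ker[C]$. The paper shows that the left orbit $W_j=\Span\{P_uz_q:\abs u\le j\}$ must leave $H_C$ by step $j_0\le t+1$. A zero-sum sign argument on $\beta_q=[C]P_uz_q$ then picks out a $q$ with positive growth.

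\textbf{How you argue, and why it is dual.} You follow the single row orbit $U_k$ of $[C]$, and you use the Fitting decomposition to identify the confining space $\C[C]\oplus N$. The two arguments correspond exactly. A row $x$ annihilates $W$ if and only if $xT^\ell$ is constant on $C$, that is, $xT^\ell\in\C[C]$, that is, $x\in\C[C]\oplus N$. Hence $W_j\subseteq H_C$ is equivalent to $U_j\subseteq\C[C]\oplus N$. The paper's count runs from dimension $m-1$ up to $n-1$, yours from $1$ up to $t+1$, and both allow at most $t$ strict steps. Your opening pigeonhole step (all preimage counts are at most $1$ and they sum to $m$) plays the role of the paper's ``$\sum_q\beta_q=0$, so some $\beta_q>0$''.

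\textbf{What each version buys.} Yours uses one cyclic orbit, needs no centered vectors, and reuses \cref{lem:fitting} directly. The paper's version is direct rather than by contradiction. It also never needs the nilpotent summand $N$; only $T^\ell$ and the counting identity are used.

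\textbf{One step to tighten.} Membership $[C]P_v\in\C[C]\oplus N$ only says that the $R$-component of $[C]P_v$ is $\alpha_v[C]$ for some scalar $\alpha_v$. Uniqueness of $R$-components alone therefore does not give $[C]P_vT^\ell=[C]$. There are two easy fixes:
\begin{enumerate}[label=(\roman*)]
\item Note that $N\subseteq\calZ$ and $[C]P_v\one=m$, which forces $\alpha_v=1$.
\item Skip that identity and observe at the end that $m\,e_{p\mathbin{\cdot}a^\ell}$ is not a scalar multiple of $[C]$ when $m\ge2$.
\end{enumerate}
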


Unlike \cref{thm:short-extension}, this proposition asserts only the
existence of a suitable singleton $\{q\}$ rather than a statement for every
singleton.  In return it gives the smaller total bound $t+1+\ell$, which is
the economical initialization needed by the final relative-extension
construction.

\begin{proof}
For $q\in C$, put
\[
  z_q=T^\ell\gamma_{\{q\}},
  \qquad
  W=\Span\set{z_q:q\in C},
  \qquad
  H_C=\set{x\in\C^{n\times1}:[C]x=0}.
\]
$H_C=\ker[C]$ is a hyperplane of the column space.  It is distinct from the
row space $\calZ$ in \eqref{eq:row-zero-sum}, and it is not asserted to be
invariant under the letter matrices.
Write $\delta_p$ for the $p$th standard basis column of $\C^C$ and
$\one_C$ for the all-one column in $\C^C$.  For each $q\in C$, let $p_q\in C$ be the
unique state satisfying $p_q\mathbin{\cdot}a^\ell=q$.  If
$z_q|_C\in\C^C$ denotes the
restriction of $z_q$ to its coordinates indexed by $C$, then
\[
  z_q|_C=\delta_{p_q}-\frac1m\one_C.
\]
As $q$ ranges over $C$, so does $p_q$.  Hence these columns span the
$(m-1)$-dimensional zero-sum subspace of $\C^C$, and $\dim W\ge m-1$.  On
the other hand,
\[
  \sum_{q\in C}z_q
  =T^\ell\bigl([C]^{\mathsf T}-[Q]^{\mathsf T}\bigr)=0,
\]
because both $C(a^\ell)^{-1}$ and $Q(a^\ell)^{-1}$ equal $Q$.
Equivalently, \eqref{eq:counting} gives
$T^\ell[C]^{\mathsf T}=T^\ell[Q]^{\mathsf T}=[Q]^{\mathsf T}$.
Thus $\dim W=m-1$.  Also $[C]T^\ell=[C]$ and
$[C]\gamma_{\{q\}}=0$, so $W\subseteq H_C$.

Define
\[
  \Sigma^*W=
  \Span\set{P_uz:u\in\Sigma^*,\ z\in W}.
\]
It is the smallest subspace containing $W$ and invariant under left
multiplication by every letter matrix $P_g$, $g\in\Sigma$.  This subspace is
not contained in $H_C$.  Here the synchronizing hypothesis enters.  Fix
$q\in C$ and let $p\in C$ be the unique
state with $p\mathbin{\cdot}a^\ell=q$.  Starting with any
reset word, append $a^{\ell+j}$ for a suitable $0\le j<m$; the resulting
word $y$ resets all states to $p$.  Then
\[
  [C]P_yz_q=m\left(1-\frac1m\right)=m-1\ne0.
\]
Now form the ascending chain
\[
  W_j=\Span\set{P_uz_q:q\in C,\ \abs u\le j}.
\]
If $W_j=W_{j+1}$, then $W_j$ is invariant under every letter and contains
$W$, so it contains $\Sigma^*W$.  Therefore, as long as
$W_j\subseteq H_C$, the chain grows strictly.  Moreover,
$\Sigma^*W=\bigcup_{j\ge0}W_j$, because every word has finite
length.  Thus the least index
\[
  j_0=\min\set{j\ge0:W_j\nsubseteq H_C}
\]
exists.  Since $\dim W_0=m-1$, $\dim H_C=n-1$, and every inclusion
$W_j\subsetneq W_{j+1}$ is strict for $j<j_0$, we have
$j_0\le(n-1)-(m-1)+1=t+1$.  The displayed spanning family for $W_{j_0}$
and the fact that $H_C$ is a subspace now give a word $u$ and a state $q\in C$
such that
\[
  \abs u\le j_0\le t+1,
  \qquad
  P_uz_q\notin H_C.
\]
Equivalently, $\beta_q:=[C]P_uz_q\ne0$.  For this fixed $u$, all
$\beta_q$ are real and
\[
  \sum_{q\in C}\beta_q=[C]P_u\sum_{q\in C}z_q=0.
\]
Thus at least one of them is positive.  For that $q$, \eqref{eq:gamma-count}
gives
\[
  0<\beta_q
   =\abs{\{q\}(a^\ell)^{-1}u^{-1}\cap C}-1.
\]
Taking $w_0=ua^\ell$ gives
$\abs{w_0}=\abs u+\ell\le t+1+\ell$ and proves
\eqref{eq:singleton-start}.
\end{proof}

\paragraph{The running example: the centered singleton step.}
Take $u=\mathtt{bbb}$ and $q=0$.  Then
$w_0=ua^2=\mathtt{bbbaa}$, and the three cycle states $0,1,2$ are sent by
$w_0$ to $2,0,0$.  Thus
\[
  \{0\}w_0^{-1}\cap C=\{1,2\}.
\]
For the singleton targets $q=0,1,2$, respectively, the centered values
$\beta_q$ are $1,-1,0$; their zero sum and positive entry are visible
without linear algebra.  Moreover,
$\abs{w_0}=5=t+1+\ell$, so this chosen word attains the length allowed by
\cref{prop:singleton-start}.

\section{Reset accounting and proof of the main theorem}\label{sec:reset-proof}

We now combine the singleton start with at most $m-2$ short relative
extensions,
keeping track of the reversed preimage order and of the total word
length.

\begin{proof}[Proof of \cref{thm:reset-bound}]
If $m=1$, then $C$ is a singleton and
$Q\mathbin{\cdot}a^\ell=C$, so $a^\ell$ is a reset word.  Thus
$\rt(A)\le\ell=(m-1)(n-1)+m\ell$.  The inequality
$\ell\le(n-1)^2$ follows from \eqref{eq:level-le-t}; when $n=1$, both sides
are zero, and when $n\ge2$, we have $\ell\le n-1\le(n-1)^2$.

Suppose $m\ge2$.  If $\ell=0$, then $C=Q$, $m=n$, and Dubuc's theorem gives
$\rt(A)\le(n-1)^2=(m-1)(n-1)+m\ell$
~\cite[Proposition~4.6]{Dubuc1998}.  It remains to consider
$m\ge2$ and $\ell\ge1$.

Choose $q$ and $w_0$ as in
\cref{prop:singleton-start}, and let
\[
  S_0=\{q\}w_0^{-1}\cap C.
\]
The update $S\mapsto Su^{-1}\cap C$ and the left-prepending of each new
factor are the same as in Volkov's algorithm \textsc{RelativeExtension}
~\cite[Section~3.4]{Volkov2022Survey}.  The quantitative inputs differ:
Volkov's generic pseudocode starts from an arbitrary singleton and finally
uses $a^t$.  Here the initial word is the shorter singleton word $w_0$;
every later factor has the Kisielewicz--Kowalski--Szyku\l a form
$v_i a^\ell$; and the minimal power $a^\ell$ maps $Q$ onto $C$.  The factors
accumulate on the left because
full preimages satisfy
$(Su^{-1})v^{-1}=S(vu)^{-1}$; the reversed product order is an algebraic
feature of preimages, not a different notion of synchronization.
Thus $\abs{S_0}\ge2$.  Whenever $S_i\ne C$, apply
\cref{thm:short-extension} to choose
\[
  x_{i+1}=v_{i+1}a^\ell,
  \qquad \abs{x_{i+1}}\le n+\ell,
\]
and put
\[
  S_{i+1}=S_ix_{i+1}^{-1}\cap C.
\]
Continue until $S_k=C$.  The cardinality increases strictly, so
$0\le k\le m-2$; the case $k=0$ means that $S_0=C$ already.  Put
$X_0=\varepsilon$ and $X_i=x_i x_{i-1}\cdots x_1$ for $i\ge1$.  We claim
that
\[
  S_i\subseteq
  \{q\}(X_iw_0)^{-1}
  \qquad(0\le i\le k).
\]
For $i=0$, we have $X_0=\varepsilon$, so the claimed inclusion is precisely
$S_0\subseteq\{q\}w_0^{-1}$, which follows from the definition of $S_0$.
If it
holds at $i$, then
\begin{align*}
  S_{i+1}
  &\subseteq S_ix_{i+1}^{-1}\\
  &\subseteq
    \bigl(\{q\}(X_iw_0)^{-1}\bigr)x_{i+1}^{-1}\\
  &=\{q\}(x_{i+1}X_iw_0)^{-1}
   =\{q\}(X_{i+1}w_0)^{-1},
\end{align*}
which proves the claim by induction.  Since $S_k=C$, this gives
\[
  C\subseteq
  \{q\}(X_k w_0)^{-1}.
\]
Equivalently,
\[
  C\mathbin{\cdot}(X_k w_0)=\{q\}.
\]
Since $Q\mathbin{\cdot}a^\ell=C$, the word
\begin{equation}\label{eq:final-reset-word}
  a^\ell X_k w_0
\end{equation}
is a reset word.  Its length is at most
\begin{align}
  \abs{a^\ell X_k w_0}
  &\le \ell+k(n+\ell)+(t+1+\ell) \notag\\
  &\le \ell+(m-2)(n+\ell)+(t+1+\ell) \notag\\
  &=(m-1)(n-1)+m\ell. \label{eq:accounting}
\end{align}
Finally, by \eqref{eq:level-le-t},
\begin{align}
  (n-1)^2-\bigl((m-1)(n-1)+m\ell\bigr)
  &=m(t-\ell)+t(t-1)\ge0. \label{eq:cerny-slack}
\end{align}
Here $t-\ell\ge0$ by \eqref{eq:level-le-t}, and $t\ge1$ because
$\ell\ge1$.  This proves \cref{thm:reset-bound} in the remaining case and
completes the proof.
\end{proof}

\paragraph{The running example: the complete relative-extension chain.}
The singleton step in \cref{prop:singleton-start} gives
$q=0$, $w_0=\mathtt{bbbaa}$, and
$S_0=\{1,2\}$.  The short relative extension from
\cref{sec:short-extension-proof} is
\[
  x_1=\mathtt{bbbaaaa},
  \qquad S_0x_1^{-1}\cap C=C.
\]
Equivalently,
\[
  Q\xrightarrow{\ a^2\ }C
   \xrightarrow{\ x_1\ }\{1,2\}
   \xrightarrow{\ w_0\ }\{0\}.
\]
Thus the word supplied by the proof is
\[
  a^2x_1w_0=\mathtt{aabbbaaaabbbaa}.
\]
It has length $2+7+5=14$, equals the right-hand side of the refined estimate
$(m-1)(n-1)+m\ell=14$, and sends every state to $0$.  The displayed equality
concerns the length delivered by the construction; no minimality assertion is
made here.  The general \v{C}ern\'y bound for five states is $16$.

\section{Remarks on the annular mechanism}\label{sec:remarks}

The proof is entirely finite-dimensional, but three features of the
annular relation are worth emphasizing.

First, the two generation bounds in \cref{lem:generation} play different
roles.  The total-dimension bound contains
$k_r=\dim(\calL_r\cap N)$.  In the second construction, every new control
uses the annihilating polynomial $X^\ell\psi_r(X)$ and therefore adds at
most $\ell$ beyond the growth of the $R$-component.  Taking the minimum of
the two bounds produces $\kappa_r=\min(t,\ell r)$.  For a generator with
$r$ controls and exponent tuple $\mathbf h_\gamma$, the strict inequality
$\tau(\mathbf h_\gamma)\le D_r^+-1$ permits the case
$\tau(\mathbf h_\gamma)<D_r^-$ to be shifted into the total-exponent
interval by a positive multiple of $m$.

Second, the relevant nilpotent dimension bound is $z+s$, not merely the
number $z$ of $N$-positions.  For every $j$ at which the term is not reduced
to $\calM_{r-1}$, \cref{lem:nilpotent-growth} says that the quotient classes
\[
  x_j+H_{j-1},\ x_jJ+H_{j-1},\ldots,
  x_jJ^{\nu_j}+H_{j-1}
\]
are linearly independent.  Summing the corresponding dimension increments
gives \eqref{eq:nilpotent-dimension-bound}, which is the estimate used in
\eqref{eq:degree-gap}.

Third, when $\kappa_r>r$, the total-exponent interval contains exactly $m$
integers:
\[
  D_r^+-D_r^-+1=m.
\]
The roots-of-unity generating function then compares coefficient indices
$d$ and $d-m$.  Replacing the lower endpoint $D_r^-$ by zero would remove
the $h_{d-m}(Y)$ term in \eqref{eq:path-coefficient} and leave an uncancelled
boundary coefficient.  Thus both endpoints in
\eqref{eq:annular-exponent-set} are required by the interaction between the
powers of $J$ on $N$ and the roots-of-unity spectrum on $R$.

For $\ell=1$, we have $J=0$, so there are no nontrivial nilpotent iterates.
For $\ell=2$, the first shifted total-exponent interval, at filtration index
$r=1$, is
\[
  1\le h_0+h_1\le m.
\]
No step of the proof depends on the cycle length being prime.

\section{Sharpness constructions for the refined bound}
\label{sec:sharpness}

This section gives, for every order $n\ge4$, a strongly connected binary
automaton for which the first inequality in \eqref{eq:reset-bound} is an
equality.  The definition, the one-cluster parameters, and an upper bound are
uniform in $n$.  The lower bounds are proved separately.  In even order, the
second letter is not injective and two letter-counting invariants apply.  In
odd order, that letter is a permutation, and the lower bound instead comes
from a forced detour in a graph of cyclic intervals.

The examples below have $m=2$ and $\ell=n-2$, so their reset thresholds grow
linearly with $n$.  They establish sharpness of the parameter-dependent bound
$(m-1)(n-1)+m\ell$ along this family.  They do not assert equality in the
coarser estimate $(m-1)(n-1)+m\ell\le(n-1)^2$.

\begin{definition}[The family $\mathcal F_n$]\label{def:sharp-family}
For an integer $n\ge4$, let $\mathcal F_n$ be the automaton with state set
$Q_n=\{0,1,\ldots,n-1\}$ and alphabet $\{a,b\}$.  Its transitions are
\[
  0\mathbin{\cdot}a=1,
  \qquad
  i\mathbin{\cdot}a=i-1
  \quad(1\le i\le n-1),
\]
and
\[
  i\mathbin{\cdot}b=i+1
  \quad(0\le i<n-1),
  \qquad
  (n-1)\mathbin{\cdot}b=
  \begin{cases}
    1,&\text{if $n$ is even},\\
    0,&\text{if $n$ is odd}.
  \end{cases}
\]
\end{definition}

The two letter actions are displayed schematically in
\cref{fig:sharp-family-a,fig:sharp-family-b}.  The arrows through each
ellipsis continue through the consecutively labelled states; when $n=4$,
there is no intermediate state between $2$ and $n-1=3$.

\begin{figure}[htbp]
\centering
\begin{tikzpicture}[x=1.55cm,y=1cm]
  \node[oc cycle state] (fa0) at (0,0)   {$0$};
  \node[oc cycle state] (fa1) at (1,0)   {$1$};
  \node[oc state]       (fa2) at (2,0)   {$2$};
  \node                 (fad) at (3.05,0) {$\cdots$};
  \node[oc state]       (fan) at (4.25,0) {$n-1$};

  \draw[oc transition] (fa1) -- (fa0);
  \draw[oc transition] (fa0) to[bend right=45] (fa1);
  \draw[oc transition] (fa2) -- (fa1);
  \draw[oc transition] (fan) -- (fad);
  \draw[oc transition] (fad) -- (fa2);

  \node[font=\small] at (0.5,0.55) {$C=\{0,1\}$};
  \node[font=\small] at (3.05,0.55) {$\ell=n-2$};
\end{tikzpicture}
\caption{The common functional digraph of $a$ in $\mathcal F_n$.  The
shaded vertices form the $a$-cycle $0\to1\to0$, and the remaining arrows
form the tail $n-1\to\cdots\to2\to1$.}
\label{fig:sharp-family-a}
\end{figure}

\begin{figure}[htbp]
\centering
\begin{minipage}[t]{0.48\linewidth}
\centering
\vspace{0pt}
\begin{tikzpicture}[x=1.12cm,y=1cm]
  \node[oc state] (fbe0) at (0,0)    {$0$};
  \node[oc state] (fbe1) at (1,0)    {$1$};
  \node[oc state] (fbe2) at (2,0)    {$2$};
  \node           (fbed) at (3.05,0) {$\cdots$};
  \node[oc state] (fben) at (4.25,0) {$n-1$};

  \draw[oc transition] (fbe0) -- (fbe1);
  \draw[oc transition] (fbe1) -- (fbe2);
  \draw[oc transition] (fbe2) -- (fbed);
  \draw[oc transition] (fbed) -- (fben);
  \draw[oc transition] (fben) to[bend right=42] (fbe1);
\end{tikzpicture}

\smallskip
\small\textbf{Even $n$.}
\end{minipage}
\hfill
\begin{minipage}[t]{0.48\linewidth}
\centering
\vspace{0pt}
\begin{tikzpicture}[x=1.12cm,y=1cm]
  \node[oc state] (fbo0) at (0,0)    {$0$};
  \node[oc state] (fbo1) at (1,0)    {$1$};
  \node[oc state] (fbo2) at (2,0)    {$2$};
  \node           (fbod) at (3.05,0) {$\cdots$};
  \node[oc state] (fbon) at (4.25,0) {$n-1$};

  \draw[oc transition] (fbo0) -- (fbo1);
  \draw[oc transition] (fbo1) -- (fbo2);
  \draw[oc transition] (fbo2) -- (fbod);
  \draw[oc transition] (fbod) -- (fbon);
  \draw[oc transition] (fbon) to[bend right=42] (fbo0);
\end{tikzpicture}

\smallskip
\small\textbf{Odd $n$.}
\end{minipage}
\caption{The parity-dependent functional digraphs of $b$ in $\mathcal F_n$.
For even $n$, state $0$ feeds into the cycle
$1\to2\to\cdots\to n-1\to1$.  For odd $n$, all states form the cycle
$0\to1\to\cdots\to n-1\to0$.  The return arrow is the only
parity-dependent transition.}
\label{fig:sharp-family-b}
\end{figure}

The $a$-skeleton consists of the cycle $0\to1\to0$ and the tail
\[
  n-1\longrightarrow n-2\longrightarrow\cdots
  \longrightarrow2\longrightarrow1.
\]
Thus $C=\{0,1\}$, $m=2$, and $\ell=n-2$.  If $n$ is odd, then $b$ is a
cycle on all of $Q_n$.  If $n$ is even, then $b$ restricts to a cycle on
$\{1,\ldots,n-1\}$.  In the even case every state reaches $1$ by a power of
$a$.  State $1$ reaches every state in $\{1,\ldots,n-1\}$ by a power of $b$
and reaches $0$ by the transition $1\mathbin{\cdot}a=0$.  Hence
$\mathcal F_n$ is strongly connected in both parity cases.

The next lemma supplies the common upper bound.  Its proof also explains why
the chosen exceptional $b$-transition depends on the parity of $n$.

\begin{lemma}[A common reset word]\label{lem:sharp-common-word}
For every integer $n\ge4$, the word
\[
  w_n=a^{n-2}b^{n-1}a^{n-2}
\]
resets $\mathcal F_n$ to state $1$.  In particular,
$\rt(\mathcal F_n)\le3n-5$.
\end{lemma}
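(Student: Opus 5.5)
We trace the image of $Q_n$ under the three blocks of $w_n$ directly. The length is $(n-2)+(n-1)+(n-2)=3n-5$, so the bound $\rt(\mathcal F_n)\le3n-5$ follows once $w_n$ is shown to reset $\mathcal F_n$ to state $1$.

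\emph{First block.} By \eqref{eq:equivalent-level-definitions} and $\ell=n-2$, we have $Q_n\mathbin{\cdot}a^{n-2}=C=\{0,1\}$.

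\emph{Middle block.} Next compute $\{0,1\}\mathbin{\cdot}b^{n-1}$.
\begin{itemize}
\item The state $1$ moves along $1\to2\to\cdots\to n-1$ in $n-2$ steps. One further $b$ sends it to $1$ if $n$ is even and to $0$ if $n$ is odd.
\item The state $0$ moves along $0\to1\to\cdots\to n-1$ in exactly $n-1$ steps, so it ends at $n-1$.
\end{itemize}
Hence $\{0,1\}\mathbin{\cdot}b^{n-1}$ equals $\{1,n-1\}$ for even $n$ and $\{0,n-1\}$ for odd $n$.

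\emph{Last block.} Apply $a^{n-2}$.
\begin{itemize}
\item The state $n-1$ descends the tail $n-1\to n-2\to\cdots\to1$ in $n-2$ steps, landing at $1$.
\item A cycle state $c\in\{0,1\}$ is sent by $a^{n-2}$ to $c$ if $n$ is even and to the other cycle state if $n$ is odd, because $a$ swaps $0$ and $1$ on the cycle.
\end{itemize}
For even $n$ the surviving cycle state is $1$, which stays at $1$. For odd $n$ it is $0$, which moves to $1$. In both cases the whole set collapses to $\{1\}$, so $w_n$ resets $\mathcal F_n$ to state $1$.

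The only delicate point is this parity bookkeeping. The exceptional $b$-transition $(n-1)\mathbin{\cdot}b$ is chosen exactly so that the cycle state surviving the $b$-block has the parity that $a^{n-2}$ carries to $1$. This is the explanation of the parity-dependent definition promised before the lemma.
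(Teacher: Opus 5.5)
Your proposal is correct and follows essentially the same route as the paper: you track $Q_n$ through the three blocks, first to $C=\{0,1\}$, then to $\{1,n-1\}$ or $\{0,n-1\}$ according to the parity of $n$, and finally to $\{1\}$. The parity bookkeeping for $a^{n-2}$ on the two-cycle matches the paper's argument exactly.
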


\begin{proof}
The first block maps the state set onto the $a$-cycle:
\[
  Q_n\mathbin{\cdot}a^{n-2}=\{0,1\}.
\]
For the middle block, the definition of $b$ gives
\[
  \{0,1\}\mathbin{\cdot}b^{n-1}
  =
  \begin{cases}
    \{n-1,1\},&\text{if $n$ is even},\\
    \{n-1,0\},&\text{if $n$ is odd}.
  \end{cases}
\]
In both cases, $(n-1)\mathbin{\cdot}a^{n-2}=1$.  If $n$ is even, then
$1\mathbin{\cdot}a^{n-2}=1$ because $n-2$ is even.  If $n$ is odd, then
$0\mathbin{\cdot}a^{n-2}=1$ because $n-2$ is odd.  Therefore the last block
maps either displayed pair to $\{1\}$.  The length of $w_n$ is
$2(n-2)+(n-1)=3n-5$.
\end{proof}

For a word $u\in\{a,b\}^*$, let $\abs u_a$ and $\abs u_b$ denote the
numbers of occurrences of $a$ and $b$ in $u$, respectively.  The next lemma
records two facts common to both parity cases.  The terminal-state conclusion
will initialize both reverse-preimage arguments.  The letter-count conclusion
will be combined with the even-order potential below; it also applies in odd
order, although the stronger odd-order argument counts the entire word at
once.

\begin{lemma}[Common terminal state and parity obstruction]
\label{lem:sharp-common-obstruction}
For every integer $n\ge4$, every shortest reset word of $\mathcal F_n$ resets
the automaton to state $1$.  Moreover, every reset word $u$ satisfies
\begin{equation}\label{eq:common-b-count}
  \abs u_b\ge n-1.
\end{equation}
\end{lemma}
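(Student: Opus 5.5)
The proof follows two states, $0$ and $1$, through a reset word and asks which letters can make them merge. The first assertion comes from an analysis of the last letter. The second comes from a parity invariant on the difference of the two tracked states.

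\emph{Terminal state.} Let $u=u'x$ be a shortest reset word, with $x\in\{a,b\}$. By minimality $u'$ is not a reset word, so $\abs{Q_n\mathbin{\cdot}u'}\ge2$, and the letter $x$ maps this set to a single state. So $x$ must identify two distinct states, and we list all such identifications from \cref{def:sharp-family}.
\begin{itemize}
\item The letter $a$ identifies only the pair $\{0,2\}$, both mapped to $1$.
\item For odd $n$, the letter $b$ is a permutation and identifies nothing.
\item For even $n$, the letter $b$ identifies only $\{0,n-1\}$, both mapped to $1$.
\end{itemize}
Hence every non-injective letter image collapses onto $1$, so the reset state of $u$ is $1$.

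\emph{Letter count.} Let $u$ be any reset word. For a pair $\{x,y\}$ of distinct states with $x<y$, call $y-x$ its \emph{gap}. Follow the images of the pair $\{0,1\}$ under the successive prefixes of $u$ until the two images first coincide. While the images are distinct, each letter changes the gap as follows.
\begin{itemize}
\item The letter $a$ leaves the gap unchanged when $x\ge1$. It sends $\{0,y\}$ to $\{1,y-1\}$, lowering the gap by $2$ when $y\ge2$, and it sends $\{0,1\}$ to $\{1,0\}$.
\item The letter $b$ leaves the gap unchanged unless $y=n-1$, that is, unless the larger state uses the exceptional wrap-around transition.
\end{itemize}
The starting gap is $1$, which is odd. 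Merging requires either an application of $a$ to $\{0,2\}$, where the gap is $2$, or, for even $n$, an application of $b$ to $\{0,n-1\}$. If the wrap-around transition of $b$ is never used, the gap stays odd, so merging by $a$ is impossible, and merging by $b$ already uses the transition at $n-1$. Therefore, at some moment one tracked state equals $n-1$ and the next letter is $b$.

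It remains to count the $b$'s needed before that moment. Before the first wrap-around, every letter increases a state by at most $1$. The letter $a$ does so only through $0\mapsto1$, and $b$ does so through $i\mapsto i+1$. A tracked state begins at $0$ or $1$, and $a$ can at most carry it from $0$ to $1$. So reaching $n-1$ from a state in $\{0,1\}$ requires at least $n-2$ applications of $b$ to that state. Adding the wrapping or merging $b$ gives $\abs u_b\ge n-1$, which is \eqref{eq:common-b-count}.

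The main point needing care is the monotonicity claim: before the first visit of a tracked state to $n-1$ followed by $b$, no tracked state ever exceeds its start by more than one plus the number of $b$'s applied so far. This holds because $a$ never increases a state beyond $1$. The claim has to be checked uniformly for both tracked states, including the swaps $\{0,1\}\leftrightarrow\{1,0\}$ under $a$.
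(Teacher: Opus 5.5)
Your proof is correct and follows essentially the same route as the paper. The reset state comes from the last-letter analysis. Your gap parity is exactly the paper's equal/opposite-parity relation, and it forces a $b$-step while one tracked state sits at $n-1$, no later than the first merger. Finally, you count the $b$'s needed to reach $n-1$; the paper bounds the maximum of the two labels, while you bound a single label.

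One small correction to your last paragraph: the invariant to state is that each tracked label is at most $1$ plus the number of $b$'s applied so far. As written, ``start plus one plus the number of $b$'s'' would give only $n-3$ for the state starting at $1$.
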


\begin{proof}
Let $u=vg$ be a shortest reset word, where $g\in\{a,b\}$ is its last letter.
The word $v$ is not a reset word, so $Q_n\mathbin{\cdot}v$ contains at least
two distinct states with the same image under $g$.  The only two distinct
states merged by $a$ are $0$ and $2$, and their common image is $1$.  When
$n$ is even, the only two distinct states merged by $b$ are $0$ and $n-1$,
again with common image $1$.  When $n$ is odd, $b$ is a permutation and
merges no two distinct states.  These three cases show that $u$ resets to $1$.

For the letter count, track the images of the pair $\{0,1\}$ until its first
merger, and record whether the two numerical labels have equal or opposite
parity.  They initially have opposite parity.  The letter $a$ reverses the
parity of every state.  The letter $b$ also reverses parity
except at its final transition
\[
  (n-1)\mathbin{\cdot}b=
  \begin{cases}
    1,&\text{if $n$ is even},\\
    0,&\text{if $n$ is odd},
  \end{cases}
\]
whose two endpoints have the same parity.  Thus the relation ``equal parity''
versus ``opposite parity'' changes only when a $b$-step has exactly one
tracked state at $n-1$.

At least one such exceptional $b$-step occurs no later than the first merger.
Choose the first one.  If the merger uses $a$,
then immediately beforehand the tracked states are $0$ and $2$ and have the
same parity, so their parity relation changed earlier.
If the merger uses $b$, then $n$ is even and the tracked states form
$\{0,n-1\}$; the merging step itself is the required exceptional step.

Immediately before the exceptional step, one tracked state is at $n-1$.
Before the first merger, the two tracked states are distinct and their maximum
numerical label is at least $1$.  On such a pair, applying $a$ does not
increase the maximum: the image of $0$ is $1$, and every positive label
decreases by one.  Applying $b$ increases the maximum by at most one.  Since
the initial maximum is $1$, at least $n-2$ occurrences of $b$ are needed
before one tracked state can reach $n-1$.  The exceptional step is one
further occurrence, which proves \eqref{eq:common-b-count}.
\end{proof}

\subsection{Even orders: two letter-counting invariants}
\label{subsec:sharp-even}

Fix an even integer $n\ge4$.  The common obstruction in
\cref{lem:sharp-common-obstruction} forces at least $n-1$ occurrences of $b$.
The remaining invariant uses full preimages to force at least $2n-4$
occurrences of $a$ in a shortest reset word.

For the second count, use the cyclic order
\begin{equation}\label{eq:even-cyclic-order}
  0,2,4,\ldots,n-2,1,3,5,\ldots,n-1.
\end{equation}
A \emph{cyclic interval} in this order is a set of consecutive entries, with
indices read cyclically; the empty set and $Q_n$ are also called cyclic
intervals.  For $S\subseteq Q_n$, write $s_i=1$ if $i\in S$ and $s_i=0$
otherwise.  Define
\[
  r(S)=\sum_{i=1}^{n-1}s_i,
  \qquad
  \iota(S)=
  \begin{cases}
    1,&\text{if $s_0=1$ and $s_{n-1}=0$},\\
    0,&\text{otherwise},
  \end{cases}
\]
and
\begin{equation}\label{eq:even-potential}
  \Psi(S)=2r(S)+\iota(S).
\end{equation}

The following lemma gives the two properties of this potential used later:
full preimages remain in the class of cyclic intervals, and only an inverse
$a$-step can increase the potential, by at most one.

\begin{lemma}[Even inverse-interval potential]
\label{lem:even-inverse-potential}
Let $n\ge4$ be even, let $a,b$ be the letters of $\mathcal F_n$, and let
$S\subseteq Q_n$ be a cyclic interval in \eqref{eq:even-cyclic-order}.  Then
$Sa^{-1}$ and $Sb^{-1}$ are cyclic intervals in the same order, and
\[
  \Psi(Sb^{-1})\le\Psi(S),
  \qquad
  \Psi(Sa^{-1})\le\Psi(S)+1.
\]
\end{lemma}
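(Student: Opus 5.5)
The plan is to compute the two full-preimage maps explicitly and read both claims off that description. For even $n$, the definition of $\mathcal F_n$ gives
\[
  \{0\}a^{-1}=\{1\},\quad \{1\}a^{-1}=\{0,2\},\quad \{j\}a^{-1}=\{j+1\}\ (2\le j\le n-2),\quad \{n-1\}a^{-1}=\varnothing,
\]
and
\[
  \{0\}b^{-1}=\varnothing,\quad \{1\}b^{-1}=\{0,n-1\},\quad \{j\}b^{-1}=\{j-1\}\ (2\le j\le n-1).
\]

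\emph{Intervals.} I will list these singleton preimages in the cyclic order \eqref{eq:even-cyclic-order} and concatenate them.

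For $b$, the order $0,2,\ldots,n-2,1,3,\ldots,n-1$ yields $\varnothing,\{1\},\{3\},\ldots,\{n-3\},\{n-1,0\},\{2\},\ldots,\{n-2\}$. The concatenation is $1,3,\ldots,n-1,0,2,\ldots,n-2$, which is the cyclic order \eqref{eq:even-cyclic-order} read from a different starting point. Each pair $\{n-1,0\}$ is consecutive there.

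For $a$, the same order yields $\{1\},\{3\},\ldots,\{n-1\},\{0,2\},\{4\},\ldots,\{n-2\},\varnothing$. The concatenation is again a rotation of the same cyclic order.

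So each inverse letter is a monotone cyclic map: consecutive entries go to consecutive blocks, and the blocks tile the cyclic order exactly once. Hence the union over a cyclic interval is a cyclic interval. This also covers $\varnothing$ and $Q_n$, whose preimages are $\varnothing$ and $Q_n$.

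\emph{Potential under $b$.} Every state other than $0$ has a nonzero preimage, each nonzero state $j$ arises from exactly one target, and the target $1$ contributes exactly one nonzero state, namely $n-1$. Therefore $r(Sb^{-1})=r(S)$. Moreover, $0\in Sb^{-1}$ if and only if $1\in S$ if and only if $n-1\in Sb^{-1}$, so $\iota(Sb^{-1})=0$. This gives $\Psi(Sb^{-1})\le\Psi(S)$, and no interval property is needed here.

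\emph{Potential under $a$.} The same bookkeeping gives
\[
  r(Sa^{-1})=r(S)+[0\in S]-[n-1\in S].
\]
Since $0\in Sa^{-1}$ if and only if $1\in S$, and $n-1\in Sa^{-1}$ if and only if $n-2\in S$, we get $\iota(Sa^{-1})=[1\in S,\ n-2\notin S]$. Thus
\[
  \Psi(Sa^{-1})-\Psi(S)=2\bigl([0\in S]-[n-1\in S]\bigr)+[1\in S,\ n-2\notin S]-\iota(S).
\]
I will check the three cases.

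If $0\notin S$, the right side is at most $0+1-0=1$.

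If $0,n-1\in S$, then $\iota(S)=0$ and the right side is at most $1$.

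The main obstacle is the remaining case $0\in S$, $n-1\notin S$. Here $\iota(S)=1$, so the right side equals $1+[1\in S,\ n-2\notin S]$. I must therefore rule out $1\in S$ together with $n-2\notin S$, and this is the only place where the interval hypothesis is used. In the cyclic order, $n-1$ immediately precedes $0$, and $1$ immediately follows $n-2$. A cyclic interval containing $0$ but not $n-1$ must extend from $0$ forward through $0,2,\ldots$. If it also contains $1$, it must pass through $n-2$ to reach $1$; the alternative route backwards from $0$ is blocked by $n-1$. So $n-2\in S$, and the bound $\Psi(Sa^{-1})\le\Psi(S)+1$ holds in all cases.
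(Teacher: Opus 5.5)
Your proof is correct and follows the paper's route. You use the same formulas $r(Sb^{-1})=r(S)$, $\iota(Sb^{-1})=0$, $r(Sa^{-1})=r(S)+s_0-s_{n-1}$ and $\iota(Sa^{-1})=[1\in S,\ n-2\notin S]$, the same case split on $(s_0,s_{n-1})$, and the same observation that a cyclic interval containing $0$ and $1$ but not $n-1$ must contain $n-2$. The only cosmetic difference is in the interval claim: you concatenate the singleton preimages along the cyclic order and observe that the result is a rotation of that order, whereas the paper lists forward images and deletes one of two adjacent repeated entries. Both are ways of saying that each letter acts as a monotone cyclic map.
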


\begin{proof}
Traverse the states in the order \eqref{eq:even-cyclic-order}.  Their images
under $a$ and $b$ occur in the respective cyclic lists
\begin{align*}
  a:&\quad 1,1,3,5,\ldots,n-3,0,2,4,\ldots,n-2,\\
  b:&\quad 1,3,5,\ldots,n-1,2,4,\ldots,n-2,1.
\end{align*}
An arithmetic progression in these lists is empty when its first displayed
term exceeds its last; for $n=4$, the lists are $1,1,0,2$ and $1,3,2,1$.
After one of the two cyclically adjacent repetitions is deleted, the first
list is a cyclic shift of
\eqref{eq:even-cyclic-order} with $n-1$ omitted, and the second is such a
shift with $0$ omitted.  Deleting the omitted state from any cyclic interval
leaves a consecutive block in the reduced cyclic order.  The domain positions
whose images lie in this block are consecutive as well.  Restoring the second
of the two adjacent repeated positions either adds one position next to the
block, when the repeated image belongs to $S$, or adds no position.  Thus the
full preimage is a cyclic interval.

In the natural coordinate order $0,1,\ldots,n-1$, the characteristic vectors
of the two preimages are
\begin{align}
  [Sa^{-1}]&=(s_1,s_0,s_1,s_2,\ldots,s_{n-2}),
    \label{eq:even-inverse-a}\\
  [Sb^{-1}]&=(s_1,s_2,\ldots,s_{n-1},s_1).
    \label{eq:even-inverse-b}
\end{align}
Equation \eqref{eq:even-inverse-b} gives
$r(Sb^{-1})=r(S)$ and $\iota(Sb^{-1})=0$.  Hence
$\Psi(Sb^{-1})\le\Psi(S)$.

Equation \eqref{eq:even-inverse-a} gives
\begin{equation}\label{eq:even-r-change}
  r(Sa^{-1})=r(S)+s_0-s_{n-1}.
\end{equation}
For the endpoint pairs
$(s_0,s_{n-1})=(0,0),(0,1),(1,1)$, respectively, the change in
$2r$ is $0,-2,0$, while $\iota$ can increase by at most one.  Therefore
$\Psi(Sa^{-1})-\Psi(S)\le1$ in these three cases.

It remains to consider $(s_0,s_{n-1})=(1,0)$.  Here $\iota(S)=1$.  If
$\iota(Sa^{-1})=1$, then \eqref{eq:even-inverse-a} gives
$s_1=1$ and $s_{n-2}=0$.  The interval $S$ would then contain $0$ and $1$
but omit both $n-2$ and $n-1$.  This is impossible in
\eqref{eq:even-cyclic-order}: one of the two arcs between $0$ and $1$
contains $n-2$, and the other contains $n-1$, while a cyclic interval
containing both endpoints contains at least one of these two arcs.  Hence
$\iota(Sa^{-1})=0$.  Equation \eqref{eq:even-r-change} now gives
$\Psi(Sa^{-1})-\Psi(S)=1$.
\end{proof}

The even-order lower bound now follows by applying one invariant in the
forward action and the other in the reverse full-preimage action.

\begin{proposition}[Even-order lower bound]\label{prop:sharp-even-lower}
If $n\ge4$ is even, then every reset word of $\mathcal F_n$ has length at
least $3n-5$.
\end{proposition}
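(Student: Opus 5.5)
The plan is to bound the two letter counts of a shortest reset word separately and add them. Since every reset word is at least as long as a shortest one, it suffices to treat a shortest reset word $u$. By \cref{lem:sharp-common-obstruction}, $u$ resets $\mathcal F_n$ to state $1$, and $\abs u_b\ge n-1$. It therefore remains to prove $\abs u_a\ge 2n-4$, after which $\abs u=\abs u_a+\abs u_b\ge 3n-5$.

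For the $a$-count I will use the reverse full-preimage action together with the potential $\Psi$ of \eqref{eq:even-potential}. Write $u=g_1g_2\cdots g_L$. Set $S_L=\{1\}$ and $S_{i-1}=S_ig_i^{-1}$, so that $S_0=\{1\}u^{-1}$. Because $u$ resets to $1$, we have $S_0=Q_n$.

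First I check that $\{1\}$ is a cyclic interval in the order \eqref{eq:even-cyclic-order}; this is immediate, since a singleton is a single consecutive entry. \Cref{lem:even-inverse-potential} then shows by induction that every $S_i$ is a cyclic interval. It also shows that each inverse $b$-step does not increase $\Psi$, while each inverse $a$-step increases $\Psi$ by at most one. Hence
\[
  \Psi(Q_n)-\Psi(\{1\})\le\abs u_a .
\]

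Next I compute the endpoint values. For $S=\{1\}$ we have $r=1$ and $\iota=0$, because $s_0=0$; thus $\Psi(\{1\})=2$. For $S=Q_n$ we have $r=n-1$ and $\iota=0$, because $s_{n-1}=1$; thus $\Psi(Q_n)=2n-2$. Therefore $\abs u_a\ge 2n-4$.

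Combining this with the $b$-count from \cref{lem:sharp-common-obstruction} gives $\abs u\ge 3n-5$ for every reset word. The only delicate point is the bookkeeping: $\Psi$ must be tracked on preimages, that is, in reverse order, and the starting set must be exactly $\{1\}$. This is why the terminal-state conclusion of \cref{lem:sharp-common-obstruction} is needed, and it is used only for a shortest word. All remaining work has already been carried out in \cref{lem:even-inverse-potential}.
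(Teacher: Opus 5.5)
Your proposal is correct and follows the paper's proof: a shortest reset word resets to state $1$ with $\abs u_b\ge n-1$ by \cref{lem:sharp-common-obstruction}, and $\Psi$ is telescoped along the reverse full-preimage chain from $\{1\}$ to $Q_n$ using \cref{lem:even-inverse-potential}. The endpoint values $\Psi(\{1\})=2$ and $\Psi(Q_n)=2n-2$ give $\abs u_a\ge 2n-4$, and adding the two counts gives $3n-5$. Your reversed indexing ($S_L=\{1\}$, $S_0=Q_n$) is only a notational difference.
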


\begin{proof}
Let $u=g_1g_2\cdots g_L$ be a shortest reset word.  By
\cref{lem:sharp-common-obstruction}, the word $u$ resets $\mathcal F_n$ to
state $1$ and satisfies $\abs u_b\ge n-1$.

For $0\le j\le L$, let $S_j$ be the full preimage of $\{1\}$ under the
suffix of $u$ of length $j$.  Then
\[
  S_0=\{1\},S_1,\ldots,S_L=Q_n,
  \qquad L=\abs u,
\]
and
\[
  S_{j+1}=S_jg_{L-j}^{-1}\quad(0\le j<L).
\]
By \cref{lem:even-inverse-potential}, every $S_j$ is a cyclic interval.  The
endpoint values of the potential are
\[
  \Psi(S_0)=2,
  \qquad
  \Psi(S_L)=2(n-1).
\]
An inverse $b$-step does not increase $\Psi$, and an inverse $a$-step
increases it by at most one.  Telescoping the potential changes gives
\begin{equation}\label{eq:even-a-count}
  \abs u_a\ge \Psi(S_L)-\Psi(S_0)=2n-4.
\end{equation}
Combining \eqref{eq:common-b-count} and \eqref{eq:even-a-count} yields
\[
  \abs u=\abs u_a+\abs u_b
  \ge(2n-4)+(n-1)=3n-5.
\]
Since a shortest reset word has this lower bound, every reset word does.
\end{proof}

\subsection{Odd orders: a middle-layer detour}
\label{subsec:sharp-odd}

Fix an odd integer $n\ge5$.  The letter $b$ is now a permutation.  The common
parity obstruction still gives $\abs u_b\ge n-1$ for every reset word $u$, but
it does not account for the remaining steps in a shortest reset word.  We
instead order the states so that full preimages are cyclic intervals.  Interval
size records progress toward $Q_n$, while the initial position records a
forced detour at the middle size.

Write
\[
  n=2h-1,
  \qquad
  h=\frac{n+1}{2},
\]
and, for $j\in\mathbb Z/n\mathbb Z$, define the state
\[
  y_j=2j\pmod n.
\]
These states occur in the cyclic order
\begin{equation}\label{eq:odd-cyclic-order}
  y_0,y_1,\ldots,y_{n-1}
  =0,2,4,\ldots,n-1,1,3,\ldots,n-2.
\end{equation}
For $i\in\mathbb Z/n\mathbb Z$ and $1\le k<n$, let
\begin{equation}\label{eq:odd-interval}
  I(i,k)=\{y_i,y_{i+1},\ldots,y_{i+k-1}\}.
\end{equation}
Subscripts on $y$ and interval-start indices are read modulo $n$.  For the
boundary sizes, set $I(i,0)=\varnothing$ and $I(i,n)=Q_n$ for every $i$.
The sets $I(i,k)$ with $0\le k\le n$ are called the cyclic intervals in
\eqref{eq:odd-cyclic-order}.  Indices on the reverse-preimage path introduced
later are ordinary integer indices, not residues modulo $n$.

The next lemma translates both inverse-letter actions into changes of the
two interval coordinates.  The two exceptional cases for $a^{-1}$ are the
only steps that change interval size.

\begin{lemma}[Odd inverse-interval dynamics]
\label{lem:odd-inverse-intervals}
Let $n\ge5$ be odd, and let $a,b$ be the letters of $\mathcal F_n$.  For
every $i\in\mathbb Z/n\mathbb Z$ and $1\le k<n$, one has
\begin{equation}\label{eq:odd-b-inverse}
  I(i,k)b^{-1}=I(i-h,k),
\end{equation}
and
\begin{equation}\label{eq:odd-a-inverse}
  I(i,k)a^{-1}=
  \begin{cases}
    I(0,k+1),&\text{if $i=h$},\\
    I(1-k,k-1),&\text{if $i=h-k$},\\
    I(i+h,k),&\text{otherwise}.
  \end{cases}
\end{equation}
\end{lemma}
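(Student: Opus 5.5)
The plan is a direct verification.

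Write every state as $y_j$ and compute the full preimage of each single state $y_j$ under $a$ and under $b$ in terms of the index $j$. Then take unions over $j\in\{i,\ldots,i+k-1\}$.

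The key arithmetic fact is that $2h=n+1\equiv1\pmod n$. Hence $y_{j+h}=2j+1$ and $y_{j-h}=2j-1$, with residues taken modulo $n$.

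\emph{The letter $b$.} For odd $n$, the letter $b$ is the cyclic permutation $x\mapsto x+1\pmod n$. So the unique preimage of $y_j=2j$ is $2j-1=y_{j-h}$. Taking the union over the $k$ consecutive indices of $I(i,k)$ gives $I(i-h,k)$, which is \eqref{eq:odd-b-inverse}. No exceptional cases arise, because $b$ is a bijection.

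\emph{The letter $a$.} We have $x\mathbin{\cdot}a=x-1$ for $x\ge1$ and $0\mathbin{\cdot}a=1$. Hence a state $s$ has preimage $\{s+1\}$ when $1\le s\le n-2$ and $s\neq1$. The state $1$ has preimage $\{0,2\}$, the state $0$ has preimage $\{1\}$, and the state $n-1$ has empty preimage.

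In index terms, the rule $y_j\mapsto\{y_{j+h}\}$ (that is, $2j\mapsto 2j+1\bmod n$) is correct except at two indices:
\begin{itemize}[nosep]
\item $j=h-1$, where $y_{h-1}=n-1$. Here the formula wrongly produces $y_0=0$, but the true preimage is empty.
\item $j=h$, where $y_h=1$. Here the formula produces $y_{2h}=y_1=2$, and the true preimage additionally contains $y_0=0$.
\end{itemize}
Thus the spurious $y_0$ from index $h-1$ and the missing $y_0$ from index $h$ are the same state, and they cancel whenever both indices lie in the interval.

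Now split according to the position of the consecutive pair of indices $h-1,h$ relative to $\{i,\ldots,i+k-1\}$. Since $1\le k<n$, the interval cannot contain $h$ as its first index and $h-1$ as its last index simultaneously; that would require $k\equiv0\pmod n$. So exactly one of the following holds.
\begin{itemize}[nosep]
\item \emph{Both indices or neither index lies in the interval.} The naive shifted union is exact, because the defect cancels or does not occur. This gives $I(i+h,k)$.
\item \emph{Only $h$ lies in the interval.} Then $h$ must be its first index, so $i=h$. The shifted interval is $I(2h,k)=I(1,k)$, and adjoining the extra state $y_0$ gives $I(0,k+1)$.
\item \emph{Only $h-1$ lies in the interval.} Then $h-1$ must be its last index, so $i=h-k$. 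We must remove the spurious last element $y_{(h-1)+h}=y_0$ from $I(i+h,k)$. This leaves $I(i+h,k-1)=I(2h-k,k-1)=I(1-k,k-1)$.
\end{itemize}
These three cases are precisely \eqref{eq:odd-a-inverse}.

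The main point of care is the bookkeeping in the last case: checking that the removed state $y_0$ is the final element of the shifted interval (its index is $(h-1)+h\equiv0$), so that the result is again a cyclic interval. One must also confirm the boundary cases, for example $k=1$ giving $I(1-k,0)=\varnothing$, and $k=n-1$ in the first exceptional case giving $I(0,n)=Q_n$, against the conventions for $I(i,0)$ and $I(i,n)$. I expect this indexing, rather than any conceptual difficulty, to be the only real obstacle.
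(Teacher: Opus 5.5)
Your proof is correct and follows essentially the paper's argument: both use $2h\equiv1\pmod n$ to make $b$ the index rotation $j\mapsto j+h$ and $a$ the rotation $j\mapsto j-h$ except at one defect involving $y_0$. Both then read off when the naive preimage $I(i+h,k)$ gains $y_0$ (the target contains $y_h$ but not $y_{h-1}$, i.e.\ $i=h$) or loses it (the target contains $y_{h-1}$ but not $y_h$, i.e.\ $i=h-k$). The only difference is cosmetic: you locate the defect at the target states $y_{h-1}=n-1$ and $y_h=1$, whereas the paper locates it at the domain point $y_0$, and this is the same bookkeeping.
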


\begin{proof}
The identity $n=2h-1$ gives $2h\equiv1\pmod n$.  Therefore
\[
  y_j\mathbin{\cdot}b=y_{j+h},
\]
which proves \eqref{eq:odd-b-inverse}.

For the other letter,
\[
  y_0\mathbin{\cdot}a=y_h,
  \qquad
  y_j\mathbin{\cdot}a=y_{j-h}\quad(j\ne0).
\]
Thus, except at the domain point $y_0$, the action is the rotation
$j\mapsto j-h$.  The inverse image under this rotation would be
$I(i+h,k)$.  The actual inverse image gains $y_0$ when the target interval
contains $y_h$ but not its predecessor $y_{h-1}$.  For a proper nonempty
cyclic interval, this condition is equivalent to $i=h$, and adjoining $y_0$
gives $I(0,k+1)$.  The actual inverse image loses $y_0$ when the target
interval contains $y_{h-1}$ but not $y_h$.  This condition is equivalent to
$i=h-k$, and deleting $y_0$ gives $I(1-k,k-1)$.  In every other case, the
presence or absence of $y_0$ agrees with the rotation, which gives the third
line of \eqref{eq:odd-a-inverse}.
\end{proof}

Call an inverse $a$-step \emph{expanding}, \emph{contracting}, or
\emph{neutral} according as it increases, decreases, or preserves the
interval size.  By \cref{lem:odd-inverse-intervals}, an expanding step starts
at $I(h,k)$ and produces $I(0,k+1)$.  In particular, two expanding steps
cannot be consecutive because $h\ne0$ in $\mathbb Z/n\mathbb Z$.

The odd-order lower bound splits the reverse path into three parts.  Below
the middle size, nonconsecutive expansions impose a linear cost.  At the
middle size, one direct coordinate step is forbidden and the path must go
around the other side of an $n$-cycle.  Above the middle size, the remaining
nonconsecutive expansions impose a second linear cost.

\begin{proposition}[Odd-order lower bound]\label{prop:sharp-odd-lower}
If $n\ge5$ is odd, then every reset word of $\mathcal F_n$ has length at
least $3n-5$.
\end{proposition}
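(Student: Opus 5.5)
The plan is to run the reverse full-preimage path of a shortest reset word through the interval coordinates of \cref{lem:odd-inverse-intervals}, and to count steps in three regimes of interval size: below the middle size $h$, at size $h$, and above it.

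First fix the setup. Let $u=g_1\cdots g_L$ be a shortest reset word. By \cref{lem:sharp-common-obstruction}, $u$ resets to $1$. Since $2h\equiv1\pmod n$, we have $1=y_h$. Put $S_j=\{1\}(g_{L-j+1}\cdots g_L)^{-1}$, so $S_0=I(h,1)$ and $S_L=Q_n$. Every $S_j$ with $0<\abs{S_j}<n$ is an interval $I(i_j,k_j)$. Indeed $u$ is shortest, so $S_j\ne Q_n$ for $j<L$, and full preimages under an automaton never become empty. By \cref{lem:odd-inverse-intervals}, each step has one of four effects:
\begin{itemize}
\item an inverse $b$-step sends $i\mapsto i-h$ and keeps $k$;
\item a neutral inverse $a$-step sends $i\mapsto i+h$ and keeps $k$;
\item an expanding step goes from $I(h,k)$ to $I(0,k+1)$;
\item a contracting step goes from $I(h-k,k)$ to $I(1-k,k-1)$.
\end{itemize}
The final step into $Q_n$ is an inverse $a$-step from $I(h,n-1)$, since inverse $b$-steps preserve size.

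Next I define the segments. For $1\le k\le n-1$, let $t_k$ be the last index with $\abs{S_{t_k}}=k$. After $t_k$ all sizes exceed $k$, so step $t_k\to t_k+1$ is expanding. Hence $S_{t_k}=I(h,k)$ and $S_{t_k+1}=I(0,k+1)$, and $t_1<t_2<\cdots<t_{n-1}$. The $n-1$ expanding steps at the $t_k$ are inverse $a$-steps. For $2\le k\le n-1$, the segment $\sigma_k$ from $t_{k-1}+1$ to $t_k$ leads from $I(0,k)$ to $I(h,k)$, and all sizes along it are at least $k$. These segments are pairwise disjoint, and they are disjoint from the expanding steps at the $t_k$. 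So
\[
  L\ge(n-1)+\sum_{k=2}^{n-1}\abs{\sigma_k},
\]
and it suffices to show two things: $\abs{\sigma_k}\ge1$ for every $k$, and $\abs{\sigma_h}\ge n-1$. These give $L\ge(n-1)+(n-3)+(n-1)=3n-5$. The bound $\abs{\sigma_k}\ge1$ is immediate because $0\ne h$ in $\mathbb Z/n\mathbb Z$.

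The key step is the middle segment $\sigma_h$, which runs from $I(0,h)$ to $I(h,h)$ with sizes at least $h$. The direct route, a single neutral inverse $a$-step $0\mapsto h$, is forbidden: at size $k=h$ the start $0=h-k$ is exactly the contracting start, so an inverse $a$-step from $I(0,h)$ leaves size $h$. If $\sigma_h$ never leaves size $h$, each step is an inverse $b$-step or a neutral inverse $a$-step, changing the start by $\mp h$. An inverse $a$-step is never applied at start $0$, so the start performs a walk on the $n$-cycle generated by $\pm h$ that may never step $0\to h$. This walk must go around the other way, $0\to-h\to-2h\to\cdots\to h$, which takes $n-1$ steps.

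The main obstacle is that $\sigma_h$ may make excursions to sizes above $h$ and come back through contractions. To handle this, I will strengthen the claim by induction. For $k'\ge h$, any subpath that starts at an interval of size $k'$ and start $0$, stays at size at least $h$, and ends at size $h$ with start $h$ without using the forbidden direct step, costs at least $n-1$ steps. I expect to prove this with a potential: the cyclic distance, measured in $\pm h$-moves, of the start from the contracting start $h-k$ and the expanding start $h$. The check is that an expansion (which resets the start to $0$) followed later by a contraction (which sets the start to $1-k$) never shortens the remaining distance compared with staying at the current size. If this potential argument resists, my fallback is to compare sizes: an excursion above $h$ forces extra steps at every higher level $k$ where $\sigma_k$ already contributes, so double counting stays controlled. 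Either way, the final lower bound is $3n-5$.
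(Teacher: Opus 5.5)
Your last-visit decomposition is correct. Sizes change by at most one per step, each $t_k$ is followed by the expansion $I(h,k)\to I(0,k+1)$, and the estimate $L\ge(n-1)+\sum_{k=2}^{n-1}\abs{\sigma_k}$ with $\abs{\sigma_k}\ge1$ gives the same counts as the paper's three-part split: $n-2$ steps below the middle, $n$ at the middle, and $n-3$ above. The gap is the step you flagged yourself. You prove $\abs{\sigma_h}\ge n-1$ only when $\sigma_h$ never leaves size $h$. For excursions above $h$ you offer an inductive claim built on a potential you neither define nor verify, plus a ``fallback'' that is not an argument. So the decisive inequality is not established as written.

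No potential is needed, because an excursion cannot begin before $I(h,h)$ has already been reached. By \cref{lem:odd-inverse-intervals}, the only step that raises the size from $h$ is the expanding inverse $a$-step out of $I(h,h)$. No step inside $\sigma_h$ can drop below size $h$, since all sizes after $t_{h-1}$ exceed $h-1$. Let $j^*$ be the first index $\ge t_{h-1}+1$ with $S_{j^*}=I(h,h)$; it exists and satisfies $j^*\le t_h$ because $S_{t_h}=I(h,h)$. Suppose some $S_j$ with $t_{h-1}+1<j\le j^*$ had size different from $h$, and take the first such $j$. Then it would be reached by an expansion from $S_{j-1}=I(h,h)$ with $j-1<j^*$, contradicting minimality. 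Hence the portion of $\sigma_h$ from $t_{h-1}+1$ to $j^*$ consists only of neutral steps at size $h$. Your walk-avoiding-the-edge-$0\to1$ argument (or the paper's integer lift) then gives $j^*-(t_{h-1}+1)\ge n-1$, so $\abs{\sigma_h}\ge n-1$. Any later excursion only lengthens $\sigma_h$.

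The paper avoids the issue in the same spirit. Its middle segment ends at the first index after the last sub-$h$ index where the size exceeds $h$, so that segment lies entirely at size $h$. It then charges the part above $h$ only through expansions, each preceded by a distinct nonexpanding step, which tolerates later returns to size $h$.
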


\begin{proof}
Let $u=g_1g_2\cdots g_L$ be a shortest reset word.  By
\cref{lem:sharp-common-obstruction}, the word $u$ resets $\mathcal F_n$ to
state $1$.

For $0\le j\le L$, let $S_j$ be the full preimage of $\{1\}$ under the
suffix of $u$ of length $j$.  Thus
\[
  S_0=\{1\}=I(h,1),
  \qquad
  S_L=Q_n,
\]
and
\[
  S_{j+1}=S_jg_{L-j}^{-1}\quad(0\le j<L).
\]
Every $S_j$ is nonempty because the prefix of $u$ of length $L-j$ maps
$Q_n$ into $S_j$.  Moreover, $S_j\ne Q_n$ for $j<L$, since otherwise the
corresponding proper suffix would be a shorter reset word.  Consequently,
\cref{lem:odd-inverse-intervals} applies at every intermediate step and shows
that each $S_j$ is a cyclic interval.

\paragraph{The prefix below the middle layer.}
Let $j_-$ be the largest index for which $\abs{S_{j_-}}<h$.  A single inverse
step changes interval size by at most one.  The maximality of $j_-$ therefore
gives
\[
  \abs{S_{j_-}}=h-1,
  \qquad
  S_{j_-+1}=I(0,h),
\]
and the step to $S_{j_-+1}$ is expanding.  Let $E$ and $D$ be the numbers of
expanding and contracting steps, respectively, among the first $j_-+1$
steps.  Neutral steps do not change size, so
\[
  E-D=\abs{S_{j_-+1}}-\abs{S_0}=h-1.
\]
Hence this prefix contains at least $h-1$ expansions.  Since no two
expansions are consecutive, a sequence containing $h-1$ of them has at least
$2(h-1)-1$ steps.  Thus
\begin{equation}\label{eq:odd-prefix-cost}
  j_-+1\ge2h-3=n-2.
\end{equation}

\paragraph{The middle layer.}
Let $j_+>j_-+1$ be the least index for which $\abs{S_{j_+}}>h$.  Such an
index exists because $S_L=Q_n$.  Every set from $S_{j_-+1}$ through
$S_{j_+-1}$ has size $h$.  A contraction in this segment would give an
index $j>j_-$ with $\abs{S_j}=h-1$, contrary to the maximality of $j_-$.
Therefore every step before the final expansion in this segment is neutral.

Because every common divisor of $h$ and $n=2h-1$ divides $1$, one has
$\gcd(h,n)=1$.  Hence every initial index at size $h$ can be written uniquely
as $i=zh$ with $z\in\mathbb Z/n\mathbb Z$.  On neutral steps,
\cref{lem:odd-inverse-intervals} gives
\[
  b^{-1}:z\longmapsto z-1,
  \qquad
  a^{-1}:z\longmapsto z+1.
\]
The segment starts at $I(0,h)$, so its initial coordinate is $z=0$.  The next
expansion requires $z=1$.  The direct $a^{-1}$-step from $z=0$ to $z=1$ is
not neutral: by \eqref{eq:odd-a-inverse}, it is the contraction
\[
  I(0,h)a^{-1}=I(1-h,h-1).
\]
Thus a neutral walk from $0$ to $1$ cannot use the directed edge $0\to1$ of
the coordinate cycle.  To make the resulting distance bound explicit, lift
the coordinate walk to an integer walk starting at $0$, with each step equal
to $1$ or $-1$.  A lift ending at an integer congruent to $1$ modulo $n$
cannot end at $1+kn$ for $k\ge0$, since reaching any such integer would cross
an edge congruent to $0\to1$.  Its endpoint is therefore at most $1-n$, so
the walk has at least $n-1$ steps.  One additional inverse $a$-step performs
the expansion from size $h$ to size $h+1$.  Hence
\begin{equation}\label{eq:odd-middle-cost}
  j_+-(j_-+1)\ge(n-1)+1=n.
\end{equation}

\paragraph{The suffix above the middle layer.}
The expansion at the end of the middle segment gives
$S_{j_+}=I(0,h+1)$.  Going from size $h+1$ to size
$n=2h-1$ requires a net total of $h-2$ further expansions.  There are
therefore at least $h-2$ expansions in the remaining suffix.  The step that
produces $S_{j_+}$ is itself expanding, and two expanding steps cannot be
consecutive.  Each remaining expansion consequently has an immediately
preceding nonexpanding step within the suffix.  These preceding steps are
distinct, so
\begin{equation}\label{eq:odd-suffix-cost}
  L-j_+\ge2(h-2)=n-3.
\end{equation}

Adding \eqref{eq:odd-prefix-cost}, \eqref{eq:odd-middle-cost}, and
\eqref{eq:odd-suffix-cost} gives
\[
  \abs u=L\ge(n-2)+n+(n-3)=3n-5.
\]
Since $u$ is a shortest reset word, every reset word has at least this length.
\end{proof}

The common upper bound and the two parity-specific lower bounds give the
promised sharpness statement without any computational input.

\begin{corollary}[Sharpness in every order]
\label{cor:refined-bound-sharp}
For every integer $n\ge4$, the automaton $\mathcal F_n$ is strongly connected,
binary, synchronizing, and one-cluster with $m=2$ and $\ell=n-2$.  Moreover,
\[
  \rt(\mathcal F_n)=3n-5
  =(m-1)(n-1)+m\ell.
\]
\end{corollary}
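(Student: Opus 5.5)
The corollary assembles facts already established, so I will check each assertion against the earlier results. \emph{Structural assertions.} The alphabet is $\{a,b\}$, so $\mathcal F_n$ is binary. Strong connectivity was verified right after \cref{def:sharp-family}. The even case was argued explicitly there: every state reaches $1$ by powers of $a$, $1$ reaches $\{1,\dots,n-1\}$ by powers of $b$, and $1\mathbin{\cdot}a=0$. In the odd case $b$ is a cyclic permutation of all of $Q_n$, so $0$ reaches every state. The one-cluster parameters come from the description of the $a$-skeleton: the cycle $0\to1\to0$ together with the tail $n-1\to\cdots\to2\to1$. Hence the skeleton is weakly connected with unique cycle $C=\{0,1\}$, so $m=2$. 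The state $n-1$ needs exactly $n-2$ applications of $a$ to reach $1\in C$, and no state needs more, so \eqref{eq:equivalent-level-definitions} gives $\ell=n-2$. Synchronization follows from \cref{lem:sharp-common-word}, which exhibits the reset word $w_n$.

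\emph{Equality of the threshold.} The upper bound $\rt(\mathcal F_n)\le3n-5$ is \cref{lem:sharp-common-word}. For the lower bound I split by parity. For even $n\ge4$ I invoke \cref{prop:sharp-even-lower}. For odd $n$, since $n\ge4$, in fact $n\ge5$, and I invoke \cref{prop:sharp-odd-lower}. Each gives $\rt(\mathcal F_n)\ge3n-5$. Finally I check the arithmetic: $(m-1)(n-1)+m\ell=(n-1)+2(n-2)=3n-5$.

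\emph{Expected obstacle.} The only point needing care is the parity split: the odd-order proposition is stated for $n\ge5$, and the odd integers with $n\ge4$ are exactly those with $n\ge5$. Beyond that, the corollary is bookkeeping. All substantive difficulty lies in the two lower-bound propositions, which I may assume.
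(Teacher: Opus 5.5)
Your proposal is correct and follows the paper's own proof. Both cite the verification of strong connectivity and the one-cluster parameters given after \cref{def:sharp-family}, then use \cref{lem:sharp-common-word} for the upper bound and \cref{prop:sharp-even-lower} or \cref{prop:sharp-odd-lower} (according to parity, with odd $n\ge4$ meaning $n\ge5$) for the matching lower bound.
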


\begin{proof}
Strong connectivity and the one-cluster parameters were established after
\cref{def:sharp-family}.  The upper bound is
\cref{lem:sharp-common-word}.  The matching lower bound is
\cref{prop:sharp-even-lower} when $n$ is even and
\cref{prop:sharp-odd-lower} when $n$ is odd.
\end{proof}

\begin{remark}[The orientation of the long cycle]\label{rem:opposite-orientation}
For odd $n$, replace $b$ in \cref{def:sharp-family} by the inverse cycle
$q\mapsto q-1\pmod n$.  The relabelling $q\mapsto-q\pmod n$ sends the two
letter actions to those of the Wielandt-type automaton $\mathcal W(n,n,2)$.
Indeed, the inverse cycle becomes $r\mapsto r+1\pmod n$, while the other
letter becomes
\[
  0\longmapsto n-1,
  \qquad n-1\longmapsto0,
  \qquad r\longmapsto r+1\quad(1\le r\le n-2),
\]
which are the defining transitions of that automaton
~\cite[Section~2]{GusevPribavkina2015}.  Since $n$ is odd and $n\ge5$, the
integers $2$ and $n$ are coprime and $n>2$.  Therefore
\cite[Theorem~2.2]{GusevPribavkina2015} gives its reset threshold as $2n-3$.
Thus the relative orientation of
the full $n$-cycle and the two-cycle with its tail changes the reset threshold
from $2n-3$ to $3n-5$.
\end{remark}

\bibliographystyle{abbrvurl}
\bibliography{references-2026-07-21}

\end{document}